\DeclareMathAlphabet\mathbfcal{OMS}{cmsy}{b}{n}
\newtheorem{theorem}{Theorem}[section]
\newtheorem{corollary}{Corollary}[section]
\newtheorem{definition}{Definition}[section]
\newtheorem{remark}{Remark}[section]
\journal{Journal of Computational Physics}
\begin{document}
%
%
%

\newcommand{\be}{\begin{equation}}
\newcommand{\ee}{\end{equation}}

\newcount{\myi}
\newcommand{\Repeat}[2]{%
    \myi=0
    \loop
        \ifnum\myi<#2
        #1
        \advance\myi by 1
    \repeat
}

\newcommand\Sum[2]{\sum_{#1}^{#2}}



\newcommand{\ucont}{{\it u}}
\newcommand{\qcont}{{\it q}}
\newcommand{\fcont}{{\bf F}}
\newcommand{\TE}{{\it T}_e}
\newcommand{\Dmat}{\mathcal{D}}
\newcommand{\Pinv}{\mathcal{P}^{-1}}
\newcommand{\Qmat}{\mathcal{Q}}
\newcommand{\Bmat}{\mathcal{B}}

\newcommand{\xMj}{{\bar{x}}_{j}}

\newcommand{\Uc}{\hat{\mathbf{U}}}
\newcommand{\Fc}{\hat{\mathbf{F}}}
\newcommand{\Gc}{\hat{\mathbf{G}}}
\newcommand{\Hc}{\hat{\mathbf{H}}}
\newcommand{\Sc}{\hat{\mathbf{S}}}
\newcommand{\Ub}{\mathbf{U}}
\newcommand{\Fb}{\mathbf{F}}
\newcommand{\qN}{\mathbf{q}}
\newcommand{\uN}{\mathbf{u}}
\newcommand{\vN}{\mathbf{v}}
\newcommand{\wN}{\mathbf{w}}
\newcommand{\WN}{\mathbf{W}}
\newcommand{\fN}{\mathbf{f}}
\newcommand{\gN}{\mathbf{g}}
\newcommand{\SN}{\mathbf{S}}
\newcommand{\eN}{\mathbf{e}}

\newcommand{\qbh}{\bar q}
\newcommand{\DeltaH}{\bar \Delta}
\newcommand{\fbbh}{\Bar{\Bar{f}}}
\newcommand{\Fbbh}{\Bar{\Bar{F}}}
\newcommand{\fbb}{\mathbf{\Bar{\Bar{f}}}}
\newcommand{\Fbb}{\mathbf{\Bar{\Bar{F}}}}
\newcommand{\fb}{\mathbf{\overline{f}}}
\newcommand{\fM}{\mathbf{\overline{f}}}
\newcommand{\FM}{\mathbf{\overline{F}}}
\newcommand{\FN}{\mathbf{{F}}}
\newcommand{\uM}{\mathbf{\overline{u}}}
\newcommand{\xN}{\mathbf{x}}
\newcommand{\xiN}{\boldsymbol{\xi}}
\newcommand{\xiM}{\bar{\boldsymbol{\xi}}}
\newcommand{\xih}{\hat{\xi}}
\newcommand{\phiN}{\boldsymbol{\phi}}
\newcommand{\psiN}{\boldsymbol{\psi}}
\newcommand{\vthetaN}{\boldsymbol{\vartheta}}
\newcommand{\xb}{\mathbf{\bar{x}}}
\newcommand{\xM}{\mathbf{\bar{x}}}
\newcommand{\dX}{\Delta x}
\newcommand{\pinv}{\mathcal{P}^{-1}}
\newcommand{\ulpinv}{\underline{\mathcal{P}}^{-1}}
\newcommand{\ullpinv}{\underline{\underline{\mathcal{P}}}^{-1}}
\newcommand{\dmat}{\mathcal{D}}
\newcommand{\dtwo}{\mathcal{D}_{2}}
\newcommand{\bmat}{\mathcal{B}}
\newcommand{\cmat}{\mathcal{C}}
\newcommand{\chatmat}{\widehat{c}}
\newcommand{\gmat}{\mathcal{G}}
\newcommand{\pmath}{\hat{\mathcal{P}}}
\newcommand{\phinv}{\hat{\mathcal{P}}^{-1}}
\newcommand{\pmat}{\mathcal{P}}
\newcommand{\Pmat}{\mathcal{P}}
\newcommand{\pmatv}{\boldsymbol{\mathcal{P}}}
\newcommand{\qmat}{\mathcal{Q}}
\newcommand{\rmat}{\mathcal{R}}
\newcommand{\Inmat}{\mathcal{I}}
\newcommand{\Imat}{\bar{\mathcal{I}}}
\newcommand{\Jmat}{\bar{\mathcal{J}}}
\newcommand{\tmat}{\mathcal{T}}
\newcommand{\umat}{\mathcal{U}}
\newcommand{\vmat}{\mathcal{V}}
\newcommand{\wmat}{\mathcal{W}}
\newcommand{\Oh}{\mathcal{O}}
\newcommand{\rmatb}{\bar{\mathcal{R}}}
\newcommand{\smat}{\mathcal{S}}
\newcommand{\mmat}{\mathcal{M}}
\newcommand{\nmat}{\mathcal{N}}
\newcommand{\Gb}{\mathbf{G}}
\newcommand{\Hb}{\mathbf{H}}
\newcommand{\gb}{\mathbf{g}}
\newcommand{\vt}{\tidle{v}}
\newcommand{\ddt}{\frac{d}{dt}}
\newcommand{\gmog}{\frac{\gamma_0-1}{\gamma_0}}
\newcommand{\csqMi}{\frac{c}{\sqrt{M_0}}}
\newcommand{\csqM}{c \sqrt{M_0}}
\newcommand{\ltmpxi}{\sqrt{\xi_x^2 + \xi_y^2 + \xi_z^2}}
\newcommand{\ltmpeta}{\sqrt{\eta_x^2 + \eta_y^2 + \eta_z^2}}
\newcommand{\ltmpzeta}{\sqrt{\zeta_x^2 + \zeta_y^2 + \zeta_z^2}}
\newcommand{\ltmp}{\sqrt{{\kappa_i}_x^2 + {\kappa_i}_y^2 + {\kappa_i}_z^2}}
\newcommand{\xixh}{\hat{\xi}_x}
\newcommand{\xiyh}{\hat{\xi}_y}
\newcommand{\xizh}{\hat{\xi}_z}
\newcommand{\Dp}{D^{+}}
\newcommand{\Dm}{D^{-}}
\newcommand{\Fcp}{\hat{\mathbf{F}}^{+}}
\newcommand{\Gcp}{\hat{\mathbf{G}}^{+}}
\newcommand{\Hcp}{\hat{\mathbf{H}}^{+}}
\newcommand{\Fcm}{\hat{\mathbf{F}}^{-}}
\newcommand{\Gcm}{\hat{\mathbf{G}}^{-}}
\newcommand{\Hcm}{\hat{\mathbf{H}}^{-}}
\newcommand{\Fbp}{\mathbf{F}^{+}}
\newcommand{\Gbp}{\mathbf{G}^{+}}
\newcommand{\Hbp}{\mathbf{H}^{+}}
\newcommand{\Fbm}{\mathbf{F}^{-}}
\newcommand{\Gbm}{\mathbf{G}^{-}}
\newcommand{\Hbm}{\mathbf{H}^{-}}

\newcommand{\Px}[2]{\frac{\partial #1}{\partial #2}}
\newcommand{\Pxk}[3]{\frac{\partial^{#1} #2}{\partial #3^{#1}}}
\newcommand{\mbf}[1]{\mathbf{#1}}
\newcommand{\mbb}[1]{\boldsymbol{#1}}
\newcommand{\dr}[1]{\,\mathrm{d}{#1}}
\newcommand{\ull}[1]{\underline{\underline{#1}}}
\newcommand{\ul}[1]{\underline{#1}}

\newcommand{\Lh}{\hat L}
\newcommand{\Lxy}{{\bf L}(x)}
\newcommand{\Lxyp}{{\bf L}^{'}(x)}
\newcommand{\Lxymo}{{\bf L}(-1)}
\newcommand{\Lxypo}{{\bf L}(+1)}

\newcommand{\Lxe }{{\bf L}(\eta_l;x)}
\newcommand{\Lxep}{{\bf L}^{\prime}(\eta_l;x)}

\newcommand{\Yeig}{{\cal Y}}
\newcommand{\yV}{{\bf x}}
\newcommand{\etV}{{\boldsymbol \eta}}
\newcommand{\ez}{{\bf e_0}}

\newcommand{\yVa}{{\bf x}}
\newcommand{\zVa}{\Bar{\bf x}}
\newcommand{\zVF}{\Bar{\Bar{\bf x}}}

\newcommand{\fVya}{{\bf f}(\yVa)}
\newcommand{\fVza}{{\bf f}(\zVa)}
\newcommand{\dfVya}{{\bf f}'(\yVa)}
\newcommand{\dfVza}{{\bf f}'(\zVa)}

\newcommand{\Lb}{\bar L}
\newcommand{\Lxya}{{\bf L}(x;\yVa)}
\newcommand{\Lxza}{{\bf L}(x;\zVa)}

\newcommand{\IhlBa}{_m{\bar   {\cal I}}_n}
\newcommand{\IlhBa}{_n{\bar   {\cal I}}_m}
\newcommand{\IllBa}{_n{\bar   {\cal I}}_n}
\newcommand{\IhhBa}{_m{\bar   {\cal I}}_m}

\newcommand{\IhlBr}{_m{\breve {\cal I}}_n}
\newcommand{\IlhBr}{_n{\breve {\cal I}}_m}
\newcommand{\IllBr}{_n{\breve {\cal I}}_n}
\newcommand{\IhhBr}{_m{\breve {\cal I}}_m}

\newcommand{\DlBa}{{\bar   {\cal D}}_n}
\newcommand{\DlBr}{{\breve {\cal D}}_n}
\newcommand{\DhBa}{{\bar   {\cal D}}_m}
\newcommand{\DhBr}{{\breve {\cal D}}_m}

\newcommand{\Uh}{{\cal U}_m}
\newcommand{\Vh}{{\cal V}_m}
\newcommand{\Ah}{{\cal C}_m}
\newcommand{\Axh}{{\cal A_x}_m}

\newcommand{\ILtoH}{_m{\mathcal I}_n}
\newcommand{\IHtoL}{_n{\mathcal I}_m}
\newcommand{\evHO}{{\bar {\bf e}}(x_L)}
\newcommand{\evHM}{{\bar {\bf e}}(x_R)}

\newcommand{\Eh}{\bar {\mathcal E}}
\newcommand{\emat}{\mathcal{E}}

\newcommand{\fhb}{{\bar{\hat {\bf f}}}(\uu)}

\newcommand{\uL}{{u}(x_L)}
\newcommand{\uR}{{u}(x_R)}
\newcommand{\gL}{{g}(x_L)}
\newcommand{\gR}{{g}(x_R)}
\newcommand{\evO}{{\bf e}(x_L)}
\newcommand{\evM}{{\bf e}(x_R)}

\newcommand{\uu}{\bar {\bf u}}
\newcommand{\uh}{\bar {u}}
\newcommand{\ghL}{{\bar g}(x_L)}
\newcommand{\ghR}{{\bar g}(x_R)}
\newcommand{\uhL}{{\bar u}(x_L)}
\newcommand{\uhR}{{\bar u}(x_R)}
\newcommand{\UUmat}{\bar {\mathcal u}}
\newcommand{\DmatH}{\bar {\mathcal D}}
\newcommand{\PmatH}{\bar {\mathcal P}}
\newcommand{\PinvH}{{\bar {\mathcal P}}^{-1}}
\newcommand{\QmatH}{\bar {\mathcal Q}}
\newcommand{\BmatH}{\bar {\mathcal B}}
\newcommand{\DmatL}{{\mathcal D}}
\newcommand{\PmatL}{{\mathcal P}}
\newcommand{\QmatL}{{\mathcal Q}}
\newcommand{\BmatL}{{\mathcal B}}
\newcommand{\epvec}{\bf \epsilon}

\newcommand{\fVx}{{\bf f}(\xN)}
\newcommand{\dfVx}{{\bf f}'(\xN)}

\newcommand{\albar}{\bar \alpha}
\newcommand{\cii}{c_{11}}
\newcommand{\ciih}{\hat {c}_{11}}
\newcommand{\ciihm}{\hat {c}_{11}^{(-)}}
\newcommand{\ciihp}{\hat {c}_{11}^{(+)}}
\newcommand{\ciib}{\bar {c}_{11}}
\newcommand{\cijh}{\hat {c}_{ij}}
\newcommand{\cjih}{\hat {c}_{ji}}
\newcommand{\wim}{w^{(-)}_i}
\newcommand{\wip}{w^{(+)}_i}
\newcommand{\qim}{q^{(-)}_i}
\newcommand{\qip}{q^{(+)}_i}
\newcommand{\psim}{\psi^{(-)}_i}
\newcommand{\psip}{\psi^{(+)}_i}
\newcommand{\XiN}{\boldsymbol{\Theta}}
\newcommand{\Xim}{\Theta^{(-)}}
\newcommand{\Xip}{\Theta^{(+)}}

\newcommand{\fssc}{\mathbf{f}^{ssc}\left(q^{(-)}_i,q^{(+)}_i\right)}
\newcommand{\fssllf}{\mathbf{f}^{ssllf}\left(q^{(-)}_i,q^{(+)}_i\right)}

\newcommand{\fllf}{\mathbf{f}^{llf}(q^{(-)}_i,q^{(+)}_i)}
\newcommand{\fbm}{\mathbf{\overline{f}}^{(I)(-)}}
\newcommand{\fbp}{\mathbf{\overline{f}}^{(I)(+)}}
\newcommand{\fss}{\mathbf{f}^{ssr}\left(q^{(-)}_i,q^{(+)}_i\right)}
\newcommand{\fs}{\mathbf{f}^{sr}\left(q^{(-)}_i,q^{(+)}_i\right)}
\newcommand{\fssb}{\mathbf{f}^{ssr}\left(\mathbf{q}, \mathbf{g}_{E}\right)}
\newcommand{\half}{\frac{1}{2}}
\newcommand{\uave}{\bar u}
\newcommand{\psib}{\bar{\psi}}
\newcommand{\fbs}{\bar{f_{s}}}
\newcommand{\FMs}{\bar{F_{s}}}
\newcommand{\fsscb}{\mathbf{f}^{sr}}

\begin{frontmatter}



\title{Entropy stable wall boundary conditions for the three-dimensional compressible Navier--Stokes equations}


\author[NASA-LaRC-CASB]{Matteo Parsani\corref{cor1}}
\ead{matteo.parsani@nasa.gov}

\author[NASA-LaRC-CASB]{Mark H. Carpenter}
\ead{mark.h.carpenter@nasa.gov}

\author[NASA-LaRC-CASB]{Eric J. Nielsen}
\ead{eric.j.nielsen@nasa.gov}

\cortext[cor1]{Corresponding author.}
\address[NASA-LaRC-CASB]{Computational AeroSciences Branch, NASA Langley Research 
  Center (LaRC), Hampton, VA 23681, USA}

\begin{abstract}
Non-linear entropy stability and a summation-by-parts framework are 
used to derive entropy stable wall boundary conditions 
for the three-dimensional compressible Navier--Stokes equations. A semi-discrete entropy estimate for the entire 
domain is achieved when the new boundary conditions are coupled with an
entropy stable discrete interior operator.
The data at the boundary are weakly imposed using a penalty
flux approach and a simultaneous-approximation-term penalty technique.
Although discontinuous spectral collocation operators on unstructured grids are used herein for the 
purpose of demonstrating their robustness and efficacy, the new boundary conditions are 
compatible with any diagonal norm summation-by-parts spatial operator, including 
finite element, finite difference, finite volume, discontinuous Galerkin, and flux reconstruction/correction procedure via reconstruction schemes. 
The proposed boundary treatment is tested for three-dimensional subsonic
and supersonic flows. 
The numerical computations corroborate
the non-linear stability (entropy stability) and accuracy of the boundary conditions.
\end{abstract}

\begin{keyword}


Entropy \sep Entropy stability \sep Compressible Navier--Stokes equations
\sep Solid wall boundary conditions \sep SBP-SAT \sep High-order discontinuous 
methods. 
\end{keyword}

\end{frontmatter}


\section{Introduction}\label{sec:intro}

During the last twenty years, scientific computation has become 
a broadly-used technology in all fields of science and engineering due to 
a million-fold increase in 
computational power and the development of advanced 
algorithms. However, the great frontier is in the challenge posed by high-fidelity
simulations of real-world systems, that is, in truly transforming computational science into a fully
predictive science. Much of scientific computation's potential remains unexploited--in areas such as engineering design, energy assurance, material science, Earth science,
medicine, biology, security and fundamental science--because the scientific challenges are far too gigantic and complex
for the current state-of-the-art computational resources \cite{mathematical-exascale}.

In the near future, the transition from petascale to exascale systems will
provide an unprecedented opportunity to attack these global challenges using
modeling and simulation. However, exascale programming models will
require a revolutionary approach, rather than the incremental approach of
previous projects. Rapidly
changing high performance computing (HPC) architectures, which often include
multiple levels of parallelism through heterogeneous architectures, will require new paradigms 
to exploit their full potential. However, the 
complexity and diversity of issues in most of the science community are such 
that increases in computational power alone will not be enough to reach the required
goals, and new algorithms, solvers and physical models with better 
mathematical and numerical properties must continue to be developed and 
integrated into new generation supercomputer systems.

In computational fluid dynamics (CFD), next generation numerical algorithms for use in large 
eddy simulations (LES) and hybrid Reynolds-averaged Navier--Stokes (RANS)-LES 
simulations will undoubtedly rely on efficient high-order accurate formulations
(see, for instance,
\cite{FLD:FLD3767,les-dg-persson-aiaa,2010:IHS:1805352.1805674,Hartmann20114268,yano-aiaa,
burgess-mavriplis-iccfd7-2012,FLD:FLD3740,Blaise2013416,les-muffler,dg-rans-icosahom12-bassi,
SH14a,Wang201413,FLD:FLD3943,gmdd-7-4251-2014}).
Although high order techniques are well suited for smooth solutions, numerical 
instabilities may occur if the flow contains discontinuities or under-resolved 
physical features. A variety of mathematical stabilization strategies
are commonly used to cope with this problem (e.g., filtering 
\cite{Hesthaven:2007:NDG:1557392}, 
weighted essentially non-oscillatory (WENO) \cite{Zhu2013200}, artificial 
dissipation, and limiters\cite{FLD:FLD3767}),
but their use for practical complex flow applications in realistic geometries is 
still problematic.

A very promising and mathematically rigorous alternative is to focus directly 
on discrete operators that are
non-linearly stable (entropy stable) for the Navier--Stokes equations. These 
operators simultaneously conserve mass, momentum, energy, and 
enforce a secondary entropy constraint.  
This strategy begins by identifying a non-linear 
neutrally stable flux for the Euler equations. 
An appropriate amount of dissipation can then be added to achieve the 
desired smoothness of the solution. Regardless of whether dissipation is added, 
enforcing a semi-discrete entropy constraint enhances the stability of 
the base operator.

The idea of enforcing entropy stability in numerical operators is quite old 
\cite{hughes1986}, and is commonly used for low-order operators 
\cite{Tadmor2003,Tadmor1987}. An extension of these techniques to include
high-order accurate operators recently appears in references 
\cite{FisherCarpenter2013NASA,FisherCarpenter2013JCPb,CarpenterFisher2013AIAA,svard-zamp-2012}
and provides a general procedure for developing entropy conservative and entropy
stable, diagonal norm summation-by-parts (SBP) operators for the compressible
Navier--Stokes equations. The strong conservation form representation allows 
them to be readily extended to capture shocks via a comparison approach 
\cite{Tadmor2003,FisherCarpenter2013JCPb}. The generalization to multi-domain 
operators follows immediately using simultaneous-approximation-term (SAT) 
penalty type interface conditions \cite{cng2009}, whereas the extension to 
three-dimensional (3D) curvilinear coordinates is obtained by using an
appropriate coordinate transformation which satisfies the discrete 
geometric conservation law \cite{Fisher2012dissertation}.  See reference 
LeFloch and Rohde \cite{doi:10.1137/S0036142998345256} for a more comprehensive 
discussion on high order schemes and entropy inequalities.
Therein, the focus is on the approximation of under-compressive, 
regularization-sensitive, non-classical solutions of hyperbolic systems of conservation laws 
by high-order accurate, conservative, and semi-discrete finite difference
methods.

Several major hurdles 
remain, however, on the path towards complete entropy stability of the 
compressible Navier--Stokes equations including shocks. A major obstacle is 
the need for solid wall viscous boundary conditions that 
preserve the entropy stability property of the interior operator. In fact, 
practical experience indicates that numerical instability frequently originates 
at solid walls, and the interaction of shocks with these 
physical boundaries is particularly challenging for high order formulations. An 
important step towards entropy stable boundary conditions appears in the work of 
Sv\"{a}rd and {\"O}zcan \cite{svard-no-penetration-bc-JSC}. Therein, entropy 
stable boundary conditions for the compressible Euler equations are reported for the 
far-field and for the Euler no-penetration wall conditions, in the context of 
finite difference approximations. 

The focus herein is on building non-linearly stable wall boundary conditions for the 
compressible Navier--Stokes equations; primarily a task of developing stable wall boundary 
conditions for the viscous terms. At the semi-discrete level, the proposed 
boundary treatment mimics exactly the boundary contribution obtained by applying the entropy 
stability analysis to the continuous, compressible Navier--Stokes equations. Furthermore, 
the new technique enforces the Euler no-penetration wall condition as well as 
the remaining no-slip and thermal wall conditions in a weak sense. The thermal boundary condition is 
imposed by prescribing the heat entropy flow (or heat entropy transfer), which is 
the primary means for exchanging entropy between two thermodynamic 
systems connected by a solid viscous wall.
Note that the entropy
flow at a wall is a quantity that in some experiments is directly or
indirectly available (e.g., through measurements 
of the wall heat flux and temperature in some 
supersonic or hypersonic wind tunnel experiments), or can be estimated from geometrical
parameters and fluid flow conditions for the problem at hand. For
fluid-structure interaction simulations, 
(e.g., supersonic and hypersonic flow past aerospace vehicles, heat-exchangers), 
the entropy flow can be numerically computed at no additional cost
while numerically solving the coupled systems of partial differential equations of the 
continuum mechanics and fluid dynamics. 

Historically, most boundary condition analysis for the 
Navier--Stokes equations is performed at the linear level by linearizing about a known state;
a rich collection of literature is available
\cite{gustafsson-bc-book-1995,nordstrom-well-posedness-sinum-2005,svard-no-slip-jcp-2008,berg2011}.
The non-linear wall boundary conditions developed herein are fundamentally different 
from those derived using linear analysis and cannot rely on a 
complete mathematical theory. In fact, a fundamental shortcoming that 
limits further development of any entropy stable boundary conditions is the incomplete development of the 
analysis at the continuous level for proving well-posedness of the compressible 
Navier--Stokes equations.  
Nevertheless, the boundary conditions proposed herein is extremely powerful because they
provide a mechanism for ensuring the non-linear stability in the $L^2$ norm of the 
semi-discretized compressible Navier--Stokes equations. In fact, they
allow for a priori bounds on the entropy function when imposing ``solid viscous wall'' boundary conditions.
The new boundary conditions are easy to implement and 
compatible with any diagonal norm SBP spatial operator, including 
finite element (FE), finite volume (FV), finite difference (FD) schemes
and the more recent class of high-order accurate methods which include the 
large family of discontinuous Galerkin (DG) 
discretizations \cite{dg-survey-shu-2013} and flux reconstruction (FR) schemes \cite{FR-jameson-2011}.

The robustness and accuracy of the complete semi-discrete operator 
(i.e., the entropy stable interior operator coupled with the new boundary 
conditions) is demonstrated by computing subsonic and supersonic flows past a
3D square cylinder
without any stabilization technique (e.g., artificial dissipation, 
filtering, limiters, de-aliasing, etc.), a feat
unattainable with several alternative approaches to wall boundary conditions 
based on linear analysis. 
In fact, instabilities are observed when wall boundary conditions designed with
linear analysis are used in combination with highly 
clustered grids and/or high order polynomials, or very coarse grids near
solid walls, which yield unresolved physical flow features.

The paper is organized as follows. In Section \ref{sec:cnse}, the 
compressible Navier--Stokes equations, their entropy function and symmetrized 
form are introduced. Section \ref{sec:well-posedness} presents the
entropy analysis of the viscous wall boundary conditions at the 
continuous level. Section \ref{sec:inviscid-flux-condition} provides a
discussion of the inviscid flux condition which allows the construction of high-order
accurate entropy conservative and entropy stable fluxes at the semi-discrete
level, for the interior operator. The new entropy stable wall boundary conditions and their non-linear
stability  (entropy stability) analysis are presented in Section \ref{sec:ss-no-slip-bc}. Finally, 
the accuracy and high level robustness of the proposed approach in combination with a 
discontinuous high-order accurate entropy stable interior operator are demonstrated
in Section \ref{sec:validation}. Conclusions are discussed in Section 
\ref{sec:conclusions}.


\section{The compressible Navier--Stokes equations}\label{sec:cnse}
Consider a fluid in a domain $\Omega$ with boundary surface denoted by 
$\partial\Omega$, without radiation and external volume forces. In this context,
the compressible Navier--Stokes equations, equipped with suitable boundary and
initial conditions, may be expressed in the form 
\begin{equation}
\label{eq:cnse} 
\begin{aligned}
  & \frac{\partial q}{\partial t} + \frac{\partial f^{(I)}_i}{\partial x_i} =
  \frac{\partial f^{(V)}_i}{\partial x_i}, \quad x \in \Omega, \quad t \in 
 [0,\infty), \\
   & q|_{\partial\Omega} = g^{(B)}(x,t),  \quad x \in 
   \partial\Omega, \quad t \in 
 [0,\infty),   \\
   & q(x,0) = g^{(0)}(x), \quad x \in \Omega,
\end{aligned}
\end{equation}
where the Cartesian coordinates, $x = \left(x_1,x_2,x_3\right)^\top$, and time, 
$t$, are the
independent variables. Note that in \eqref{eq:cnse} Einstein notation is used. 
The vectors $q(x,t)$, $f^{(I)}_i = f^{(I)}_i(q)$, and 
$f^{(V)}_i= f^{(V)}_i(q,\nabla q)$ are the 
conserved variables, and the inviscid and viscous fluxes in the $i$ direction, 
respectively.\footnote{The symbol $\nabla q$ denotes the gradient of the 
conservative variables.} Both boundary data, $g^{(B)}$, and initial data,
$g^{(0)}$,
are assumed to be bounded, $L^2 \cap L^{\infty}$. Furthermore, $g^{(B)}$ is
also assumed to contain (linearly) well-posed Dirichlet and/or Neumann and/or Robin data. 
The vector of the conservative variables is
\begin{equation} \label{eq:cons-vec}
  q = \left(\rho,\rho u_1,\rho u_2,\rho u_3,\rho E\right)^\top,
\end{equation}
where $\rho$ denotes the density, $u=\left(u_1,u_2,u_3\right)^\top$ is the 
velocity vector, and $E$ is the specific total energy, which is the sum of the
specific internal energy, $e$, (defined later) and 
the kinetic energy, $\frac{1}{2}u_j u_j$. The convective fluxes are
\begin{equation}\label{eq:inv-flux}
  f^{(I)}_i = \left(\rho u_i, \rho u_i u_1 + \delta_{i1}p,\rho u_i u_2 + \delta_{i2}p,
  \rho u_i u_3 + \delta_{i3}p, \rho u_i H \right)^\top,
\end{equation}
where $p$, $H$ and $\delta_{ij}$ are the pressure, the specific total enthalpy
and the Kronecker delta, respectively. The viscous fluxes are
\begin{equation}\label{eq:visc-flux}
  f^{(V)}_i = \left(0, \tau_{i1}, \tau_{i2}, \tau_{i3},
  \tau_{ji}u_j-\mathtt{q}_i\right)^\top,
\end{equation}
where the shear stress tensor, $\tau_{ij}$, assuming a zero bulk viscosity, is defined as
\begin{equation}\label{eq:shear-stress}
  \tau_{ij} = \mu \left(\frac{\partial u_i}{\partial x_j} + \frac{\partial
    u_j}{\partial x_i} -
  \delta_{ij} \, \frac{2}{3}\frac{\partial u_k}{\partial x_k}\right),
\end{equation}
and the heat flux, according to the Fourier heat conduction law, is 
\begin{equation}\label{eq:heat-flux}
  \mathtt{q}_i = -\kappa \, \frac{\partial T}{\partial x_i}.
\end{equation}
The symbols $T$, $\mu=\mu\left(T\right)$ and $\kappa=\kappa\left(T\right)$  
which appear in \eqref{eq:shear-stress} and \eqref{eq:heat-flux} denote the 
temperature, the dynamic viscosity and the thermal conductivity, respectively.
The viscous fluxes in \eqref{eq:visc-flux} can also be
expressed as 
\begin{equation}\label{eq:visc-flux-cij}
  f^{(V)}_i = c_{ij} \, \frac{\partial q}{\partial x_j} = c_{ij}^{\prime} 
  \frac{\partial v}{\partial x_j},
\end{equation}
where $c_{ij}$ and $c_{ij}^{\prime}$ are variable five-by-five
matrices\footnote{The coefficient matrices $c_{ij}$ and $c_{ij}^{\prime}$
depends on the solution variables.} and
$v=\left(\rho,u_1,u_2,u_3,T\right)^{\top}$ is the vector of the primitive 
variables. The functional form of the 
matrices $c_{ij}^{\prime}$ is given in \ref{app:cij}. As we will show
later, relation \eqref{eq:visc-flux-cij} is 
a very convenient form for the entropy stability analysis. 

The constitutive relations for a calorically perfect gas are
\begin{equation} \label{eq:specific-total-energy}
  e = c_v \, T, \quad  h = H-\frac{1}{2}u_j \, u_j = c_p \, T,
\end{equation}
where the symbols $c_v$ and $c_p$ denote the specific heat capacity 
at constant volume and constant pressure, respectively, and 
\begin{equation}
 p = \rho R \, T, \quad R = \frac{R_u}{MW},
\end{equation}
where  $R_u$ is the universal gas constant and $MW$ is the molecular weight of
the gas. The speed of sound for a perfect gas is
\begin{equation}
 c = \sqrt{\gamma R \, T}, \quad \gamma = \frac{c_p}{c_p-R} .
\end{equation}
In the entropy analysis that will follow, the definition of the thermodynamic 
entropy is the explicit form,
\begin{equation}\label{eq:math-entropy}
  s = \frac{R}{\gamma-1}\log\left(\frac{T}{T_{\infty}}\right) 
  - R \log\left(\frac{\rho}{\rho_{\infty}}\right),
\end{equation}
where $T_{\infty}$ and $\rho_{\infty}$ are the reference temperature and density, 
respectively.


Note that in practical situations, most simulations are performed 
in computational space, that is, by transforming all grid elements in physical 
space to standard elements in computational space 
via a smooth mapping. However, to keep the notation as simple as possible,
a uniform Cartesian 
grid is considered in the derivations presented herein. The extension to generalized 
curvilinear coordinates and unstructured grids follows
immediately if the transformation from computational to physical space
preserves the semi-discrete geometric conservation (GCL) \cite{Fisher2012dissertation}.


\subsection{Entropy function and entropy variables of the compressible 
  Navier--Stokes equations} \label{sec:continuousentropyanalysis}

In the linear hyperbolic framework, $L^2$ stability is sought as a discrete 
analogue for a priori energy estimates available in the differential formulation
(e.g., Richtmyer and Morton \cite{richtmyer-ivp-book-1967} and
Gustafsson, Kreiss and Oliger \cite{gustafsson-bc-book-1995}; 
Kreiss and Lorenz \cite{kreiss-stability-actan-1998}). In the context of
non-linear problems dominated by non-linear convection, we seek entropy stability
as a discrete analogue for the corresponding statement in the differential 
formulation \cite{Tadmor2003}. Moreover, for systems of conservation laws, stability with respect to a 
mathematical 
entropy function is considered as an admissibility criterion for selecting 
physically relevant weak solutions. In fact, the entropy condition plays a 
decisive role in the theory and numerics of such problems as shown, for
instance, 
by Lax \cite{lax-siam-conference-1972} and Smoller 
\cite{smoller-entropy-book-1994}.

Harten \cite{harten1983} and Tadmor \cite{tadmor1984} showed that systems of conservation 
laws are symmetrizable if and only 
if they are equipped with a convex mathematical entropy function. Given a set of conservation 
variables $q(x,t)$, the entropy variables which symmetrize the system are 
defined as the derivatives of the mathematical entropy function with
respect to $q(x,t)$. Hughes and co-authors \cite{hughes1986} extended these 
ideas to the compressible Navier--Stokes equations \eqref{eq:cnse}. Therein, it 
is shown that
the mathematical entropy must be an affine function of the physical (or thermodynamic) 
entropy 
function and that semi-discrete solutions obtained from a weighted residual 
formulation based on entropy variables will respect the second law of
thermodynamics. Hence, it is again found that the entropy function and the entropy variables are critical 
ingredients in the design of numerical schemes exhibiting non-linear stability.

\begin{definition}
\label{def:entropyfunction}
A scalar function $S=S(q)$ is an entropy function of Equation 
\eqref{eq:cnse} if it satisfies the following conditions:
\begin{itemize}
\item 
The function $S(q)$ when differentiated with respect to the conservative
variables (i.e., $\partial S/\partial q$) 
simultaneously contracts all the inviscid spatial fluxes as follows
\begin{equation}\label{eq:entropymagic}
\begin{aligned}
  \frac{\partial S}{\partial q} \, \frac{\partial f^{(I)}_i}{\partial x_i} \:=\: \frac{\partial S}{\partial q} \frac{\partial
  f^{(I)}_i}{\partial q} \, \frac{\partial q}{\partial x_i} 
  \:=\: \frac{\partial F_i}{\partial q} \, \frac{\partial q}{\partial x_i} \:=\:
  \frac{\partial F_i}{\partial x_i}, \quad  i = 1,2,3.
\end{aligned}
\end{equation}
The components of the contracting vector, $\partial S/\partial q$, are the entropy
variables denoted as $w^{\top}\:=\:\partial S/\partial q$. $F_i(q), \, i=1,2,3,$ are the 
entropy fluxes in the three Cartesian directions.
\item
The new entropy variables, $w$, symmetrize Equation \eqref{eq:cnse}: 
\begin{equation}
\label{eq:entropyChainRule}
\begin{aligned}
  \frac{\partial q}{\partial t} + \frac{\partial f^{(I)}_i}{\partial x_i} - 
  \frac{\partial f^{(V)}_i}{\partial x_i}  
  \:=\: \frac{\partial q}{\partial w} \, \frac{\partial w}{\partial t} + 
  \frac{\partial f^{(I)}_i}{\partial w} \, \frac{\partial w}{\partial x_i}  
  - \frac{\partial}{\partial x_i}\left(\widehat{c}_{ij} \, \frac{\partial w}{\partial x_j}
  \right)  = 0,  \quad i=1,2,3
\end{aligned}
\end{equation}
with the symmetry conditions:  
$\partial q/\partial w = \left(\partial q/ \partial w\right)^{\top}$, 
$\partial f^{(I)}_i/\partial w = \left(\partial f^{(I)}_i/\partial w\right)^{\top}$  and 
$\widehat{c}_{ij} = {\widehat{c}_{ij}}^{\top}$,
with the matrices $\,\widehat{c}_{ij}$ included in \ref{app:cij}.
\end{itemize}
The mathematical entropy is convex, meaning that the Hessian, 
$\partial^2 S/\partial q^2\:=\:\partial w/\partial q$, is symmetric positive 
definite (SPD),
\begin{equation}
  \quad \zeta^T \, \frac{\partial^2 S}{\partial q^2} \, \zeta >0, \quad \forall \zeta \neq 0,
\end{equation}
and yields a one-to-one mapping from conservation variables, $q$, to
entropy variables, $w^{\top}\:=\:\partial S/\partial q$. Likewise, 
$\partial w/ \partial q$ is SPD because 
$\partial q /\partial w = \left({\partial w/\partial q}\right)^{-1}$ and SPD matrices are 
invertible. The entropy and corresponding entropy flux are often denoted 
as entropy--entropy flux pair, $(S,F)$, \cite{Tadmor2003}. If the entropy 
function $S(q)$ is convex, a bound on its global
estimate can be converted into an a priori estimate on the solution of \eqref{eq:cnse}
in suitable $L^p$ space \cite{dafermos-book-2010}. 
\end{definition}

The symmetry of the matrices $\partial q /\partial w$, 
$\partial f^{(I)}_i/\partial w$ 
indicates that the conservative variables, $q$,
and the inviscid fluxes, $f^{(I)}_i$, are Jacobians of scalar functions with respect to the 
entropy variables,
\begin{equation}
  q^{\top} = \frac{\partial \varphi}{\partial w}, \quad 
  \left(f^{(I)}_i\right)^{\top} = \frac{\partial \psi_i}{\partial w},
\end{equation}
where the non-linear function $\varphi$ is called the potential
and $\psi_i, \, i=1,2,3$, are called the potential fluxes \cite{Tadmor2003}. 
Just as the entropy function is convex with respect to
the conservative variables ($\partial^2 S/\partial q^2$ is SPD), 
the potential function is convex with respect to the entropy variables.

Godunov \cite{Godunov1961} proved that (see reference
\cite{harten1983} for a detailed summary of the proof):
\begin{theorem}
If Equation \eqref{eq:cnse} can be symmetrized by introducing new variables 
$w$, and $q$ is a convex function of $\varphi$ (the so-called potential), then 
an entropy function $S(q)$ is given by 
\begin{equation}
\label{eq:GodSpotential}
\varphi = w^{\top} q - S, 
\end{equation}
and the entropy fluxes satisfy
\begin{equation}
\label{eq:GodFpotential}
\psi_i = w^{\top} f^{(I)}_i - F_i, 
\end{equation}
where $\psi_i, i=1,2,3,$, are the so-called potential fluxes. The potential 
and the corresponding potential flux are usually denoted as potential--potential 
flux pair, $(\varphi,\psi)$, \cite{Tadmor2003}.
\end{theorem}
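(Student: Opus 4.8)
The plan is to establish the two identities \eqref{eq:GodSpotential}--\eqref{eq:GodFpotential} by exploiting the symmetrization hypothesis together with the definition of the entropy variables $w^{\top} = \partial S/\partial q$. First I would recall that, by assumption, the system \eqref{eq:cnse} is symmetrized by $w$, so in particular $\partial q/\partial w$ is symmetric. This is precisely the integrability condition guaranteeing the existence of a scalar potential $\varphi(w)$ with $q^{\top} = \partial\varphi/\partial w$; likewise the symmetry of $\partial f^{(I)}_i/\partial w$ yields scalar potential fluxes $\psi_i(w)$ with $\bigl(f^{(I)}_i\bigr)^{\top} = \partial\psi_i/\partial w$. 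These existence statements are already recorded in the excerpt, so I may take them as given.

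Next I would define the candidate entropy function through the Legendre-type transform $\varphi = w^{\top} q - S$, i.e. set $S := w^{\top} q - \varphi$, regarding everything as a function of $w$. Differentiating with respect to $w$ and using $q^{\top} = \partial\varphi/\partial w$ gives
\begin{equation}
\frac{\partial S}{\partial w} = q^{\top} + w^{\top}\frac{\partial q}{\partial w} - \frac{\partial \varphi}{\partial w} = w^{\top}\frac{\partial q}{\partial w}.
\end{equation}
Multiplying on the right by $\partial w/\partial q = (\partial q/\partial w)^{-1}$ (which exists and is SPD by convexity) recovers $\partial S/\partial q = w^{\top}$, confirming that $S$ so defined indeed has $w$ as its entropy variables. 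Convexity of $S$ in $q$ is then inherited from convexity of $q$ as a function of $\varphi$ (equivalently, $\partial w/\partial q = (\partial q/\partial w)^{-1}$ is SPD because $\partial q/\partial w$ is). The analogous computation for $F_i := w^{\top} f^{(I)}_i - \psi_i$ gives $\partial F_i/\partial w = w^{\top}\,\partial f^{(I)}_i/\partial w$, hence $\partial F_i/\partial q = w^{\top}\,\partial f^{(I)}_i/\partial q = (\partial S/\partial q)\,\partial f^{(I)}_i/\partial q$, which is exactly the contraction property \eqref{eq:entropymagic} required of an entropy--entropy flux pair. Thus $(S,F)$ with these definitions satisfies all the conditions of Definition \ref{def:entropyfunction}.

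The main subtlety — the step I would treat most carefully — is the logical direction: the hypothesis \emph{assumes} symmetrizability and convexity of $q$ in $\varphi$, and one must check that the $S$ reconstructed from $\varphi$ via the Legendre transform is consistent with, rather than merely compatible with, the entropy variables used to symmetrize the system in the first place. Concretely, one must verify that the $w$ appearing in $\varphi = w^{\top}q - S$ is the gradient $\partial S/\partial q$ of the reconstructed $S$, which is the content of the computation above; the Legendre duality between $(q,\varphi)$ and $(w,S)$ makes this work, but it hinges on $\partial q/\partial w$ being invertible, i.e. on the nondegeneracy/convexity assumption. Everything else is routine differentiation and the observation that the symmetry conditions in Definition \ref{def:entropyfunction} are exactly the mixed-partial conditions needed for the potentials to exist. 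I would also remark, following Harten, that the viscous term plays no role here: the result concerns only the inviscid fluxes, and the symmetry $\widehat{c}_{ij} = \widehat{c}_{ij}^{\top}$ is a separate structural fact not needed for \eqref{eq:GodSpotential}--\eqref{eq:GodFpotential}.
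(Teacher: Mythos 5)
Your proposal is correct: the Legendre-duality computation showing $\partial S/\partial q = w^{\top}$ and $\partial F_i/\partial q = w^{\top}\,\partial f^{(I)}_i/\partial q$ from $q^{\top}=\partial\varphi/\partial w$ and $\bigl(f^{(I)}_i\bigr)^{\top}=\partial\psi_i/\partial w$ is exactly the classical argument of Godunov as summarized by Harten, which is what the paper defers to by citation rather than proving in the text. Your explicit attention to the logical direction (that the reconstructed $S$ has $w$ as its gradient, hinging on invertibility of $\partial q/\partial w$) is the right place to be careful, and matches the cited treatment.
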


In the specific case of the compressible Navier--Stokes equations, the 
entropy--entropy flux pair is
\begin{equation}
\label{eq:nseentropy}
 S = -\rho \, s, \quad F_i = -\rho \, u_i \, s, \quad i =1,2,3,
\end{equation}
and the potential--potential flux pair is
\begin{equation}
 \varphi = \rho \, R, \quad \psi_i = \rho \, u_i \, R, \quad i=1,2,3.
\end{equation}
Note that the mathematical entropy has the opposite sign
of the thermodynamic entropy. To avoid confusion, herein entropy
refers to the mathematical entropy unless otherwise noted. 
The entropy variables using the pair in \eqref{eq:nseentropy} result in
\begin{equation}\label{eq:entropy-variables-cnse}
  w = \left(\frac{\partial S}{\partial q}\right)^{\top} = \left( \frac{h}{T} - s - \frac{u_i u_i}{2T}, \frac{u_1}{T},
  \frac{u_2}{T},\frac{u_3}{T},-\frac{1}{T}\right)^{\top}.
\end{equation}
A sufficient condition to ensure the convexity of the entropy function $S(q)$
(and, hence, a one-to-one mapping between the entropy variables
\eqref{eq:entropy-variables-cnse} and
the conservative variables  \eqref{eq:cons-vec}) is that
$\rho, T>0$ (for the proof see, for instance, Appendix B.1
in \cite{FisherCarpenter2013JCPb}). Expressly:
\[ \zeta^T \, \frac{\partial^2 S}{\partial q^2} \, \zeta^T > 0, \quad \forall \zeta \neq 0, \quad \rho,T > 0. \]
This restriction on density and temperature weakens the entropy proof, making 
it less than full measure of non-linear stability. Another
mechanism must be employed to bound $\rho$ and $T$ away from zero to guarantee 
stability and positivity;  positivity preservation will not be considered herein.
 
\begin{remark}
The semi-discrete entropy stability
does not necessarily lead to fully discrete stability as is usually the case
for linear partial differential equations (PDEs). However, as noted by Tadmor
\cite{Tadmor2003}, entropy stability is enhanced by fully implicit time discretization.
For example, the fully implicit backward Euler time discretization is unconditionally
entropy-stable and is responsible for additional entropy dissipation.
In contrast, explicit time discretization, leads to
entropy production. Thus, the entropy stability of explicit schemes hinges
on a delicate balance between temporal entropy production and spatial entropy
dissipation. For example, the fully explicit Euler time discretization does not
conserve entropy except in the case of linear fluxes
\cite{doi:10.1137/S0036142998345256}. Consequently, both the
fully explicit and fully implicit Euler differencing do not respect (non-linear) 
entropy conservation, independent of the spatial discretization. Fully
discrete entropy conservation is offered, for instance, by Crank--Nicolson time differencing \cite{Tadmor2003}.
\end{remark}


\section{No-slip boundary conditions: Continuous analysis} \label{sec:well-posedness}

The problem of well-posed boundary conditions is an essential question in many
area of physics. For the two- (2D) and three-dimensional (3D) Navier--Stokes equations,
the number of boundary conditions implying well-posedness can be obtained using
the Laplace transform technique \cite{gustafsson-bc-book-1995}; a complicated
procedure for system of PDEs like the compressible Navier--Stokes equations. Nordstr{\"o}m and Sv{\"a}rd 
\cite{nordstrom-well-posedness-sinum-2005} proposed an alternative semi-discrete approach
to arrive at the number and type of boundary
conditions for a general time-dependent system of PDEs. This analysis was 
applied to the 3D compressible Navier--Stokes equations for different flow regimes
and the case of the Euler no-penetration velocity condition. 

In 2008, Sv{\"a}rd and Nordstr{\"o}m 
\cite{svard-no-slip-jcp-2008} showed that the
no-slip boundary conditions together with a boundary condition on the
temperature imply stability for the linearized compressible Navier--Stokes equations. 
Their result, can also be generalized to assess the stability of 
the non-linear problem for smooth solutions as indicated in 
\cite{kreiss-bc-book-1989,gustafsson-bc-book-1995} and references therein. In
2011, Berg
and Nordstr{\"o}m \cite{berg2011} proved that the
no-slip boundary conditions together with a thermal Robin boundary condition
also imply stability for the same linearized equations.
In this section we address the non-linear stability (entropy stability) of the wall boundary conditions for 
the (non-linear) compressible Navier--Stokes equations given in
\eqref{eq:cnse}. The aim is to derive a sharp a priori bound for the entropy function which
holds for smooth and non-smooth solutions. 

Contracting the system of equations \eqref{eq:cnse} with the 
entropy variables and using the relations given in \eqref{eq:entropymagic} and 
\eqref{eq:entropyChainRule} results in the differential form of the (scalar) entropy 
equation:
\begin{equation}\label{eq:entropyequation1}
\begin{aligned}
  \frac{\partial S}{\partial q} \frac{\partial q}{\partial t} + 
  \frac{\partial S}{\partial q} \frac{\partial f^{(I)}_i}{\partial x_i}  = 
  \frac{\partial S}{\partial t} + \frac{\partial F_i}{\partial x_i} 
  =
  \frac{\partial S}{\partial q} \frac{\partial f^{(V)}_i}{\partial x_i} 
  & = \frac{\partial}{\partial x_i}\left(w^{\top} f^{(V)}_i\right) - 
  \left(\frac{\partial w}{\partial x_i}\right)^{\top} f^{(V)}_i \\
  & = \frac{\partial}{\partial x_i}\left(w^{\top} f^{(V)}_i\right) -
  \left(\frac{\partial w}{\partial x_i}\right)^{\top} \widehat{c}_{ij} \, 
  \frac{\partial w}{\partial x_j}.
\end{aligned}
\end{equation}
Integrating Equation \eqref{eq:entropyequation1} over the domain yields a global 
conservation statement for the entropy in the domain,
\begin{equation}
 \label{eq:continuousentropyestimate-integration}
 \ddt \int_{\Omega} S \dr{{\mathbf{x}}} 
\:=\: \left[w^{\top} f^{(V)}_i - F_i \right]_{\partial \Omega}
 - \int_{\Omega} \left(\frac{\partial w}{\partial x_i}\right)^{\top} \widehat{c}_{ij} \, \frac{\partial w}{\partial x_j} \, \dr{{\bf x}}
\:=\: \left[w^{\top} f^{(V)}_i - F_i \right]_{\partial \Omega} \:-\: DT,
\end{equation}
where $DT$ is the viscous dissipation term,
\begin{equation}\label{eq:continuous-dissipation}
DT =\bigintsss_{\Omega} \left(\frac{\partial w}{\partial x_i}\right)^{\top} \widehat{c}_{ij} \, \frac{\partial w}{\partial x_j} \, \dr{{\bf x}}
= \bigintsss_{\Omega} 
\begin{pmatrix}
  \partial w/\partial x_1 \\ \partial w/\partial x_2  \\ \partial w/\partial x_3
\end{pmatrix}^{\top}
\begin{pmatrix}
  \widehat{c}_{11} & \widehat{c}_{12} & \widehat{c}_{13} \\
  \widehat{c}_{21} & \widehat{c}_{22} & \widehat{c}_{23} \\
  \widehat{c}_{31} & \widehat{c}_{32} & \widehat{c}_{33} \\
\end{pmatrix}
\begin{pmatrix}
  \partial w/\partial x_1 \\ \partial w/\partial x_2  \\ \partial w/\partial x_3
\end{pmatrix} \dr{{\bf x}}.
\end{equation}
This equation indicates that the entropy 
can only increase in the domain based on
data that convects and diffuses through the boundaries.
It is shown in \cite{hughes1986,FisherCarpenter2013JCPb} that the larger $15 \times 15$ 
coefficient matrix 
in \eqref{eq:continuous-dissipation} 
 is positive semi-definite, which makes the term $-DT$ in \eqref{eq:continuousentropyestimate-integration}
strictly dissipative. Therefore, the sign of the entropy 
change due to viscous dissipation is always negative.\footnote{From a physical
point of view, this means that the viscous dissipation always yields an
increase of the thermodynamic entropy.}


To simplify, we let the domain of interest, $\Omega$, be the unit cube 
$0\leq x_1,x_2,x_3 \leq 1$. Expanding the Einstein notation in equation 
\eqref{eq:continuousentropyestimate-integration} yields  
\begin{equation}\label{eq:continuousentropyestimate}
\begin{aligned}
  & \ddt \int_{\Omega} S \, dx_1 \, dx_2 \, dx_3 =  -DT  \\ 
  & \:+\: \int_{x_1=0}  \left[+F_1 - \, w^{\top}  
  \left(\widehat{c}_{11} \frac{\partial w}{\partial x_1} 
  +     \widehat{c}_{12} \frac{\partial w}{\partial x_2} +    
        \widehat{c}_{13} \frac{\partial w}{\partial x_3} \right)
                                                         \right] dx_2 \, dx_3 \\
  & \:+\: \int_{x_1=1}  \left[-F_1 +  \, w^{\top} 
  \left(\widehat{c}_{11} \frac{\partial w}{\partial x_1} 
  +     \widehat{c}_{12} \frac{\partial w}{\partial x_2} +    
        \widehat{c}_{13} \frac{\partial w}{\partial x_3} \right)
                                                         \right] dx_2 \, dx_3 \\
  & \:+\: \int_{x_2=0}  \left[+F_2 - \, w^{\top} 
  \left(\widehat{c}_{21} \frac{\partial w}{\partial x_1} 
  +     \widehat{c}_{22} \frac{\partial w}{\partial x_2} +    
        \widehat{c}_{23} \frac{\partial w}{\partial x_3} \right)
                                                         \right] dx_1 \, dx_3 \\
  & \:+\: \int_{x_2=1} \left[-F_2 + \, w^{\top} 
  \left(\widehat{c}_{21} \frac{\partial w}{\partial x_1} 
  +     \widehat{c}_{22} \frac{\partial w}{\partial x_2} +    
        \widehat{c}_{23} \frac{\partial w}{\partial x_3} \right)
                                                         \right] dx_1 \, dx_3 \\
  & \:+\: \int_{x_3=0}  \left[+F_3 - \, w^{\top} 
  \left(\widehat{c}_{31} \frac{\partial w}{\partial x_1} 
  +     \widehat{c}_{32} \frac{\partial w}{\partial x_2} +    
        \widehat{c}_{33} \frac{\partial w}{\partial x_3} \right)
                                                         \right] dx_1 \, dx_2 \\
  & \:+\: \int_{x_3=1} \left[-F_3 + \, w^{\top} 
  \left(\widehat{c}_{31} \frac{\partial w}{\partial x_1} 
  +     \widehat{c}_{32} \frac{\partial w}{\partial x_2} +    
        \widehat{c}_{33} \frac{\partial w}{\partial x_3} \right)
                                                         \right] dx_1 \, dx_2 \, .
\end{aligned}
\end{equation}
%
Consider the case of a wall placed at $x_1=0$, and assume that all the other boundaries terms are
entropy stable;  their contributions are neglected without loosing generality. 
Therefore, estimate \eqref{eq:continuousentropyestimate} reduces to
\begin{equation}\label{eq:continuousentropyestimate-1wall}
\begin{aligned}
  & \ddt \int_{\Omega} S \, dx_1 \, dx_2 \, dx_3 =  - DT  \\ 
  & \:+\: \int_{x_1=0}  \left[F_1 - \, w^{\top}  
  \left(\widehat{c}_{11} \frac{\partial w}{\partial x_1} 
  +     \widehat{c}_{12} \frac{\partial w}{\partial x_2} +    
        \widehat{c}_{13} \frac{\partial w}{\partial x_3} \right)
                                                         \right] dx_2 \, dx_3 \,.
\end{aligned}
\end{equation}
To bound the time derivative of the entropy, the right-hand-side (RHS) of Equation
\eqref{eq:continuousentropyestimate-1wall} requires boundary data. 
For a solid viscous wall, assuming linear analysis, 
four independent boundary conditions must be imposed 
 \cite{kreiss-bc-book-1989}.\footnote{Using the linear analysis, it can be shown
 that a
 solid viscous wall 
 behaves like a subsonic outflow 
\cite{nordstrom-well-posedness-sinum-2005}.} Three boundary conditions are the so-called 
no-slip boundary conditions, $u_1 = u_2 = u_3 = 0$ (i.e., the velocity vector
relative to the wall is the zero vector). In the linear settings (see, for instance, 
\cite{berg2011,svard-no-slip-jcp-2008}), the fourth condition is the 
gradient of the temperature normal to the wall, 
$(\partial T / \partial n)_{wall}$, (Neumann boundary condition; e.g., the adiabatic wall), 
or the temperature of the wall, $T_{wall}$, (the Dirichlet or isothermal wall boundary condition), 
or a mixture of Dirichlet and Neumann conditions (the Robin boundary condition). 
These four boundary conditions lead to energy stability 
(linear stability); see, for instance, \cite{svard-no-slip-jcp-2008,berg2011}. 
In the remainder of this section, we will show the type and the form of the wall boundary
conditions that have to be imposed to bound estimate \eqref{eq:continuousentropyestimate-1wall} and, hence, to attain entropy stability.

Consider the inviscid contribution to the boundary terms in 
\eqref{eq:continuousentropyestimate-1wall}. 
\begin{theorem}\label{th:continous-inviscid-bound}
The no-slip boundary conditions $u_1 = u_2 = u_3 = 0$ bound the inviscid
contribution to the time derivative of the entropy in equation
\eqref{eq:continuousentropyestimate-1wall}.
\end{theorem}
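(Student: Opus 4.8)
The plan is to show that the inviscid part of the wall term in \eqref{eq:continuousentropyestimate-1wall}, namely $\int_{x_1=0} F_1 \, \dr{x_2}\,\dr{x_3}$, simply vanishes once the no-slip data is enforced, so that ``bounding'' it is immediate. First I would recall the closed form of the entropy--entropy flux pair for the compressible Navier--Stokes equations from \eqref{eq:nseentropy}, which gives the only inviscid contribution on the face $x_1 = 0$ as $F_1 = -\rho\, u_1\, s$. Then I would substitute the no-slip conditions $u_1 = u_2 = u_3 = 0$; in particular the no-penetration component $u_1 = 0$ yields $F_1\big|_{x_1=0} = -\rho\,(0)\,s = 0$ pointwise on the wall, and therefore $\int_{x_1=0} F_1 \, \dr{x_2}\,\dr{x_3} = 0$.

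Next I would check that this pointwise vanishing is genuine rather than an indeterminate $0\cdot\infty$. The hypotheses underlying the entropy estimate require $\rho, T > 0$ (so that the mapping $q\mapsto w$ is well defined and $S$ is convex), and since $g^{(B)},g^{(0)}\in L^2\cap L^\infty$ the density and temperature are also bounded above on $\partial\Omega$; hence $s$ in \eqref{eq:math-entropy} is finite at the wall and the product $\rho\,u_1\,s$ is truly zero there. Consequently the inviscid contribution to $\ddt \int_\Omega S\,\dr{\mathbf{x}}$ through the wall at $x_1=0$ is not merely bounded but identically zero, and \eqref{eq:continuousentropyestimate-1wall} reduces to $-DT$ plus the purely viscous boundary integral $-\int_{x_1=0} w^\top\bigl(\widehat{c}_{11}\partial_{x_1}w + \widehat{c}_{12}\partial_{x_2}w + \widehat{c}_{13}\partial_{x_3}w\bigr)\,\dr{x_2}\,\dr{x_3}$.

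I would close with the observation that only the single scalar condition $u_1 = 0$ is actually used in this step; the tangential no-slip conditions $u_2 = u_3 = 0$ are not needed to control the inviscid flux (they will instead be required in the subsequent treatment of the viscous boundary term). There is no substantive obstacle here: the argument is a one-line substitution into the explicit entropy flux. The only point deserving emphasis — and the step I would state carefully — is the admissibility of the boundary data, i.e.\ that $\rho$ and $T$ stay positive and finite at the wall, which is precisely what legitimizes both the vanishing of $F_1$ and the reduction of the entropy balance to its viscous boundary part.
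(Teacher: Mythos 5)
Your proof is correct, and it reaches the conclusion $F_1|_{x_1=0}=0$ by a slightly different route than the paper. The paper invokes the Godunov potential--flux relation \eqref{eq:GodFpotential}, writing $F_1 = w^{\top} f^{(I)}_1 - \psi_1$ with $\psi_1 = \rho u_1 R$, and then substitutes the wall data into $f^{(I)}_1$ and $\psi_1$ separately; you instead use the explicit entropy flux $F_1 = -\rho u_1 s$ from \eqref{eq:nseentropy} and observe that the single condition $u_1=0$ kills it pointwise. The two arguments are equivalent in content (indeed, even in the paper's version only $u_1=0$ is really used, since with $u_1=0$ the flux reduces to $(0,p,0,0,0)^{\top}$ and the contracting component $w(2)=u_1/T$ also vanishes), but yours is more economical and makes explicit the observation --- which the paper leaves implicit --- that the tangential no-slip conditions play no role in the inviscid boundary term. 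The paper's route has the minor advantage of generalizing more readily to settings where only the abstract entropy--entropy flux structure is available rather than the closed form \eqref{eq:nseentropy}, and of matching the structure reused in the semi-discrete proof of Theorem \ref{th:euler-flipping-sign}. Your added remark that one must rule out an indeterminate $0\cdot\infty$ by keeping $\rho$ and $T$ positive and finite at the wall is a legitimate point of care that the paper does not state at this spot, though it is consistent with the positivity caveat the paper raises after \eqref{eq:entropy-variables-cnse}.
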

\begin{proof}
Equation \eqref{eq:GodFpotential} provides the definition of the entropy flux, $F_1$:
\begin{equation}
  F_1 = w^{\top} f^{(I)}_1 - \psi_1, \quad \psi_1 = \rho u_1 R.
\end{equation}
Substituting the no-slip conditions, $u_1 = u_2 = u_3 = 0$, into the definition 
of the inviscid flux, $f^{(I)}_1$, (Equation \ref{eq:inv-flux})
and the condition $u_1 = 0$ into the definition 
of $\psi_1$, yields the desired result  $F_1 = 0$.
\end{proof}

Consider now the viscous contribution to the boundary terms in 
\eqref{eq:continuousentropyestimate-1wall}. 
\begin{theorem}\label{th:continous-viscous-bound}
  The boundary condition, 
  \begin{equation}\label{eq:temperature-continous}
    \mathtt{g}(t) = \frac{\partial T}{\partial n} \frac{1}{T}, 
  \end{equation}
   where the symbol $n$ defines the normal direction to the wall, bounds the viscous contribution to the time derivative of the entropy 
  \eqref{eq:continuousentropyestimate-1wall}.
\end{theorem}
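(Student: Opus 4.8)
The plan is to evaluate the viscous boundary integrand in \eqref{eq:continuousentropyestimate-1wall} directly, without ever expanding the matrices $\widehat{c}_{1j}$. By \eqref{eq:visc-flux-cij} the combination $\widehat{c}_{11}\,\partial w/\partial x_1 + \widehat{c}_{12}\,\partial w/\partial x_2 + \widehat{c}_{13}\,\partial w/\partial x_3$ is exactly the viscous flux $f^{(V)}_1$, so the quantity to be bounded is $\int_{x_1=0} w^{\top} f^{(V)}_1 \, dx_2\,dx_3$. First I would insert the entropy variables \eqref{eq:entropy-variables-cnse} and the viscous flux \eqref{eq:visc-flux} and carry out the contraction: the first entropy variable multiplies the vanishing mass component of $f^{(V)}_1$, the momentum contributions $(u_k/T)\tau_{k1}$ cancel the stress-work part $\tau_{j1}u_j/T$ of the fifth component by the symmetry $\tau_{ij}=\tau_{ji}$, and one is left with the clean identity
\begin{equation*}
  w^{\top} f^{(V)}_1 = \frac{\mathtt{q}_1}{T}.
\end{equation*}
With the no-slip data $u_1=u_2=u_3=0$ already imposed in Theorem \ref{th:continous-inviscid-bound}, this is even more immediate, since only $w_5 = -1/T$ survives at the wall.

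Next I would invoke the Fourier law \eqref{eq:heat-flux}, $\mathtt{q}_1 = -\kappa\,\partial T/\partial x_1$, and identify the $x_1$-derivative at the face $x_1=0$ with the wall-normal derivative $\partial/\partial n$ (with $n$ oriented into the domain). This gives $w^{\top} f^{(V)}_1\big|_{x_1=0} = -\kappa\,(1/T)\,\partial T/\partial n = -\kappa\,\mathtt{g}(t)$, with $\mathtt{g}$ the prescribed datum of \eqref{eq:temperature-continous}. Substituting into \eqref{eq:continuousentropyestimate-1wall} and using $F_1=0$ from Theorem \ref{th:continous-inviscid-bound}, the entire wall contribution collapses to
\begin{equation*}
  \ddt \int_{\Omega} S \, dx_1\,dx_2\,dx_3 = -DT + \int_{x_1=0} \kappa\,\mathtt{g}(t) \, dx_2\,dx_3 .
\end{equation*}
Since $-DT \le 0$ is strictly dissipative (as noted after \eqref{eq:continuous-dissipation}), $\kappa=\kappa(T)$ is bounded, and $\mathtt{g}(t)$ is a bounded prescribed function --- the heat-entropy flow through the wall --- the right-hand side is bounded; integrating in time then yields the sought a priori bound on $\int_{\Omega} S \, dx_1\,dx_2\,dx_3$.

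The algebra is routine; the step that deserves care is the contraction $w^{\top} f^{(V)}_1$, where one must observe that the viscous-work term cancels the momentum contributions through the symmetry of the shear-stress tensor, so that the boundary flux of mathematical entropy reduces exactly to the heat-entropy flux $\mathtt{q}_1/T$. The only genuinely conceptual point --- and the one I expect to be the real obstacle to a clean statement --- is recognizing that the correct nonlinear wall datum is the single scalar $\mathtt{g}(t)=(1/T)\,\partial T/\partial n$, rather than $T_{\mathrm{wall}}$ or $(\partial T/\partial n)_{\mathrm{wall}}$ separately: this is precisely the combination that the entropy estimate ``sees'' at the boundary, and once it is prescribed, bounding \eqref{eq:continuousentropyestimate-1wall} is immediate. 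Fixing the orientation convention for $n$, so that $\mathtt{g}$ carries the physically correct sign for the heat-entropy transfer, is the remaining bookkeeping detail.
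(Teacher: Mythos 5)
Your proposal is correct and follows essentially the same route as the paper: evaluate the viscous boundary contraction, recognize that it collapses to the heat entropy flux $\mathtt{q}_1/T = -\kappa\,(\partial T/\partial x_1)/T$, and observe that prescribing the single scalar $\mathtt{g}(t)$ bounds that term. The only (cosmetic) difference is that you obtain the identity $w^{\top} f^{(V)}_1 = \mathtt{q}_1/T$ directly from the component form of the viscous flux via the symmetry of $\tau_{ij}$, whereas the paper expands the symmetrized matrices $\widehat{c}_{1j}$ from the appendix to reach the same expression.
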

\begin{proof}
Using the definition of matrices 
$\widehat{c}_{ij}$ given in \ref{app:cij}, the viscous vector-matrix-vector 
contraction given in 
\eqref{eq:continuousentropyestimate-1wall} yields the following term
\begin{equation}\label{eq:contraction-fv-continous}
  - \, w^{\top}\left(\widehat{c}_{11} \frac{\partial w}{\partial x_1} 
          + \widehat{c}_{12} \frac{\partial w}{\partial x_2}   
          + \widehat{c}_{13} \frac{\partial w}{\partial x_3} \right) 
          \:=\: 
          \kappa \left(\frac{\partial T}{\partial x_1}\frac{1}{T}\right).
\end{equation}
Therefore, specifying the condition $\mathtt{g}(t) = \left(\frac{\partial T}{\partial x_1} \frac{1}{T}\right)$
where $\mathtt{g}(t)$ is a known bounded function (i.e., $L^2 \cap L^{\infty}$),
eliminates the last potential source of instability on the right-hand-side of 
equation \eqref{eq:continuousentropyestimate-1wall}.
\end{proof}

The boundary condition given by Theorem 
\ref{th:continous-viscous-bound} at first glance appears ad hoc.
Note, however, that the scalar value 
$\kappa\left(\frac{\partial T}{\partial x_1} \frac{1}{T}\right)$ accounts for the 
change in entropy at the boundary $x_1 = 0$, due to the wall heat flux,
$\mathtt{q}_{wall}$, and it is often denoted
{\it{heat entropy transfer}} or {\it{heat entropy flow}}  
\cite{bejan-entropy-generation-book}.
In fact,
\begin{equation}\label{eq:origin-Tx-T}
\kappa \, \frac{\partial T}{\partial x_1} \frac{1}{T} = -\kappa \, \frac{\partial}{\partial
x_1}\left[w(5)\right] \frac{1}{w(5)} = \kappa \, \frac{\partial}{\partial x_1}
\left[\log(T) \right] = -\frac{\mathtt{q}_{wall}}{T_{wall}},
\end{equation}
where $w(5)$ denotes the fifth component of the entropy variable vector, $w$,
in \eqref{eq:entropy-variables-cnse}. 
Equation \eqref{eq:origin-Tx-T} indicates that, in the context of the entropy
stability analysis of the compressible Navier--Stokes equations, it
is admissible and physically (thermodynamically) correct to impose the fourth wall boundary condition as given in 
Theorem \ref{th:continous-viscous-bound}. 

To the best of our knowledge, Theorem 
\ref{th:continous-viscous-bound} provides new insight for future development of any
entropy stable boundary conditions for the compressible Navier--Stokes equations.

\begin{remark}
We strongly remark that the non-linear contraction obtained in
\eqref{eq:contraction-fv-continous} is different from that obtained 
from the linearized compressible Navier--Stokes equations
\cite{svard-no-slip-jcp-2008,berg2011}. The linear analysis produces velocity gradient terms
in the energy estimate (not present in \eqref{eq:contraction-fv-continous}), and temperature gradient terms
in the normal direction of the form 
$T \frac{\partial T}{\partial x_1}$, with $T$ and $\frac{\partial T}{\partial x_1}$
being perturbations; see, for instance, \cite{svard-no-slip-jcp-2008,berg2011}.
\end{remark}

\begin{remark}
The boundary condition $\mathtt{g}(t) = 0$, which corresponds to an adiabatic
wall $\partial T/\partial x_1 = 0$, bounds the solution,
and, as physically expected, does not contribute to the time derivative of 
the entropy \eqref{eq:continuousentropyestimate-1wall} because the heat flux is zero.
\end{remark}


\section{Entropy stable spectral discontinuous collocation method for 
the semi-discretized system: Interior operator}
\label{sec:inviscid-flux-condition}
In this section, we summarize the main results
that allow us to construct a numerical high order entropy stable discontinuous
collocation interior operator of any order, $p$, on unstructured grids. The formalism provided here, 
will then be used in Section \ref{sec:ss-no-slip-bc} to design new entropy 
stable solid wall boundary conditions for the semi-discretized compressible 
Navier--Stokes equations.

Using an SBP operator and its equivalent telescoping form (see,
for instance, \cite{ss-no-slip-wall-bc-parsani-nasa-tm-2014,CarpenterFisherSSDC2013NASA}), the 
semi-discrete form of the compressible Navier--Stokes equations \eqref{eq:cnse} 
becomes
\begin{equation}
\label{eq:nse:semidiscrete}
\begin{aligned}
  \frac{\partial \qN}{\partial t} & = \dmat_{x_i} \left(-{\bf f}_i^{(I)} +
  c_{ij} \dmat_{x_j} \qN\right)  + \pinv_{x_i} \left(\gN_i^{(B)} +
    \gN_i^{(In)}\right) \\
  & = \pinv_{x_i} \Delta_{x_i} \left( -\fM^{(I)}_i + \fM_i^{(V)} \right)
  +
  \pinv_{x_i} \left(\gN_i^{(B)} + \gN_i^{(In)}\right), \\
  &\hspace{-0.5cm} \qN(x,0) = \gN^{(0)}(x),\quad x \in \Omega,
\end{aligned}
\end{equation}
where the subscript $x_i$ indicates the coordinate direction to which the 
operators apply.
The source terms $\gN_i^{(B)}$ and $\gN_i^{(In)}$, with $i=1,2,3$, 
enforce boundary and interface conditions, respectively;
and $\gN^{(0)}$ represents the initial condition. The matrix 
$\pmat$ may be thought of as a mass matrix in the context of Galerkin 
finite element method. While it is not true in general that $\pmat$ is 
diagonal, herein the focus is exclusively on diagonal
norm SBP operators, based on fixed element-based polynomials of order $p$. 
The matrix $\dmat$ is used to approximate the first derivative; and it is
defined as $\pmat^{-1}\qmat$, where the nearly skew-symmetric matrix, $\qmat$, 
is an undivided differencing operator 
where all rows sum to zero and the first and last column sum to $-1$ and $1$, 
respectively \cite{ss-no-slip-wall-bc-parsani-nasa-tm-2014,CarpenterFisherSSDC2013NASA}. The operator $\Delta$ is the telescopic flux matrix and
allows to express the semi-discrete
system in a telescopic flux form, by evaluating the fluxes at the collocated 
flux points, $\fM^{(I)}$ and $\fM^{(V)}$\footnote{In the remainder of this work, all
quantities evaluated at the flux points are denoted with an overbar.}, (see Figure 
\ref{fig:stencil}). Note that the spacing between the flux points is
incorporated into the operator $\pmat$. In fact, the diagonal elements of $\pmat$
are equal to the spacing between flux points. In the remainder of this paper, 
the elements of $\pmat$ are denoted as $\pmat_{(i)(i)}, i=1,2,\ldots,N$.\footnote{$N=p+1$ for a $p$th-order
scheme.}
\begin{figure}[h]
 \centering
 \includegraphics[width=0.70\textwidth]{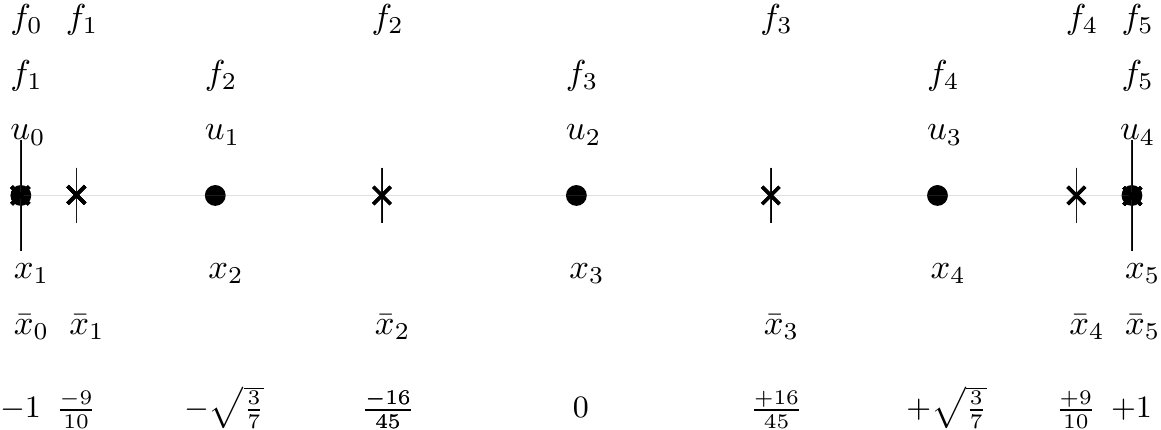}
 \caption{The one-dimensional discretization for $p=4$ Legendre collocation.
 Solution points are denoted by $\bullet$ and flux points are denoted by $\times$.}
 \label{fig:stencil}
\end{figure}
 
The semi-discrete entropy estimate is achieved by mimicking term by term the 
continuous estimate given in 
Equation \eqref{eq:entropyequation1}.
The non-linear analysis (entropy analysis) begins by contracting the semi-discrete equations given in 
Equation \eqref{eq:nse:semidiscrete} with the entropy variables: $\wN^{\top}\pmat$.
For clarity of presentation, but without loss of generality, the derivation is simplified 
to one spatial dimension. Tensor product algebra allows the results to extend directly to 
three dimensions. The resulting global equation that governs the
time derivative of the entropy is given by
\begin{equation}
  \label{eq:semidiscreteentropy}
  \wN^{\top} \pmat \frac{\partial \qN}{\partial t} + \wN^{\top} \Delta \fM^{(I)} =
  \wN^{\top} \Delta \fM^{(V)} + \wN^{\top} ( \gN^{(B)} + \gN^{(In)} ),
\end{equation}
where
\[
  \wN = \left(w(q_1)^{\top},w(q_2)^{\top},\dots,w(q_N)^{\top}\right)^{\top}
\]
is the vector of the entropy variables at the solution points.

\subsection{Time derivative}
The time derivative in \eqref{eq:semidiscreteentropy} is in
mimetic form for diagonal norm SBP operators.
The entropy variables are defined by the expression $w^{\top}=\partial
S/\partial q$ (see Definition \ref{def:entropyfunction}), which when combined with the
definition of entropy yields the point-wise expression
\[
  w_i^{\top} \frac{\partial q_i}{\partial t} \:=\: \frac{\partial S_i}{\partial
  q_i} \frac{\partial q_i}{\partial t} \:=\: \frac{\partial S_i}{\partial t}, \quad \forall i.
\]
Now, define the diagonal matrices $\partial {\bf S}/\partial {\bf q} \:=\: \WN \:=\: \textrm{diag}[\wN]$.
Since $\pmat$ is a diagonal matrix and arbitrary diagonal matrices commute, the semi-discrete rate of change
of entropy becomes
\[
  \wN^{\top} \pmat \frac{\partial\qN}{\partial t} 
\:=\: \mathbf{1}^{\top} \WN \pmat \frac{\partial\qN}{\partial t} 
\:=\: \mathbf{1}^{\top} \pmat \WN \frac{\partial\qN}{\partial t} 
\:=\: \mathbf{1}^{\top} \pmat \frac{\partial{\bf S}}{\partial \qN} \frac{\partial\qN}{\partial t}
\:=\: \mathbf{1}^{\top} \pmat \frac{\partial\SN}{\partial t},
\]
where  $$\mathbf{1} = \left(1,1,\dots,1\right)^{\top} $$
is a vector with $N$ elements.\footnote{Recall that $N=p+1$ for a $p$th-order
scheme.}

\subsection{Inviscid terms}
The inviscid portion of Equation \eqref{eq:semidiscreteentropy} is entropy 
conservative if it satisfies
\begin{equation}
\label{eq:entropyconsistent}
\wN^{\top} \Delta \fM^{(I)} = \overline{F}(q_N) - \overline{F}(q_1) = F(q_N) - F(q_1) = \mathbf{1}^{\top} \Delta \FM.
\end{equation}
Note that in \eqref{eq:entropyconsistent}, the first and
last flux points are coincident with the first and last solution points, which
enables the endpoint fluxes to be consistent (see Figure \ref{fig:stencil}).
Condition \eqref{eq:entropyconsistent}
leads to globally entropy conservative schemes but it is difficult to enforce 
at the flux points. One plausible solution to circumvent this problem is to use
a more restrictive point-wise relation between solution and flux-point data, which telescopes across the 
domain and yields a local condition on the flux for the global entropy
consistency constraint \eqref{eq:entropyconsistent}. Tadmor 
\cite{Tadmor2003} developed such a solution based on second-order accurate centered 
operators. Carpenter and co-authors \cite{CarpenterFisherSSDC2013NASA}, have 
generalized this solution for Legendre spectral collocation operators of any 
order of accuracy, $p$. 

In the following paragraphs, we present, without any proof, the main theorems 
which allow to construct inviscid entropy conservative and stable fluxes of any order
of accuracy, $p$. Interested readers should consult   
\cite{FisherCarpenter2013JCPb,CarpenterFisherSSDC2013NASA} and the references
therein for details. Note that in this section the subscripts 
$i-1$, $i$ and $i+1$ are used to denote a scalar or vector quantity at the $i-1$, 
$i$ or $i+1$ collocated point, and do not have to be confused with 
the subscript used, for instance, in \eqref{eq:cnse}.
\begin{theorem}
The local conditions
\begin{equation}
 \label{eq:entropyconsistentflux}
 \left(w_{i+1}-w_{i}\right)^{\top} \overline{f}_i = \widetilde{\psi}_{i+1}-\widetilde{\psi}_i, \quad i = 1,2,\dots,N-1  
\quad ; \quad \widetilde{\psi}_1 = \psi_1, \quad \widetilde{\psi}_N = \psi_N
\end{equation}
when summed, telescope across the domain and satisfy the entropy conservative
condition given in Equation \eqref{eq:entropyconsistent}.  
A flux that satisfies the condition given in equation \eqref{eq:entropyconsistentflux}
is denoted $\bar{f}_i^{(S)}$.  The potentials
$\widetilde{\psi}_{i+1}$ and $\widetilde{\psi}_i$ need not be the point-wise 
${\psi}_{i+1}$ and ${\psi}_i$, respectively.
\end{theorem}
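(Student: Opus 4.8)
The plan is to establish the claim by a direct telescoping argument. Start from the local conditions \eqref{eq:entropyconsistentflux} and sum them over $i=1,\dots,N-1$. The right-hand side collapses immediately: $\sum_{i=1}^{N-1}\left(\widetilde{\psi}_{i+1}-\widetilde{\psi}_i\right) = \widetilde{\psi}_N - \widetilde{\psi}_1 = \psi_N - \psi_1$ by the prescribed endpoint values, which is precisely $\psi_1 = \psi(q_1)$ and $\psi_N = \psi(q_N)$ in the notation of \eqref{eq:GodFpotential}. So the content of the argument is to show that the sum of the left-hand sides equals $\wN^{\top}\Delta\fM^{(I)}$, and then to connect the resulting expression to $F(q_N)-F(q_1)$ using the Godunov relation \eqref{eq:GodFpotential}, namely $\psi_i = w_i^{\top} f^{(I)}_i - F_i$.

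First I would recall the structure of the telescopic flux operator $\Delta$ and the fact that $\wN^{\top}\Delta\fM^{(I)}$ expands, after using the properties of $\Delta$ (rows summing to zero, endpoint columns equal to $\mp 1$), into a sum of jump-times-flux terms of the form $\sum_{i=1}^{N-1}\left(w_{i+1}-w_i\right)^{\top}\overline{f}_i$ plus boundary contributions $w_N^{\top}\overline{f}_N - w_1^{\top}\overline{f}_1$ evaluated at the coincident endpoint flux/solution points. Substituting the local condition \eqref{eq:entropyconsistentflux} into the jump-times-flux sum replaces it by the telescoping sum $\widetilde{\psi}_N-\widetilde{\psi}_1 = \psi_N-\psi_1$. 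Then I would combine this with the endpoint terms $w_N^{\top} f^{(I)}_N - w_1^{\top} f^{(I)}_1$ (using that $\overline{f}_1 = f^{(I)}_1$ and $\overline{f}_N = f^{(I)}_N$ at the coincident points) and invoke \eqref{eq:GodFpotential} in the form $w_i^{\top} f^{(I)}_i - \psi_i = F_i$ to obtain exactly $F(q_N) - F(q_1)$. Finally, identifying $\overline{F}(q_1)=F(q_1)$ and $\overline{F}(q_N)=F(q_N)$ at the endpoints, and recognizing $\mathbf{1}^{\top}\Delta\FM$ as the telescoping sum of the entropy flux, yields the full chain of equalities in \eqref{eq:entropyconsistent}.

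The main obstacle I anticipate is bookkeeping the precise action of the telescopic operator $\Delta$ and making sure the endpoint/boundary terms are handled consistently with the convention that the first and last flux points coincide with the first and last solution points (as emphasized in the text and Figure \ref{fig:stencil}). In particular, one must be careful that the contraction $\wN^{\top}\Delta\fM^{(I)}$ genuinely reorganizes into the jump form with the stated boundary terms and no spurious interior remainder; this relies on the defining properties of $\Delta$ (equivalently of $\qmat = \pmat\dmat$) rather than on any entropy-specific structure. Once that reorganization is in hand, the remaining steps are purely algebraic substitutions using \eqref{eq:entropyconsistentflux} and the Godunov potential-flux identity \eqref{eq:GodFpotential}, with no further subtlety. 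I would also note explicitly, as the theorem statement already hints, that the argument never uses $\widetilde{\psi}_i = \psi_i$ at interior points — only the telescoping structure and the endpoint pinning are needed — which is why more general (e.g.\ Tadmor-type two-point) potentials are admissible.
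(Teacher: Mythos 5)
Your argument is correct and is essentially the standard proof of this result: the paper itself defers to Theorem~3.3 of \cite{CarpenterFisherSSDC2013NASA}, whose proof is exactly the Abel-summation (summation-by-parts) rearrangement of $\wN^{\top}\Delta\fM^{(I)}$ into the jump sum $-\sum_{i=1}^{N-1}(w_{i+1}-w_i)^{\top}\overline{f}_i$ plus the endpoint terms, followed by substitution of \eqref{eq:entropyconsistentflux} and the Godunov identity \eqref{eq:GodFpotential}. Only minor bookkeeping remains to be fixed in a full write-up: the jump sum enters with a minus sign (so that the endpoint combination is $w^{\top}f^{(I)}-\psi=F$), and the left endpoint flux should be indexed $\overline{f}_0=f^{(I)}(q_1)$ rather than $\overline{f}_1$ in the convention of \eqref{eq:leflochfluxburgers2}.
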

\begin{proof}
See Theorem 3.3 in reference \cite{CarpenterFisherSSDC2013NASA} for the proof of
this theorem.
\end{proof}

A possible strategy for constructing high order entropy conservative fluxes 
is to construct a linear combination of two-point entropy
conservative fluxes by using the coefficients in the SBP matrix $\qmat$.  
This approach follows immediately from the generalized telescoping 
structural properties of diagonal norm SBP operators (see, for instance, 
\cite{ss-no-slip-wall-bc-parsani-nasa-tm-2014,CarpenterFisherSSDC2013NASA}). 
Because it requires only the existence of a two-point entropy conservative flux 
formula and the coefficients of $\qmat$, it is valid for any SBP operator 
that satisfies the following SBP constraints 
$$ 
\qmat^{\top} = \bmat - \qmat, \quad \bmat =
 \mathrm{diag}\left(-1,0,\dots,0,1\right).
$$
Thus, it is also valid for Legendre spectral collocation operators used herein;
see \cite{CarpenterFisherSSDC2013NASA}.

The following theorem establishes the accuracy of the new fluxes.
\begin{theorem}
\label{thm:hoentropyconsistent}
A two-point entropy conservative flux can be extended to high order with
formal boundary closures by using the form
\begin{equation}
\begin{gathered}
\label{eq:leflochfluxburgers2}
 \overline{f}_i^{(S)} = \sum_{k=i+1}^{N}\sum_{\ell = 1}^{i} 2 \qmat_{\ell k} \, \overline{f}_S\left(q_{\ell},q_{k}\right), \quad 1 \leq i \leq N-1,
\end{gathered} 
\end{equation}
when the two-point non-dissipative function from Tadmor \cite{Tadmor2003}, 
\begin{equation}
\label{eq:tadmorflux}
\overline{f}_S\left(q_{k},q_{\ell}\right) = 
\int_{0}^1 g\left(w(q_k) + \xi\left(w(q_\ell)-w(q_k)\right)\right) \dr{\xi},
\quad g(w(u)) = f(u),
\end{equation}
is used. 
The coefficient, $\qmat_{k \ell}$, corresponds to the
$k$ row and $l$ column in $\qmat$.
\end{theorem}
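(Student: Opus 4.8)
The statement packages two claims about the flux \eqref{eq:leflochfluxburgers2}: that the flux-point vector $\fM^{(S)}$ whose interior components are \eqref{eq:leflochfluxburgers2} and whose endpoint components are the consistent values $\overline{f}^{(S)}_0 = f(q_1)$, $\overline{f}^{(S)}_N = f(q_N)$ still satisfies the local entropy-conservation relation \eqref{eq:entropyconsistentflux} (so that it genuinely is an admissible $\overline{f}^{(S)}_i$), and that the telescopic operator $\pinv \Delta \fM^{(S)}$ approximates $\partial f^{(I)}/\partial x$ to the order of the underlying SBP operator, the one-sided nested sums at the first and last few flux points supplying the ``formal boundary closures.'' The plan is to establish these in turn, following \cite{Tadmor2003,FisherCarpenter2013JCPb,CarpenterFisherSSDC2013NASA}. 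The ingredients are three elementary properties of Tadmor's two-point function \eqref{eq:tadmorflux} --- consistency $\overline{f}_S(q,q)=f(q)$ (the integrand is constant), symmetry $\overline{f}_S(q_k,q_\ell)=\overline{f}_S(q_\ell,q_k)$ (substitute $\xi\mapsto 1-\xi$ in the line integral), and the two-point entropy identity $(w_\ell-w_k)^\top\overline{f}_S(q_k,q_\ell) = \psi_\ell-\psi_k$, obtained by applying the fundamental theorem of calculus to $\psi$ along the segment $w_k+\xi(w_\ell-w_k)$ together with $\partial\psi/\partial w = (f^{(I)})^\top$ from \eqref{eq:GodFpotential} --- plus the SBP properties $\qmat^\top=\bmat-\qmat$ and $\qmat\mathbf{1}=0$. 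The one structural computation that everything hinges on is the ``collapse identity'': forming $\overline{f}^{(S)}_i-\overline{f}^{(S)}_{i-1}$, every term with $\ell<i$ and $k>i$ appears in both double sums with the same weight and cancels, and after inserting $\qmat^\top=\bmat-\qmat$ and $\qmat\mathbf{1}=0$ (which also absorbs the consistent endpoints into the formula) one is left with $\overline{f}^{(S)}_i-\overline{f}^{(S)}_{i-1} = \sum_{j=1}^{N} 2\,\qmat_{ij}\,\overline{f}_S(q_i,q_j)$ for every $i=1,\dots,N$ --- the familiar ``flux-differencing'' form of a diagonal-norm SBP operator.

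Entropy conservation then goes as follows. Contract $\Delta\fM^{(S)}$ with $\wN$ and use the collapse identity to get $\wN^\top\Delta\fM^{(S)} = \sum_{i,j} 2\,\qmat_{ij}\, w_i^\top\overline{f}_S(q_i,q_j)$. Symmetrize by relabelling $i\leftrightarrow j$ and using the symmetry of $\overline{f}_S$; writing $\qmat_{ij}+\qmat_{ji}=\bmat_{ij}$ splits the sum into $\sum_{i,j}\qmat_{ij}(w_i-w_j)^\top\overline{f}_S(q_i,q_j)$ plus a boundary remainder. The two-point entropy identity converts the first piece to $\sum_{i,j}\qmat_{ij}(\psi_i-\psi_j)$, which equals $\psi_1-\psi_N$ since $\qmat\mathbf{1}=0$ and $\mathbf{1}^\top\qmat=(-1,0,\dots,0,1)$; the boundary remainder contributes $w_N^\top f(q_N)-w_1^\top f(q_1)$. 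Adding the two and applying \eqref{eq:GodFpotential} gives $\wN^\top\Delta\fM^{(S)} = F(q_N)-F(q_1)$, which is exactly \eqref{eq:entropyconsistent}. Finally a discrete summation by parts on $\wN^\top\Delta\fM^{(S)} = \sum_i w_i^\top(\overline{f}^{(S)}_i-\overline{f}^{(S)}_{i-1})$, together with the consistent endpoints, yields $\sum_{i=1}^{N-1}(w_{i+1}-w_i)^\top\overline{f}^{(S)}_i = \psi_N-\psi_1$; so defining $\widetilde{\psi}_i := \psi_1+\sum_{m=1}^{i-1}(w_{m+1}-w_m)^\top\overline{f}^{(S)}_m$ gives \eqref{eq:entropyconsistentflux} with $\widetilde{\psi}_1=\psi_1$ and $\widetilde{\psi}_N=\psi_N$ automatically.

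For the accuracy claim I would write $\overline{f}_S(q_i,q_j) = \half\bigl(f(q_i)+f(q_j)\bigr)+R_{ij}$, where the remainder is symmetric, $R_{ij}=R_{ji}$, and a Taylor expansion of the line integral in \eqref{eq:tadmorflux} gives $R_{ij}=\Oh(|q_i-q_j|^2)$. Substituting into the collapse identity: the symmetric leading part gives $\sum_j\qmat_{ij}\bigl(f(q_i)+f(q_j)\bigr)=(\qmat\mathbf{f})_i$ because $\qmat\mathbf{1}=0$ (with $\mathbf{f}=(f(q_1),\dots,f(q_N))^\top$), so $\pinv$ of it is $\dmat\mathbf{f}$, which is accurate to the order of the SBP operator. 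For the remainder, expand $R_{ij}$ about $x_i$ and use that $\dmat=\pinv\qmat$ differentiates polynomials of degree $\le p$ exactly, so $\sum_j\qmat_{ij}(x_j-x_i)^k = \pmat_{ii}\,\tfrac{d}{dx}(x-x_i)^k\big|_{x_i}=0$ for $2\le k\le p$; hence $\sum_j 2\,\qmat_{ij}R_{ij}=\Oh(h^{p+1})$ and $\pinv$ of it is $\Oh(h^{p})$, so the design order is not degraded. The collapse identity and this estimate hold verbatim at the one-sided flux points near $i=1$ and $i=N$ --- that is the meaning of ``formal boundary closures'' --- and tensor products carry the one-dimensional result to three dimensions unchanged. (In practice one replaces \eqref{eq:tadmorflux}, which is only defined through a line integral, by an algebraically explicit two-point flux satisfying the same consistency/symmetry/two-point-entropy triple; the argument is identical.)

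The main obstacle, and the only step that needs genuine care, is the truncation estimate: one must check that the symmetric second-order remainder $R_{ij}$ --- and not merely the flux itself --- meets exactly the right vanishing moments of $\qmat$, so that the order survives the $\Oh(h^{-1})$ amplification by $\pinv$. A secondary, purely clerical difficulty is the index bookkeeping behind the collapse identity $\overline{f}^{(S)}_i-\overline{f}^{(S)}_{i-1}=\sum_j 2\,\qmat_{ij}\overline{f}_S(q_i,q_j)$, in particular the corner rows $i=1$ and $i=N$, where one must check that the consistent endpoint fluxes $f(q_1)$ and $f(q_N)$ reappear correctly from the identities $\qmat_{11}=-\half$, $\qmat_{NN}=\half$, and $\qmat\mathbf{1}=0$.
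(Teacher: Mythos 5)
The paper does not actually prove this theorem in-text; it defers entirely to Theorem 3.1 of Fisher and Carpenter \cite{FisherCarpenter2013JCPb}. Your reconstruction is correct and follows essentially the same route as that reference: the flux-differencing collapse identity $\overline{f}^{(S)}_i-\overline{f}^{(S)}_{i-1}=\sum_j 2\,\qmat_{ij}\overline{f}_S(q_i,q_j)$ obtained from $\qmat^{\top}=\bmat-\qmat$ and the consistent endpoints, the decomposition $\overline{f}_S(q_i,q_j)=\tfrac{1}{2}\left(f(q_i)+f(q_j)\right)+\Oh(|q_j-q_i|^2)$ (which, as you note, needs only symmetry, consistency, and smoothness of the two-point flux), and the vanishing moments $\sum_j\qmat_{ij}(x_j-x_i)^k=0$ for $2\le k\le p$ to control the remainder against the $\Oh(h^{-1})$ factor from $\pinv$. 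The only caveat worth recording is that the moment condition holds row-by-row only to the exactness degree of that row of $\dmat$, so for classical SBP finite-difference closures the retained order near the boundary is that of the closure rows rather than the interior order; for the Legendre collocation operators used here all rows are exact to degree $p$ and your estimate applies uniformly. Your entropy-conservation paragraph is also correct but belongs to Theorem \ref{thm:entropyconsistency} rather than to this statement.
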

\begin{proof}
See Theorem 3.1 in reference \cite{FisherCarpenter2013JCPb} for the proof of
this theorem.
\end{proof}
Thus, Theorem \ref{thm:hoentropyconsistent} ensures that a high order flux 
constructed from a linear combination of two-point entropy 
conservative fluxes retains the design order
of the original discrete operator for any diagonal norm SBP matrix $\qmat$.

The following theorem establishes instead that the linear combination of 
two-point entropy conservative fluxes does preserve the property
of entropy stability for any arbitrary diagonal norm SBP matrix $\qmat$.
\begin{theorem}
\label{thm:entropyconsistency}
A two-point high order entropy conservative 
flux satisfying Equation \eqref{eq:entropyconsistentflux}
with formal boundary closures can be constructed using Equation \eqref{eq:leflochfluxburgers2},
\begin{equation}
\begin{gathered}
\label{eq:leflochfluxburgers3}
 \overline{f}_i^{(S)} = \sum_{k=i+1}^{N}\sum_{\ell = 1}^{i} 2 \qmat_{\ell k} \, \overline{f}_S\left(q_{\ell},q_{k}\right), \quad 1 \leq i \leq N-1,
\end{gathered}
\end{equation}
where $\bar{f}_S\left(q_\ell,q_k\right)$ is any two-point non-dissipative function that
satisfies the entropy conservation condition
\begin{equation}
\label{eq:2ptentropyconsistent}
\left(w_\ell-w_k\right)^{\top} \overline{f}_S\left(q_{\ell},q_{k}\right) = \psi_\ell - \psi_k.
\end{equation}
The high order entropy conservative flux satisfies an additional local
entropy conservation property,
\begin{equation}
\label{eq:localentropyconsistent1}
{\wN}^{\top} \pinv \Delta \fM^{(S)} = \pinv \Delta \FM = \frac{\partial
}{\partial x}F\left(\qN\right) + \tmat_{p+1},
\end{equation}
where $\tmat_{p+1}$ is the truncation error of the approximation of $\partial F\left(\qN\right)/\partial x$,
or equivalently,
\begin{equation}
\label{eq:localentropyconsistent}
w_{i}^{\top} \left(\overline{f}_{i}^{(S)} - \overline{f}_{i-1}^{(S)}\right)
= \left(\overline{F}_{i} - \overline{F}_{i-1}\right), \quad 1 \leq i \leq N,
\end{equation}
where
\begin{equation}
\label{eq:hoentropyflux}
\overline{F}_i = \sum_{k=i+1}^N \sum_{\ell = 1}^i
\qmat_{\ell k} \left[\left(w_\ell + w_k\right)^{\top} \overline{f}_S\left(q_{\ell},q_{k}\right)
- \left(\psi_{\ell}+\psi_k\right)\right], \quad 1 \leq i \leq N-1.
\end{equation}
\end{theorem}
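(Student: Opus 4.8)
The plan is to prove the two nontrivial assertions of the theorem by a telescoping (Abel-summation) argument on the double sums in \eqref{eq:leflochfluxburgers3} and \eqref{eq:hoentropyflux}: first the solution-point local entropy conservation identity \eqref{eq:localentropyconsistent} together with the explicit high-order entropy flux \eqref{eq:hoentropyflux}, and then, by summing over the solution points, the global statement \eqref{eq:localentropyconsistent1}. That the constructed flux \eqref{eq:leflochfluxburgers3} also satisfies the flux-point entropy conservation relation \eqref{eq:entropyconsistentflux} and is formally $p$th-order accurate is already contained in Theorem \ref{thm:hoentropyconsistent} and the cited work \cite{FisherCarpenter2013JCPb,CarpenterFisherSSDC2013NASA}; the new content here is the dual, solution-point relation \eqref{eq:localentropyconsistent}, which is the form needed in Section \ref{sec:ss-no-slip-bc}. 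Throughout, the tools are: the SBP structure $\qmat + \qmat^{\top} = \bmat$ with $\bmat = \mathrm{diag}(-1,0,\dots,0,1)$, so that $\qmat_{\ell k} = -\qmat_{k\ell}$ whenever $\ell \neq k$, $\qmat_{11} = -\half$, $\qmat_{NN} = \half$, and each row of $\qmat$ sums to zero; the symmetry $\overline{f}_S(q_k,q_\ell) = \overline{f}_S(q_\ell,q_k)$ and consistency $\overline{f}_S(q,q) = f^{(I)}(q)$ of the two-point flux; the two-point entropy conservation property \eqref{eq:2ptentropyconsistent}; and the Godunov potential relation \eqref{eq:GodFpotential}, $\psi_i = w_i^{\top} f^{(I)}_i - F_i$. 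I adopt the standard endpoint conventions $\overline{f}_0^{(S)} = f^{(I)}(q_1)$, $\overline{f}_N^{(S)} = f^{(I)}(q_N)$, $\overline{F}_0 = F(q_1)$, $\overline{F}_N = F(q_N)$, consistent with the first and last flux points coinciding with the first and last solution points.

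First I would compute the telescoped flux difference. Passing from index $i-1$ to $i$ in \eqref{eq:leflochfluxburgers3}, the terms with $\ell < i$, $k > i$ cancel, leaving contributions from the column $k=i$ of $\overline{f}_{i-1}^{(S)}$ and the row $\ell=i$ of $\overline{f}_i^{(S)}$; replacing $\qmat_{\ell i}$ by $-\qmat_{i\ell}$ (valid since $\bmat$ is diagonal) and using the symmetry of $\overline{f}_S$, these combine into $\overline{f}_i^{(S)} - \overline{f}_{i-1}^{(S)} = 2\sum_{k=1}^{N}\qmat_{ik}\,\overline{f}_S(q_i,q_k)$, which also holds at $i=1$ and $i=N$ once the boundary diagonal entries $\qmat_{11}=-\half$, $\qmat_{NN}=\half$ and $\overline{f}_S(q_i,q_i)=f^{(I)}(q_i)$ are used to reproduce the endpoint conventions. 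Contracting with $w_i^{\top}$ and splitting $2\,w_i^{\top}\overline{f}_S(q_i,q_k) = (w_i+w_k)^{\top}\overline{f}_S(q_i,q_k) + (w_i-w_k)^{\top}\overline{f}_S(q_i,q_k)$, the last bracket equals $\psi_i - \psi_k$ by \eqref{eq:2ptentropyconsistent}; since each row of $\qmat$ sums to zero the $\psi_i$ piece drops, giving $w_i^{\top}\big(\overline{f}_i^{(S)} - \overline{f}_{i-1}^{(S)}\big) = \sum_{k=1}^{N}\qmat_{ik}\big[(w_i+w_k)^{\top}\overline{f}_S(q_i,q_k) - \psi_k\big]$.

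The next step is to recognize this right-hand side as $\overline{F}_i - \overline{F}_{i-1}$ with $\overline{F}_i$ as in \eqref{eq:hoentropyflux}. Applying to \eqref{eq:hoentropyflux} the same index reshuffle and using once more $\qmat_{\ell i}=-\qmat_{i\ell}$, the symmetry of $\overline{f}_S$, and the vanishing row sums to eliminate the $\psi_i$ term, one obtains exactly that expression; at $i=1$ and $i=N$ the Godunov relation \eqref{eq:GodFpotential} combined with $\qmat_{11}=-\half$, $\qmat_{NN}=\half$ is what converts the algebra into the prescribed endpoint values $\overline{F}_0 = F(q_1)$, $\overline{F}_N = F(q_N)$. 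This establishes \eqref{eq:localentropyconsistent}. Summing \eqref{eq:localentropyconsistent} over $i=1,\dots,N$ telescopes the right-hand side to $F(q_N)-F(q_1)$, which in matrix form is ${\wN}^{\top}\pinv\Delta\fM^{(S)} = \pinv\Delta\FM$, the first equality in \eqref{eq:localentropyconsistent1}. For the second equality, note that \eqref{eq:hoentropyflux} is itself of the form \eqref{eq:leflochfluxburgers2} with the two-point flux replaced by the consistent two-point entropy-flux function $\overline{F}_S(q_\ell,q_k) := \half\big[(w_\ell+w_k)^{\top}\overline{f}_S(q_\ell,q_k) - (\psi_\ell+\psi_k)\big]$, which satisfies $\overline{F}_S(q,q)=F(q)$; hence Theorem \ref{thm:hoentropyconsistent} applies verbatim and yields $\pinv\Delta\FM = \partial F(\qN)/\partial x + \tmat_{p+1}$.

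The only step I expect to require real care is the endpoint (boundary-closure) bookkeeping: showing that the $\bmat$-contributions to $\qmat$ at $i=1,N$, together with the Godunov potential identity, collapse the telescoping sums precisely onto the consistent physical fluxes and entropy fluxes at the first and last flux points, leaving no residual terms. The interior telescoping and the contraction with the entropy variables are routine once the symmetry of $\overline{f}_S$ and property \eqref{eq:2ptentropyconsistent} are in hand.
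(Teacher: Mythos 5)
The paper itself gives no proof of this theorem, deferring to Theorem 3.2 of Fisher and Carpenter; your telescoping argument --- using $\qmat_{\ell k}=-\qmat_{k\ell}$ off the diagonal, the zero row sums of $\qmat$, the symmetry and consistency of $\overline{f}_S$, condition \eqref{eq:2ptentropyconsistent}, and the Godunov relation \eqref{eq:GodFpotential} to absorb the endpoint terms $\qmat_{11}=-\tfrac{1}{2}$, $\qmat_{NN}=\tfrac{1}{2}$ --- is a correct reconstruction of exactly that proof. The one point to state more carefully is the final accuracy claim: Theorem \ref{thm:hoentropyconsistent} is phrased for the Tadmor-form flux \eqref{eq:tadmorflux}, so in applying it to $\overline{F}_S(q_\ell,q_k)=\tfrac{1}{2}\bigl[(w_\ell+w_k)^{\top}\overline{f}_S(q_\ell,q_k)-(\psi_\ell+\psi_k)\bigr]$ you should note explicitly that the accuracy argument only uses smoothness, symmetry, and consistency ($\overline{F}_S(q,q)=F(q)$), all of which this function inherits from $\overline{f}_S$.
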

\begin{proof}
See Theorem 3.2 in reference \cite{FisherCarpenter2013JCPb} for the proof of
this theorem.
\end{proof}

\subsubsection{Affordable entropy consistent Euler flux} \label{subsec:Ismail-Roe}
The inviscid terms in the discretization of the compressible Navier--Stokes 
equations \eqref{eq:nse:semidiscrete} are calculated according to Equations 
\eqref{eq:leflochfluxburgers3} by using
the two-point entropy conservative flux of Ismail and Roe \cite{Roe2009},
\begin{equation}
\label{eq:roeentropyconsistentflux}
\begin{gathered}
  \overline{f}_{S,j}(q_{i},q_{i+1}) = \left(
\hat{\rho}\hat{u}_j,
\hat{\rho}\hat{u}_j \hat{u}_1 + \delta_{j1} \hat{p},
\hat{\rho}\hat{u}_j \hat{u}_2 + \delta_{j2} \hat{p},
\hat{\rho}\hat{u}_j \hat{u}_3 + \delta_{j3} \hat{p},
\hat{\rho}\hat{u}_j \hat{H}
\right)^{\top}, \\
\hat{u} = \frac{\frac{\hat{u}_{i}}{\sqrt{T_{i}}}
+ \frac{\hat{\upsilon}_{i+1}}{\sqrt{T_{i+1}}}}
{\frac{1}{\sqrt{T_i}}+\frac{1}{\sqrt{T_{i+1}}}}, \quad
\hat{p} = \frac{\frac{\hat{p}_{i}}{\sqrt{T_{i}}}
+ \frac{\hat{p}_{i+1}}{\sqrt{T_{i+1}}}}
{\frac{1}{\sqrt{T_i}}+\frac{1}{\sqrt{T_{i+1}}}}, \\
\begin{aligned}
\hat{h} = R \frac{\log\left(\frac{\sqrt{T_{i}}\rho_i}{\sqrt{T_{i+1}}\rho_{i+1}}\right)}
{\frac{1}{\sqrt{T_i}} + \frac{1}{\sqrt{T_{i+1}}}}
& \left(\theta_1 + \theta_2\right), 
\end{aligned}
\\
\begin{aligned}
\theta_1 = \frac{\sqrt{T_i}\rho_i + \sqrt{T_{i+1}}\rho_{i+1}}
{\left(\frac{1}{\sqrt{T_{i}}} + \frac{1}{\sqrt{T_{i+1}}}\right)
\left(\sqrt{T_i}\rho_i-\sqrt{T_{i+1}}\rho_{i+1}\right)},
\end{aligned}
\\
\begin{aligned}
\theta_2 = \frac{\gamma+1}{\gamma-1}\frac{\log\left(\sqrt{\frac{T_{i+1}}{T_i}}\right)}
{\log\left(\sqrt{\frac{T_{i}}{T_{i+1}}}\frac{\rho_{i}}{\rho_{i+1}}\right)
\left(\frac{1}{\sqrt{T_i}}-\frac{1}{\sqrt{T_{i+1}}}\right)},
\end{aligned}
\\
\hat{H} = \hat{h} + \frac{1}{2}\hat{u}_\ell \hat{u}_\ell, \quad 
\hat{\rho} = \frac{\left(\frac{1}{\sqrt{T_{i}}} + \frac{1}{\sqrt{T_{i+1}}}\right)
\left(\sqrt{T_{i}}\rho_i - \sqrt{T_{i+1}}\rho_{i+1}\right)}
{2\left(\log(\sqrt{T_i}\rho_i) - \log(\sqrt{T_{i+1}}\rho_{i+1})\right)}. 
\end{gathered}
\end{equation}
The index $j$ denotes the spatial direction.
This somewhat complicated explicit form is the first entropy
conservative flux for the convective terms with low enough computational
cost to be implemented in a practical simulation code. 

To our knowledge, the Ismail and Roe flux \cite{Roe2009} 
cannot be written in the form given by \eqref{eq:tadmorflux}. 
Therefore, there is no mathematical proof that show that 
the entropy conservative flux constructed as in \eqref{eq:leflochfluxburgers3}
by using \eqref{eq:roeentropyconsistentflux}
will retain the design-order of the spatial discretization. However, thorough numerical
experiments reported in \cite{FisherCarpenter2013JCPb,carpenter-SSDC-siam-2014,
CarpenterFisher2013AIAA} and herein indicate that 
the inviscid terms calculated with the two-point entropy conservative flux of Ismail and Roe 
\cite{Roe2009} do not destroy the order of accuracy of the spatial operator.

\subsubsection{Entropy stable inviscid interface flux}
Herein, the solution between adjoining elements is allowed to be discontinuous. 
An interface flux that preserves the entropy consistency of the interior operators on either side
of the interface is needed.  An entropy consistent (or entropy conservative)
inviscid interface flux constructed according to equation
\eqref{eq:leflochfluxburgers3} by using 
\eqref{eq:roeentropyconsistentflux} is indicated as $\fs$. 
The superscripts $(-)$ and $(+)$ combined with the subscript $i$ denote 
the left and right state used to compute
the two-point entropy conservative flux and
therefore replace the subscripts $i$ and $i+1$ in  
\eqref{eq:roeentropyconsistentflux}.

A more 
dissipative and hence entropy {\it stable} inviscid interface flux 
$\fss$ is constructed as 
\begin{equation}
\label{eq-ssflux}
\fss \:=\: \fs \:+\: \Lambda \left(\wip-\wim\right),
\end{equation}
where $\Lambda$ is a negative semi-definite interface matrix with zero or negative eigenvalues. 
The entropy stable flux $\fss$ is more dissipative than the  
entropy conservative inviscid flux, as is
easily verified by contracting $\fss$ against the entropy variables to yield the expression 
\begin{equation}
 \label{eq:entropyconsistentfluxdiss1}
 \left(\wip-\wim\right)^{\top} \fss \:=\: \psip - \psim \:+\: 
 \left(\wip-\wim\right)^{\top} \Lambda \left(\wip-\wim\right).
\end{equation}
The matrix $\Lambda$ can be constructed using different approaches, e.g.,
using an upwind operator that dissipates each characteristic wave 
based on the magnitude of its eigenvalue: 
\begin{equation}
\label{eq-SSCharacteristic}
\begin{aligned}
  & \fssc \:=\: \fs \:+\: 1/2 \, {\Yeig} \, |\lambda| \, {\Yeig}^{\top}
  \left(\wim - \wip \right)  \: , \\
  & \frac{\partial}{\partial q}\mathbf{f}\left(q\right) \:=\: \Yeig \, \lambda \,  {\Yeig}^{\top}, \\
  & \frac{\partial q}{\partial w} \:=\: \Yeig {\Yeig}^{\top},
\end{aligned}
\end{equation}
where $\lambda$ and $\Yeig$ are the diagonal matrix of the eigenvalues and 
the matrix of the eigenvectors, respectively. 
Note that the relation $\frac{\partial q}{\partial w} \:=\: \Yeig {\Yeig}^T$ 
is achieved by an appropriate scaling of the
rotation eigenvectors. Unless otherwise
noted, the entropy stable characteristic flux \eqref{eq-SSCharacteristic} is used in all test simulations
presented herein. In particular, we adopt the scaled eigenvectors introduced by Merriam \cite{Merriam1989}
which allow to introduce an artificial viscosity from the
viewpoint of numerical satisfaction of the second law of thermodynamics.  

\begin{remark}
In \cite{svard-no-penetration-bc-JSC} grid interfaces for entropy stable finite
difference schemes are studied and interface fluxes similar to 
\eqref{eq-ssflux} are proposed.
\end{remark}

\subsection{Viscous Terms}
\label{sec:viscousterms}
Using the SBP formalism (see, for instance,
\cite{ss-no-slip-wall-bc-parsani-nasa-tm-2014,CarpenterFisherSSDC2013NASA}),
the contribution of the viscous terms to the semi-discrete time derivative
of the entropy is
\begin{equation}
\label{eq:ssvisc2}
\wN^{\top} \Delta \fM^{(V)} =  \wN^{\top} \bmat \, \widehat{c}_{11} \Dmat \wN 
- \left(\dmat \wN\right)^{\top} \pmat \, \widehat{c}_{11} \left(\dmat \wN\right) .
\end{equation}
The last term is negative semi-definite. 
As with the continuous estimate given in \eqref{eq:continuousentropyestimate}, only the boundary term 
can produce a growth of the entropy (see RHS of \eqref{eq:ssvisc2}), and thus the approximation of the
viscous terms is entropy stable. (Entropy stable boundary conditions bound these terms.)

\subsubsection{Entropy stable viscous interface coupling}
Herein, the 3D entropy stable viscous interface coupling procedure
proposed by Parsani, Carpenter and Nielsen \cite{ss-no-slip-wall-bc-parsani-nasa-tm-2014} is used 
to patch interior interfaces for the compressible Navier--Stokes equations.
This treatment is based on a precise combination of local discontinuous Galerkin (LDG) and 
interior penalty (IP) approaches.


\section{Entropy stable solid wall boundary conditions for the semi-discrete system}\label{sec:ss-no-slip-bc}
An estimate for the time derivative of the entropy of an isolated element is derived,
followed by a derivation of entropy stable penalty terms that impose physical data on a viscous wall.\footnote{The same boundary conditions (without stability proofs) 
  could be used for almost any spatial discretization,
  including the family of DG methods, FR approaches, WENO schemes, FD 
  and FV methods.} 

\subsection{General approach for the entropy stability analysis of a SBP-based spatial
discretization}
Consider a single tensor product element and a spatially discontinuous collocation
discretization with $N=p+1$ solution points in each coordinate direction; the following element-wise 
matrices will be used:
\begin{equation}\label{eq:SBP-tensor-matrices}
\begin{gathered}
  \dmat_{x_1} = \left(\dmat_{N} \otimes I_{N} \otimes I_{N} \otimes I_{5}\right), \quad \cdots \quad
  \dmat_{x_3} = \left(I_{N} \otimes I_{N} \otimes \dmat_{N} \otimes I_{5}\right), \\ \\
  \pmat_{x_1} = \left(\pmat_{N} \otimes I_{N} \otimes I_{N} \otimes I_{5}\right), \quad \cdots \quad
  \pmat_{x_3} = \left(I_{N} \otimes I_{N} \otimes \pmat_{N} \otimes I_{5}\right), \quad  \\ \\
  \pmat_{x_1 x_2} = \left(\pmat_{N} \otimes \pmat_{N} \otimes I_{N} \otimes I_{5}\right), \quad \cdots \quad
  \pmat_{x_2 x_3} = \left(I_{N} \otimes \pmat_{N} \otimes \pmat_{N} \otimes I_{5}\right), \\ \\
  \pmatv = \pmat_{x_1 x_2 x_3} = \left(\pmat_{N} \otimes \pmat_{N} \otimes
  \pmat_{N} \otimes I_{5} \right), \\ \\
  \bmat_{x_1} = \left(\bmat_{N} \otimes I_{N} \otimes I_{N} \otimes I_{5}\right), \quad  \cdots \quad
  \bmat_{x_3} = \left(I_{N} \otimes I_{N} \otimes \bmat_{N} \otimes I_{5}\right), \\ \\
  \Delta_{x_1} = \left(\Delta_{N} \otimes I_{N} \otimes I_{N} \otimes I_{5}\right), \quad\cdots \quad
  \Delta_{x_3} = \left(I_{N} \otimes I_{N} \otimes \Delta_{N} \otimes I_{5}\right), \\ \\
\end{gathered}
\end{equation}
where $\dmat_N$, $\pmat_{N}$, $\Delta_N$, and $\bmat_{N}$ are the
one-dimensional (1D) SBP operators \cite{ss-no-slip-wall-bc-parsani-nasa-tm-2014}, and  $I_{N}$ is
the identity
matrix of dimension $N$. $I_{5}$ denotes the 
identity matrix of dimension five.\footnote{The 3D compressible
  Navier--Stokes equations form a system of five non-linear PDEs.} 
The subscripts in \eqref{eq:SBP-tensor-matrices} indicate the coordinate 
directions to which the operators apply (e.g., $\dmat_{x_1}$ is 
the differentiation matrix in the $x_1$ direction). Furthermore, we define the 
norm $\wN^{\top} \pmatv \qN = \left\|{S}\right\|^2_{\pmatv}$, where $\wN$ and $S$ 
are the vector of the entropy variables at the solution points and the
mathematical entropy of the system, respectively.
When applying these operators to the scalar entropy equation in space,
a hat will be used to differentiate the scalar operator from the full vector 
operator. For example,
\begin{equation}
 \widehat{\pmatv} = \left(\pmat_{N} \otimes \pmat_{N} \otimes \pmat_{N} \right).
\end{equation}

Within one tensor product element, the 3D compressible 
Navier--Stokes equations are discretized as
\begin{equation}\label{eq:semi-discrete-element}
  \begin{aligned}
    \frac{\partial \qN}{\partial t} & 
    + \pmat^{-1}_{x_1}\Delta_{x_1} \left(\fM_{1}^{(I)} - \fM_1^{(V)} \right)
    + \pmat^{-1}_{x_2}\Delta_{x_2} \left(\fM_{2}^{(I)} - \fM_2^{(V)} \right)  
    + \pmat^{-1}_{x_3}\Delta_{x_3} \left(\fM_{3}^{(I)} - \fM_3^{(V)} \right) \\
    & = \pmat^{-1}_{x_1} \left(\gb_{1}^{(B)} + \gb_{1}^{(In)}\right)
    + \pmat^{-1}_{x_2} \left(\gb_{2}^{(B)} + \gb_{2}^{(In)} \right)  
    + \pmat^{-1}_{x_3} \left(\gb_{3}^{(B)} + \gb_{3}^{(In)} \right),
\end{aligned}
\end{equation}
where the vector of the conservative variables is ordered as
\begin{equation}\label{eq:ordered-unknown-vector-element}
  \qN = \left(q\left(x_{(1)(1)(1)}\right)^{\top}, 
              q\left(x_{(1)(1)(2)}\right)^{\top}, \ldots, 
              q\left(x_{(N)(N)(N)}\right)^{\top}  \right) =
              \left({q_{(1)}}^{\top},{q_{(2)}}^{\top}, \ldots, {q_{(N^3)}}^{\top}\right),
\end{equation}
and $\fM_{i}^{(I)}$ and $\fM_i^{(V)}, i=1,2,3,$ are the inviscid and viscous grid 
fluxes, respectively.\footnote{Recall that the vectors with an over-bar are defined at the 
flux points.} The vectors $\gb_{i}^{(B)}, \, i=1,2,3,$  
enforce the boundary conditions, while $\gb_{i}^{(In)}, i=1,2,3,$ patch 
interfaces together. The derivatives
appearing in the viscous fluxes are also computed using the
operator $\dmat_{x_i}, \, i=1,2,3,$ defined in \eqref{eq:SBP-tensor-matrices}. 

As in the continuous case, we apply the 
entropy analysis to Equation \eqref{eq:semi-discrete-element} by multiplying with 
$\wN^{\top} \pmatv$ from the left. 
Moreover, we substitute to $\fM_{i}^{(I)}, \, i=1,2,3,$ the 
high-order accurate entropy consistent flux constructed according to Equation 
\eqref{eq:leflochfluxburgers3} with the two-point
entropy conservative flux presented in Section \ref{subsec:Ismail-Roe}. The final 
expression for the time derivative of the entropy in the element is then 
\begin{equation}\label{eq:estimate-no-slip-bc-1}
  \begin{aligned}
    \frac{d}{d t} & \left\|{S}\right\|^2_{\pmatv}
    \:+\: \mathbf{1}^{\top} \left(\widehat{\pmat}_{x_2 x_3} \widehat{\bmat}_{x_1} \mathbfcal{\FM}_1 
                                + \widehat{\pmat}_{x_1 x_3} \widehat{\bmat}_{x_2} \mathbfcal{\FM}_2 
                                + \widehat{\pmat}_{x_1 x_2} \widehat{\bmat}_{x_3} \mathbfcal{\FM}_3  \right) \\
    & - \,\wN^{\top} \left(\pmat_{x_2 x_3} \bmat_{x_1} \fM_1^{(V)}
                           + \pmat_{x_1 x_3} \bmat_{x_2} \fM_2^{(V)}
                           + \pmat_{x_1 x_2} \bmat_{x_3} \fM_3^{(V)} \right) 
    + \, \mathbf{DT} \\
    & = \,\wN^{\top} \left(\pmat_{x_2 x_3} \left(\gb_{1}^{(B)} + \gb_{1}^{(In)}\right)
    + \pmat_{x_1 x_3} \left(\gb_{2}^{(B)} + \gb_{2}^{(In)} \right)  
    + \pmat_{x_1 x_2} \left(\gb_{3}^{(B)} + \gb_{3}^{(In)} \right)\right).
\end{aligned}
\end{equation}

Note that in \eqref{eq:estimate-no-slip-bc-1} the bar over the flux vectors
could be safely removed because the contraction of 
\eqref{eq:semi-discrete-element} against 
$\wN^{\top} \pmatv$ leads only to the fluxes at the face flux points, which are coincident 
with the first and last solution points (see Figure \ref{fig:stencil}). This 
duality is needed to define unique operators and is important in proving 
entropy stability \cite{CarpenterFisherSSDC2013NASA}.
The quantity $\mathbf{DT}$ denotes a positive quadratic term in the first
derivative approximation of the solution:
\begin{equation}\label{eq:DT-no-slip-bc}
\begin{aligned}
\mathbf{DT} & = \sum_{i=1}^3 \sum_{j=1}^3 \left(\dmat_{x_i} \wN\right)^{\top}
\pmatv [\chatmat_{ij}]  \left(\dmat_{x_j} \wN\right) \\
& =
\begin{pmatrix}
  \dmat_{x_1} \, \wN \\ \dmat_{x_2} \, \wN  \\ \dmat_{x_3} \, \wN
\end{pmatrix}^{\top}
\begin{pmatrix}
  \pmatv[\chatmat_{11}] & \pmatv[\chatmat_{12}] & \pmatv[\chatmat_{13}] \\
  \pmatv[\chatmat_{21}] & \pmatv[\chatmat_{22}] & \pmatv[\chatmat_{23}] \\
  \pmatv[\chatmat_{31}] & \pmatv[\chatmat_{32}] & \pmatv[\chatmat_{33}] \\
\end{pmatrix}
\begin{pmatrix}
  \dmat_{x_1} \, \wN \\ \dmat_{x_2} \, \wN  \\
  \dmat_{x_3} \, \wN
\end{pmatrix},
\end{aligned}
\end{equation}
where $[\chatmat_{ij}]$ denotes a block diagonal matrix with blocks
corresponding to the viscous coefficients of each solution point.
The positive semi-definiteness of $\mathbf{DT}$ follows from the
positivity of the matrices $\widehat{c}_{ij}$ (see Appendix B.2 in 
\cite{FisherCarpenter2013JCPb} for the proof and
\ref{app:cij} herein for the expression of these matrices). The matrices
$\bmat_{x_i}, \, i=1,2,3,$ pick the interface terms in the respective directions
(i.e., for a high-order accurate scheme on a tensor product cell, they pick the
solution value at the nodes of the two ``opposite" faces).
Therefore, Equation \eqref{eq:estimate-no-slip-bc-1} is the semi-discrete form 
of Equation \eqref{eq:continuousentropyestimate}, which was obtained from the
analysis at the continuous level. 

\subsection{Entropy stability analysis for the solid wall boundary conditions}
\label{subsec:entropy-stable-wall-bc}
In this section, we focus now on the construction of an entropy stable penalty term for imposing
the solid wall boundary conditions for the compressible Navier--Stokes
equations. 

Without loss of generality, we study a hexahedral element with edge length
equal to one and we consider only the face plane $(0,x_2,x_3)$.
With these assumptions, Equation 
\eqref{eq:estimate-no-slip-bc-1} reduces to
\begin{equation}\label{eq:estimate-no-slip-bc-x1}
  \begin{aligned}
    \frac{d}{d t} \left\|{S}\right\|^2_{\pmatv} & -  
    \mathbf{1}^{\top} \widehat{\pmat}_{x_2 x_3} \widehat{\gmat}_{(1)}
    \mathbfcal{\FM}_1
     + \,\wN^{\top} \pmat_{x_2 x_3} \gmat_{(1)} \fM_1^{(V)}
     + \, \mathbf{DT}  
     & = \,\wN^{\top} \pmat_{x_2 x_3} \gmat_{(1)} \gb_{1}^{(B)}.
\end{aligned}
\end{equation}
The operators $\widehat{\gmat}_{(k)}$ and $\gmat_{(k)}$ are defined as
\begin{equation}\label{eq:operator-gmat}
  \widehat{\gmat}_{(k)} = \left(\eN_{k} \otimes I_{N} \otimes I_{N}\right), \quad
  \gmat_{(k)} = \left(\eN_{k} \otimes I_{N} \otimes I_{N} 
    \otimes I_{5}\right),
\end{equation}
where $$\eN_{k} = \left(0,0,\ldots,1,0,\ldots,0,0\right)^{\top}$$ is a vector of
length $N$ and has a non-zero element corresponding to the location $k$. 
Therefore, the operators $\widehat{\gmat}_{(k)}$ and $\gmat_{(k)}$ 
 pick out the nodal values of the solution or any flux vector at a specific
 plane according to the ordering introduced in
 \eqref{eq:ordered-unknown-vector-element}.
Herein, the face plane $(0,x_2,x_3)$ is characterized by the index
$k=1$. Thus, Equation 
\eqref{eq:estimate-no-slip-bc-x1} represents the 
contribution to the time derivative of the entropy of 
the boundary points that lie on the face plane $(0,x_2,x_3)$. 

In the remainder of this paper, we assume that the node with solution vector 
$q\left(x_{(1)(1)(1)}\right)=q_{(1)}$
(see expression \eqref{eq:ordered-unknown-vector-element}) lies on this face plane. This point will be used to derive entropy stable wall boundary conditions.
All numerical states associated 
to it will be identified with the subscript $\left(\cdot\right)_{(1)}$.

In estimate \eqref{eq:estimate-no-slip-bc-x1}, the penalty source term $\gb^{(B)}_{1}$ is composed of 
three design-order terms that weakly enforce the wall boundary conditions:
\begin{equation}\label{eq:SAT-no-slip-bc}
  \begin{aligned}
    \gb_{1}^{(B)} =  - \left(
  \fb_1^{(I)} - \fsscb_1\left(\qN,\gb^{(E)} \right)\right) 
 + \left( \fM_1^{(V)} - \fM_1^{(V,B)}\right) 
 + [M] \left(\wN-\gb^{(NS),Vel}\right).
\end{aligned}
\end{equation}
In each of the three contributions, the first component (the numerical state) 
is constructed from the numerical solution, while the second component (the boundary state) 
is constructed from a combination of the numerical solution and 
four independent components of physical boundary data.

The first term enforces the Euler no-penetration wall 
condition through the inviscid flux of the compressible Euler equations.
The boundary state is formed by constructing an entropy conservative flux based on the 
numerical state at boundary point, $q_{(1)}$,
and a manufactured boundary state given by the 
vector $g^{(E)}$:
\begin{equation}\label{eq:definition-gE}
\begin{aligned}
  g^{(E)} = 
  \begin{pmatrix}
    1 & 0 & 0 & 0 & 0 \\
    0 &-1 & 0 & 0 & 0 \\
    0 & 0 & 1 & 0 & 0 \\
    0 & 0 & 0 & 1 & 0 \\
    0 & 0 & 0 & 0 & 1 
\end{pmatrix}
q_{(1)}^{\top} & = 
\left(\rho_{(1)},
-\left(\rho u_1\right)_{(1)},
\left(\rho u_2\right)_{(1)}, 
\left(\rho u_3\right)_{(1)}, 
\left(\rho E\right)_{(1)} \right)^{\top} \\
& = 
\left(q_{(1)}(1),
-q_{(1)}(2),
q_{(1)}(3), 
q_{(1)}(4), 
q_{(1)}(5) \right)^{\top}.
\end{aligned}
\end{equation}

The second term in \eqref{eq:SAT-no-slip-bc} enforces the heat entropy flow boundary condition \eqref{eq:temperature-continous},
facilitated by manufacturing a boundary viscous flux $\overline{f}_{1}^{(V,B)}$. 
Define the component of the gradient of the entropy variables in the numerical state as
\begin{subequations}
\begin{equation}
\Theta_{x_1, (1)} \:=\: \left[
  {\Theta}_{x_1, (1)}(1),
  {\Theta}_{x_1, (1)}(2),
  {\Theta}_{x_1, (1)}(3),
  {\Theta}_{x_1, (1)}(4),
  {\Theta}_{x_1, (1)}(5) \right]^{\top},
\end{equation}
\begin{equation}
\Theta_{x_2, (1)} \:=\: \left[
  {\Theta}_{x_2, (1)}(1),
  {\Theta}_{x_2, (1)}(2),
  {\Theta}_{x_2, (1)}(3),
  {\Theta}_{x_2, (1)}(4),
  {\Theta}_{x_2, (1)}(5) \right]^{\top},
\end{equation}
\begin{equation}
\Theta_{x_3, (1)} \:=\: \left[
  {\Theta}_{x_3, (1)}(1),
  {\Theta}_{x_3, (1)}(2),
  {\Theta}_{x_3, (1)}(3),
  {\Theta}_{x_3, (1)}(4),
  {\Theta}_{x_3, (1)}(5) \right]^{\top},
\end{equation}
\end{subequations}
where ${\Theta}_{x_i, (1)}(j)$ denotes the derivative of the $j$-$th$
entropy variable in the $i$ direction.
Next, specify the value of
$\mathtt{g}(t)$, 
the externally provided bounded function given by \eqref{eq:temperature-continous}.
Finally, define the manufactured component of the gradient in the normal
direction, $\widetilde{\Theta}_{x_1}$, as
\begin{equation}\label{eq:theta-tilde}
\widetilde{\Theta}_{x_1} \:=\: \left[
  {\Theta}_{x_1, (1)}(1),
  {\Theta}_{x_1, (1)}(2),
  {\Theta}_{x_1, (1)}(3),
  {\Theta}_{x_1, (1)}(4),
  \widetilde{\Theta}_{x_1}(5) \right]^{\top},
\end{equation}
where $\widetilde{\Theta}_{x_1}(5)$ is computed as
\begin{equation}\label{eq:gradient-viscous-bc-gradienT}
  \widetilde{\Theta}_{x_1}(5) = -\mathtt{g}(t) \,  w_{(1)}(5) =
  \frac{\mathtt{g}(t)}{T_{(1)}}.
\end{equation}
With these definitions, the manufactured viscous flux $\overline{f}_1^{(V,B)}$ is constructed as
\begin{equation}\label{eq:viscous-flux-bc}
  \overline{f}_{1}^{(V,B)} = \widehat{c}_{11} \, \widetilde{\Theta}_{x_1}
    + \widehat{c}_{12} \, \Theta_{x_2,(1)} + \widehat{c}_{13} \, \Theta_{x_3,(1)},
\end{equation}
where the matrices $\widehat{c}_{1j}$, $j=1,2,3$, are calculated using the numerical solution.
As we will show later, the boundary flux, $\overline{f}_{1}^{(V,B)}$,
constructed using \eqref{eq:viscous-flux-bc} 
will yield a mimetic contribution to the time
derivative of the entropy.
Note that for an adiabatic wall $\mathtt{g}(t) = 0$, and from expression 
\eqref{eq:gradient-viscous-bc-gradienT} we get $\widetilde{\Theta}_{x_1}(5) = 0$.

The third term in \eqref{eq:SAT-no-slip-bc} enforces the no-slip wall
(Dirichlet) boundary conditions ($u_1=u_2=u_3=0$) through 
a standard SAT approach. 
The manufactured boundary state $g^{(NS),Vel}$ is defined in terms of entropy variables as 
\begin{equation}\label{eq:def-data-NS}
  g^{(NS),Vel} = \left(w_{(1)}(1), 0, 0, 0, w_{(1)}(5)\right)^{\top}, 
\end{equation}
where, as usual, $w_{(1)}(1)$ and $w_{(1)}(5)$ are the first and the fifth components of 
the entropy vector constructed from the numerical solution.
Three boundary conditions are imposed in Equation \eqref{eq:def-data-NS};
all velocity components are set to zero at the wall. This is immediately clear by recalling that the 
entropy variables for the compressible Navier--Stokes equations are defined
as
$$
  w = \left( \frac{h}{T} - s - \frac{u_i u_i}{2T}, \frac{u_1}{T},
  \frac{u_2}{T},\frac{u_3}{T},-\frac{1}{T}\right)^{\top}.
$$
Note that the no-slip conditions are not used to define the first component of 
$g^{(NS),Vel}$, as the matrices $\widehat{c}_{ij},
  i,j=1,2,3$, have zeros on the first row and column (see
  \ref{app:cij}).\footnote{Currently, there are no diffusion terms in the
  equation that describes the conservation of mass.}
The matrix $[M]$ in \eqref{eq:SAT-no-slip-bc} is a block diagonal matrix 
with $N^3$ five-by-five blocks\footnote{$N^3=\left(p+1\right)^3$ is
the number of solution points within a three-dimensional tensor product cell.} which are defined as 
\begin{equation}\label{eq:IP-matrix-ss-wall-bc-1}
  M = -\frac{\alpha^{(B)}}{\left(\pmat_{x_1}\right)_{(1)(1)}} H \, 
  \widetilde{c}_{11} \, H, \quad H = \textrm{diag}(1,1,1,1,0), \quad
  \alpha^{(B)} > 0.
\end{equation}
The matrix $\widetilde{c}_{11}$ has the functional form of the usual 
symmetric positive semi-definite matrix $\widehat{c}_{11}$ defined in
\ref{app:cij}. This matrix has to be constructed using a set of primitive variables that is independent of
the numerical solution at all times. For example, for external flows, 
$\widetilde{c}_{11}$ can be constructed using the externally provided data at the far-field
(e.g., $(\rho_{\infty},|\vec{u}_{\infty}|,|\vec{u}_{\infty}|,
|\vec{u}_{\infty}|, T_{\infty})$).\footnote{In a general framework, the matrix $M$ is built
using the five-by-five matrix $\widetilde{c}_{ii}$ where the index $i$ 
denotes the normal direction to the wall.} 
The coefficient $\alpha^{(B)}$ in \eqref{eq:IP-matrix-ss-wall-bc-1} is used to modify
the strength of the SAT penalty term, and can be specified by the user. The factor 
$\left(\pmat_{x_1}\right)_{(1)(1)} > 0$ in the denominator is the first diagonal
element of the operator $\pmat_{x_1}$\footnote{Recall that
the diagonal element of any operator $\pmat$ are equal to the spacing between
flux points.}
and is introduced to get the correct dimensions. This factor is also
important because 
 it allows to achieve the correct asymptotic order of accuracy
 and yields an increase in the 
strength of $M$ with increased resolution.\footnote{This dependence
on the mesh size in the normal direction to the face is similar to that
of the interior penalty approach used in finite element methods.}


Summarizing Equation \eqref{eq:SAT-no-slip-bc}, the penalty at the face point is the sum
of three terms:
\begin{itemize}
  \item the difference between inviscid flux and the entropy consistent flux 
    at the node in the normal direction;
  \item the difference between the internal viscous flux and a boundary
    viscous flux at the node in the normal direction;
  \item the difference between the solution (in entropy variables) at the node
    and the data imposed at boundary, multiplied by the matrix $M$.
\end{itemize}

The penalty term \eqref{eq:SAT-no-slip-bc} contracted with the entropy
variables and simplified, yields the expression 
\begin{equation}\label{eq:penalty-bc-contribution-entropy}
\begin{aligned}
  RHS  = &- \wN^{\top}\pmat_{x_2 x_3} \gmat_{(1)} \left(
  \fb_1^{(I)} - \fsscb_1\left(\qN,\gb^{(E)} \right) \right) \\
  & + \wN^{\top} \pmat_{x_2 x_3} \gmat_{(1)} \left(
  \fM_1^{(V)} - 
\fM_1^{(V,B)}\right) \\
& + \wN^{\top} \pmat_{x_2 x_3}  \gmat_{(1)}[M]
  \left(\wN-\gb^{(NS),Vel}\right).
\end{aligned}
\end{equation}

The entropy stability of the penalty source term \eqref{eq:estimate-no-slip-bc-x1}
defined by Equation \eqref{eq:SAT-no-slip-bc} is demonstrated in the following
theorems. First, the inviscid term is proven to be entropy conservative and then
entropy stable, if dissipation is added. Next, the second term, which specifies 
the thermal condition, is proven to
be bounded by physical data provided by the user. Finally, the third term, which
specifies the no-slip boundary conditions, is proven to be entropy stable. 

\begin{theorem}\label{th:euler-flipping-sign}
The penalty inviscid flux contribution in Equation
\eqref{eq:SAT-no-slip-bc} is entropy conservative if the
vector $g^{(E)}$ is defined as in \eqref{eq:definition-gE}.
\end{theorem}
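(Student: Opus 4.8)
The plan is to show that contracting the inviscid penalty term against the entropy variables at the boundary node produces a contribution that exactly cancels the inviscid boundary flux $\mathbf{1}^{\top}\widehat{\pmat}_{x_2x_3}\widehat{\gmat}_{(1)}\mathbfcal{\FM}_1$ appearing in estimate \eqref{eq:estimate-no-slip-bc-x1}, leaving no net entropy production from the convective terms. First I would isolate the relevant piece of the right-hand side \eqref{eq:penalty-bc-contribution-entropy}, namely $-\wN^{\top}\pmat_{x_2x_3}\gmat_{(1)}\bigl(\fb_1^{(I)}-\fsscb_1(\qN,\gb^{(E)})\bigr)$, and observe that the operators $\pmat_{x_2x_3}$ and $\gmat_{(1)}$ reduce this to a sum over the face nodes of point-wise contractions $w_{(1)}^{\top}\bigl(f_1^{(I)}(q_{(1)})-\mathbf{f}^{sr}(q_{(1)},g^{(E)})\bigr)$, weighted by the positive quadrature weights $\widehat{\pmat}_{x_2x_3}$.

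The key computation is to evaluate this point-wise contraction using the entropy-conservation property \eqref{eq:2ptentropyconsistent} of the two-point flux $\mathbf{f}^{sr}$. Writing $w_{(1)}$ for the entropy variables at the numerical state and $w^{(E)}$ for the entropy variables corresponding to the manufactured state $g^{(E)}$, the entropy-conservative flux satisfies $\bigl(w_{(1)}-w^{(E)}\bigr)^{\top}\mathbf{f}^{sr}(q_{(1)},g^{(E)}) = \psi_1(q_{(1)})-\psi_1(g^{(E)})$. I would then add and subtract $w^{(E)\top}\mathbf{f}^{sr}$ appropriately and use the definitions $F_1 = w^{\top}f_1^{(I)}-\psi_1$ (Eq.~\eqref{eq:GodFpotential}) together with $w_{(1)}^{\top}f_1^{(I)}(q_{(1)}) = F_1(q_{(1)})+\psi_1(q_{(1)})$ to re-express everything in terms of entropy fluxes and potentials. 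The crucial structural fact is that $g^{(E)}$ defined by \eqref{eq:definition-gE} merely flips the sign of the $x_1$-momentum; since the inviscid flux $f_1^{(I)}$ in \eqref{eq:inv-flux} and the potential flux $\psi_1 = \rho u_1 R$ are both odd in $u_1$ (while $\rho$, $p$, $H$, $E$ are even), one gets $F_1(g^{(E)}) = -F_1(q_{(1)})$ and $\psi_1(g^{(E)}) = -\psi_1(q_{(1)})$, and moreover $w^{(E)}$ is exactly $w_{(1)}$ with the sign of its second component reversed. I expect the algebra to collapse to $w_{(1)}^{\top}\bigl(f_1^{(I)}(q_{(1)})-\mathbf{f}^{sr}(q_{(1)},g^{(E)})\bigr) = F_1(q_{(1)})$, or a closely related identity, so that the penalty contribution precisely matches (with opposite sign) the boundary entropy-flux term in \eqref{eq:estimate-no-slip-bc-x1}; the net inviscid boundary contribution then telescopes to zero, which is the meaning of "entropy conservative" here.

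The main obstacle will be bookkeeping the exact algebraic form of the two-point Ismail--Roe flux $\mathbf{f}^{sr}$ evaluated at the pair $(q_{(1)}, g^{(E)})$ and verifying that its symmetry under reflection of $u_1$ is consistent with the parametrization \eqref{eq:roeentropyconsistentflux} — in particular checking that the logarithmic means and the $\hat\rho$, $\hat h$ definitions behave correctly when only the momentum sign changes while $\rho$ and $T$ are held fixed (they do, since those means depend on $\rho$ and $T$ only). A secondary subtlety is ensuring the quadrature operators $\pmat_{x_2x_3}$, $\gmat_{(1)}$ on the left of the contraction are matched exactly by the scalar operators $\widehat{\pmat}_{x_2x_3}$, $\widehat{\gmat}_{(1)}$ acting on $\mathbfcal{\FM}_1$, so that the cancellation is term-by-term over the face nodes and not merely in aggregate; this follows from the tensor-product structure in \eqref{eq:SBP-tensor-matrices} and the duality of solution and flux points noted after \eqref{eq:estimate-no-slip-bc-1}. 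Apart from these, the argument is a direct application of \eqref{eq:2ptentropyconsistent}, \eqref{eq:GodFpotential}, and the oddness of $f_1^{(I)}$ and $\psi_1$ in $u_1$.
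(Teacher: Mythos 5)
Your proposal is correct and follows essentially the same route as the paper: contract the inviscid penalty with the entropy variables at the boundary node, invoke $F_1 = w^{\top} f_1^{(I)} - \psi_1$ from \eqref{eq:GodFpotential}, and use the mirror state $g^{(E)}$ to reduce $w_{(1)}^{\top}\fs$ to $\psi_1(q_{(1)})$ so the boundary entropy flux cancels exactly. The only difference is that you spell out, via \eqref{eq:2ptentropyconsistent} and the oddness of $\psi_1$ and $w(2)$ in $u_1$ together with $\hat{u}_1=0$ in \eqref{eq:roeentropyconsistentflux}, the step the paper compresses into ``evaluating the entropy consistent flux \ldots yields the desired result.''
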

\begin{proof}
  The inviscid contribution of the boundary node, $\Upsilon^{(I)}$, to the 
time derivative of the entropy can be written as 
(see Equations \eqref{eq:estimate-no-slip-bc-x1} and 
\eqref{eq:penalty-bc-contribution-entropy})
\begin{equation}
\begin{aligned}
  \Upsilon^{(I)} & = \left(\pmat_{x_2 x_3}\right)_{(1)(1)} \overline{F}_1 - 
  w_1^{\top} \left(\pmat_{x_2 x_3}\right)_{(1)(1)} \left[\overline{f}_1\left(q_{(1)}\right) -
  f_1^{sr}\left(q_{(1)},g^{(E)}\right)\right],
\end{aligned}
\end{equation}
where $\left(\pmat_{x_2 x_3}\right)_{(1)(1)} \neq 0$.
Substituting the expression for the entropy flux $\overline{F}_1$ (i.e., 
Equation \eqref{eq:GodFpotential} with $i=1$) and evaluating the entropy 
consistent flux $f_1^{sr}$ using $q_{(1)}$ and $g^{(E)}$ yields the desired
result
\begin{equation}
  \begin{aligned}
    \Upsilon^{(I)} & = \left(\pmat_{x_2 x_3}\right)_{(1)(1)}\left[w_{(1)}^{\top}
    \overline{f}^{(I)}_1\left(q_{(1)}\right) - \psi_1 
    -  w_{(1)}^{\top} \overline{f}^{(I)}_1\left(q_1\right) + w_{(1)}^{\top}
  f_1^{sr}\left(q_{(1)},g^{(E)}\right)\right] \\
  & = \left(\pmat_{x_2 x_3}\right)_{(1)(1)}\left[- \psi_1 + w_{(1)}^{\top}
f_1^{sr}\left(q_{(1)},g^{(E)}\right)\right] = 0.
\end{aligned}
\end{equation}
\end{proof}

\begin{corollary}\label{cor:entropy-stable-inviscid-penalty}
The penalty inviscid flux contribution in Equation
\eqref{eq:SAT-no-slip-bc} is entropy stable if the
vector $g^{(E)}$ is defined as in \eqref{eq:definition-gE} and $\fsscb$ is
replaced by the entropy stable flux $\mathbf{f}^{ssr}$ defined in \eqref{eq-SSCharacteristic}.
\end{corollary}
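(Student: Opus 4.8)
The plan is to reduce the statement to Theorem~\ref{th:euler-flipping-sign} and then to bound the extra, purely dissipative, contribution produced by swapping the entropy conservative flux for the characteristic entropy stable flux $\mathbf{f}^{ssr}$ of \eqref{eq-SSCharacteristic}. Writing $w^{(E)}\:=\:w(g^{(E)})$ and recalling from \eqref{eq-ssflux}--\eqref{eq-SSCharacteristic} that, in the orientation in which the interior node $q_{(1)}$ is the ``$(+)$'' state, $\mathbf{f}_1^{ssr}\!\left(q_{(1)},g^{(E)}\right) = \mathbf{f}_1^{sr}\!\left(q_{(1)},g^{(E)}\right) + \Lambda\left(w_{(1)}-w^{(E)}\right)$ with $\Lambda\:=\:-\tfrac12\,\Yeig\,|\lambda|\,\Yeig^{\top}$ negative semi-definite, I would repeat verbatim the manipulations in the proof of Theorem~\ref{th:euler-flipping-sign} (which only use $F_1=w^{\top}f_1^{(I)}-\psi_1$ from \eqref{eq:GodFpotential} and the coincidence of the first flux point with the first solution point, Figure~\ref{fig:stencil}). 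This reduces the inviscid contribution of the boundary node, $\Upsilon^{(I)}_{ss}$, to $\left(\pmat_{x_2 x_3}\right)_{(1)(1)}\!\left[-\psi_1+w_{(1)}^{\top}\mathbf{f}_1^{ssr}\!\left(q_{(1)},g^{(E)}\right)\right]$. Since Theorem~\ref{th:euler-flipping-sign} already established $w_{(1)}^{\top}\mathbf{f}_1^{sr}\!\left(q_{(1)},g^{(E)}\right)=\psi_1$, this collapses to $\Upsilon^{(I)}_{ss}=\left(\pmat_{x_2 x_3}\right)_{(1)(1)}\,w_{(1)}^{\top}\Lambda\left(w_{(1)}-w^{(E)}\right)$, and the whole task becomes showing that this quantity is non-positive.

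The sign of this residual term does \emph{not} follow from negative semi-definiteness of $\Lambda$ alone, so the special structure of the wall penalty must be exploited. I would split $w_{(1)}=\tfrac12\left(w_{(1)}+w^{(E)}\right)+\tfrac12\left(w_{(1)}-w^{(E)}\right)$, so that $w_{(1)}^{\top}\Lambda\left(w_{(1)}-w^{(E)}\right)=\tfrac12\left(w_{(1)}-w^{(E)}\right)^{\top}\Lambda\left(w_{(1)}-w^{(E)}\right)+\tfrac12\left(w_{(1)}+w^{(E)}\right)^{\top}\Lambda\left(w_{(1)}-w^{(E)}\right)$. The first (quadratic) term is $\leq 0$ because $\Lambda$ is negative semi-definite. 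For the second (cross) term I would use the reflection structure in \eqref{eq:definition-gE}: by the definition \eqref{eq:entropy-variables-cnse} of the entropy variables, reversing the wall-normal momentum flips only the second entropy variable $u_1/T$, hence $w_{(1)}-w^{(E)}$ is a scalar multiple of the second Cartesian unit vector while $w_{(1)}+w^{(E)}$ has a vanishing second component. It therefore suffices to check that $\Lambda\left(w_{(1)}-w^{(E)}\right)$ is again proportional to that same unit vector, i.e. that $\Yeig\,|\lambda|\,\Yeig^{\top}$ leaves the one-dimensional ``wall-normal-momentum'' subspace invariant; granting this, the cross term drops out identically and $\Upsilon^{(I)}_{ss}=\tfrac12\left(\pmat_{x_2 x_3}\right)_{(1)(1)}\left(w_{(1)}-w^{(E)}\right)^{\top}\Lambda\left(w_{(1)}-w^{(E)}\right)\leq 0$.

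The invariance of that subspace holds once the dissipation matrix is evaluated at a symmetric (or Roe/Ismail--Roe) average of $q_{(1)}$ and $g^{(E)}$, for which the mean wall-normal velocity vanishes: at zero normal velocity the Euler eigenvalues in the normal direction are $\{0,0,0,c,-c\}$, the entropy and the two shear eigenvectors carry no wall-normal-momentum component, and the two acoustic eigenvectors, which do, enter $\Yeig\,|\lambda|\,\Yeig^{\top}$ with the common weight $c$ and with opposite signs, so the wall-normal-momentum direction is mapped into itself. Combining the three steps gives $\Upsilon^{(I)}_{ss}\leq 0=\Upsilon^{(I)}$, so the inviscid penalty with $\mathbf{f}^{ssr}$ contributes non-positively to the entropy rate and is at least as dissipative as the entropy conservative penalty of Theorem~\ref{th:euler-flipping-sign}, with equality precisely when the wall-normal velocity (equivalently the jump $w_{(1)}-w^{(E)}$) is zero. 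Tensor-product algebra then lifts this single-face estimate to all six faces of the element, and summing over elements requires no new work.

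The main obstacle is the cross-term step --- concretely, verifying that the characteristic dissipation $\Yeig\,|\lambda|\,\Yeig^{\top}$, built with the Merriam scaling that enforces $\partial q/\partial w=\Yeig\,\Yeig^{\top}$ in \eqref{eq-SSCharacteristic} and evaluated at a reflection-symmetric average state, is invariant under the wall-normal reflection $\mathrm{diag}(1,-1,1,1,1)$. If one prefers not to spell out that eigenvector computation, an equivalent route is to feed the pair $\left(q_{(1)},g^{(E)}\right)$ directly into the two-point entropy stable identity \eqref{eq:entropyconsistentfluxdiss1}, together with the reflection relations $w^{(E)}=\mathrm{diag}(1,-1,1,1,1)\,w_{(1)}$ and $\psi(g^{(E)})=-\psi_1$; the usual two-sided interface bound then collapses into the single boundary term and delivers the same negative semi-definite estimate --- the wall analogue of the entropy stable interface construction of \cite{svard-no-penetration-bc-JSC}.
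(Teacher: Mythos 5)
Your proposal is correct, and it is worth noting that the paper itself offers no proof of this corollary: it is asserted as an immediate consequence of Theorem~\ref{th:euler-flipping-sign} together with the dissipativity of the interface flux expressed in \eqref{eq:entropyconsistentfluxdiss1}. Your argument supplies exactly the step that this implicit reasoning skips. After reducing, as you do, to $\Upsilon^{(I)}_{ss}=\left(\pmat_{x_2x_3}\right)_{(1)(1)}\,w_{(1)}^{\top}\Lambda\left(w_{(1)}-w^{(E)}\right)$, the sign is genuinely not a consequence of $\Lambda\preceq 0$ alone, because the boundary term contracts the dissipation against $w_{(1)}$ rather than against the jump $w_{(1)}-w^{(E)}$ to which \eqref{eq:entropyconsistentfluxdiss1} applies; since the cross term is $O(u_1)$ while the quadratic term is $O(u_1^2)$, it would dominate near the wall unless it vanishes identically. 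Your symmetric/antisymmetric splitting, together with the observation that $w_{(1)}-w^{(E)}$ is supported on the normal-momentum component while $w_{(1)}+w^{(E)}$ annihilates it, and that $\Yeig|\lambda|\Yeig^{\top}$ preserves that one-dimensional subspace at a state with zero normal velocity (the two acoustic modes entering with equal Barth/Merriam scalings and weight $c$), is the right mechanism and is verified correctly. Your alternative route via \eqref{eq:entropyconsistentfluxdiss1} and the mirror symmetry is equivalent, since splitting the total fictitious-interface production evenly between the element and its reflection again requires $\Lambda\left(w_{(1)}-w^{(E)}\right)$ to be reflection-odd. The one caveat you rightly flag should be stated as a hypothesis of the corollary rather than an implementation detail: the estimate holds only if the dissipation operator in \eqref{eq-SSCharacteristic} is evaluated at a reflection-symmetric average of $q_{(1)}$ and $g^{(E)}$ (so that the mean normal velocity vanishes); if $\Lambda$ were built from the interior state alone, the cross term would survive and the proof, as well as the claimed sign, would fail.
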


\begin{remark}
A result similar to Corollary \ref{cor:entropy-stable-inviscid-penalty} is given by Sv{\"a}rd and {\"O}zcan 
\cite{svard-no-penetration-bc-JSC} in the context of high order entropy stable finite difference
schemes for the compressible Euler equations. Therein, an entropy dissipative
Euler no-penetration boundary treatment is proposed to bound the
time derivative of the entropy.
\end{remark}

Using Theorem \ref{th:euler-flipping-sign} we are left only with the viscous
contributions:
\begin{equation}\label{eq:estimate-no-slip-bc-viscous}
  \begin{aligned}
    \frac{d}{d t} \left\|{S}\right\|^2_{\pmatv} \:+\:  \wN^{\top}
    \pmat_{x_2 x_3} \, \gmat_{(1)} \,
    \fM_1^{(V)} +  \mathbf{DT}  
    \leq & +  \wN^{\top} \pmat_{x_2 x_3} \,  \gmat_{(1)} \left(
  \fM_1^{(V)}  -
\fM_1^{(V,B)} \right)  \\
    & + \wN^{\top}
    \pmat_{x_2 x_3} \, \gmat_{(1)} [M] \left(\wN-\gb^{(NS),Vel}\right).
\end{aligned}
\end{equation}

\begin{theorem}
  The viscous penalty terms in \eqref{eq:SAT-no-slip-bc},
  $$\gmat_{(1)} \left(
  \fM_1^{(V)} - \fM_1^{(V,B)}\right)  
 + \mathcal{G}_{(1)} [M] \left(\wN-\gb^{(NS),Vel}\right), $$ 
 are entropy
 stable for any value of $\mathtt{g}(t)$ and any five-by-five matrix 
$M$ as defined in \eqref{eq:IP-matrix-ss-wall-bc-1}.
\end{theorem}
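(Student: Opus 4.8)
The plan is to start from inequality~\eqref{eq:estimate-no-slip-bc-viscous}, which already absorbs the inviscid cancellation of Theorem~\ref{th:euler-flipping-sign}. The term $\wN^{\top}\pmat_{x_2 x_3}\gmat_{(1)}\fM_1^{(V)}$ occurs on both sides and cancels, and the quadratic $\mathbf{DT}$ is positive semi-definite, so it may be discarded; this reduces the claim to the single inequality
\[
  \frac{d}{dt}\left\|S\right\|^2_{\pmatv} \:\leq\: -\,\wN^{\top}\pmat_{x_2 x_3}\,\gmat_{(1)}\,\fM_1^{(V,B)} \:+\: \wN^{\top}\pmat_{x_2 x_3}\,\gmat_{(1)}\,[M]\left(\wN-\gb^{(NS),Vel}\right).
\]
Since $\gmat_{(1)}$ merely picks out the wall nodes and $\pmat_{x_2 x_3}$ attaches the strictly positive face quadrature weights, it is enough to bound, at the representative wall node $q_{(1)}$, the two scalars $-w_{(1)}^{\top}\fM_1^{(V,B)}$ and $w_{(1)}^{\top}M\left(w_{(1)}-g^{(NS),Vel}\right)$, and then sum over the face points.

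For the first scalar I would substitute the manufactured flux~\eqref{eq:viscous-flux-bc} together with~\eqref{eq:theta-tilde}--\eqref{eq:gradient-viscous-bc-gradienT} and exploit the block structure of the matrices $\widehat{c}_{1j}$ from~\ref{app:cij}. The decisive fact --- the semi-discrete mirror of the continuous contraction~\eqref{eq:contraction-fv-continous} --- is that left-contraction with the entropy vector $w$ annihilates the viscous-stress block (the velocity--stress contributions cancel pairwise), so that $w_{(1)}^{\top}\widehat{c}_{12}=w_{(1)}^{\top}\widehat{c}_{13}=0$ and $w_{(1)}^{\top}\widehat{c}_{11}$ retains only its heat-conduction entry. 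With $\widetilde{\Theta}_{x_1}(5)=\mathtt{g}(t)/T_{(1)}$ this collapses to $-w_{(1)}^{\top}\fM_1^{(V,B)}=\kappa_{(1)}\,\mathtt{g}(t)$ up to the positive face weight --- exactly the discrete analogue of the boundary term $\kappa\,(\partial T/\partial x_1)/T$ in~\eqref{eq:contraction-fv-continous}, and a bounded function of time because $\mathtt{g}(t)\in L^2\cap L^{\infty}$ and $\kappa=\kappa(T)$ is bounded whenever $T$ is. This is precisely what makes the thermal penalty mimetic of the continuous estimate.

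For the second scalar I would insert $M$ from~\eqref{eq:IP-matrix-ss-wall-bc-1} and $g^{(NS),Vel}$ from~\eqref{eq:def-data-NS}. The difference $w_{(1)}-g^{(NS),Vel}=\left(0,w_{(1)}(2),w_{(1)}(3),w_{(1)}(4),0\right)^{\top}$ is unchanged by $H=\mathrm{diag}(1,1,1,1,0)$, and the surviving density entry of $H\,w_{(1)}$ is annihilated because $\widetilde{c}_{11}$, like $\widehat{c}_{11}$, has a vanishing first row and column; hence $w_{(1)}^{\top}M\left(w_{(1)}-g^{(NS),Vel}\right)$ reduces to $-\,\frac{\alpha^{(B)}}{(\pmat_{x_1})_{(1)(1)}}\,v^{\top}\widetilde{c}_{11}\,v$ with $v=w_{(1)}-g^{(NS),Vel}$, which is $\le 0$ since $\alpha^{(B)}>0$, $(\pmat_{x_1})_{(1)(1)}>0$, and $\widetilde{c}_{11}$ is symmetric positive semi-definite --- irrespective of the solution-independent data used to build $\widetilde{c}_{11}$. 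Combining the two bounds and summing over the wall nodes gives $\frac{d}{dt}\|S\|^2_{\pmatv}\le \sum_{j\in\mathrm{wall}}\left(\widehat{\pmat}_{x_2 x_3}\right)_{(j)(j)}\kappa_{(j)}\,\mathtt{g}_{(j)}(t)$, a bounded quantity for every admissible $\mathtt{g}(t)$ and $M$, which is the asserted entropy stability.

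I expect the main obstacle to be the careful verification, from the explicit forms in~\ref{app:cij}, of the coefficient-matrix identities invoked above: that $w^{\top}\widehat{c}_{1j}$ isolates a single heat-conduction coefficient when $j=1$ and vanishes for $j=2,3$, and that $\widehat{c}_{11}$ and $\widetilde{c}_{11}$ have zero first row and column. A secondary point to state explicitly is that the boundedness of the final estimate (through $\kappa$) rests on the standing assumptions $\rho,T>0$ with $T$ bounded, consistent with the noted caveat that the entropy proof is not a full measure of non-linear stability.
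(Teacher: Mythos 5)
Your proposal is correct and follows essentially the same route as the paper's proof: cancel the interior viscous flux against its penalty counterpart, show that contracting the manufactured flux $\fM_1^{(V,B)}$ with the entropy variables collapses (via the identities $w^{\top}\widehat{c}_{12}=w^{\top}\widehat{c}_{13}=0$ and $w^{\top}\widehat{c}_{11}=(0,0,0,0,-T\kappa)$) to the bounded heat-entropy-flow term $\left(\pmat_{x_2x_3}\right)_{(1)(1)}\kappa\,\mathtt{g}(t)$, and show the no-slip SAT term is dissipative. The only (minor) deviation is in the last step: the paper uses the generic splitting $w^{\top}M(w-g)=\tfrac12 w^{\top}Mw-\tfrac12 g^{\top}Mg+\tfrac12(w-g)^{\top}M(w-g)$ and argues each piece is bounded, whereas you observe that the structure of $H$, $g^{(NS),Vel}$, and the zero first row/column of $\widetilde{c}_{11}$ make the cross terms vanish exactly, so the SAT contribution equals $(w-g)^{\top}M(w-g)\le 0$ outright --- a slightly sharper statement of the same fact.
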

\begin{proof}
  Clearly, the viscous flux term on the left-hand-side (LHS) of 
  \eqref{eq:estimate-no-slip-bc-viscous} is balanced by the same
  term on the RHS. Therefore, expression
\eqref{eq:estimate-no-slip-bc-viscous} reduces to
\begin{equation}\label{eq:estimate-no-slip-bc-viscous-2}
  \begin{aligned}
    \frac{d}{d t} \left\|{S}\right\|^2_{\pmatv} + \mathbf{DT}  
    \leq & - \wN^{\top} \pmat_{x_2 x_3} \, \gmat_{(1)} \, \fM_1^{(V,B)}  
    + \wN^{\top}
    \pmat_{x_2 x_3} \, \gmat_{(1)} \, [M] \left(\wN-\gb^{(NS),Vel}\right).
\end{aligned}
\end{equation}
  The contraction $-\wN^{\top} \pmat_{x_2 x_3}  \, \gmat_{(1)} \,
  \fM_1^{(V,B)}$ with $\overline{f}_1^{(V,B)}$ defined as in \eqref{eq:viscous-flux-bc}
  yields the following nodal contribution
  \begin{equation}\label{eq:mimetic-fvbc-1}
   -w_{(1)}^{\top} \left(\pmat_{x_2
   x_3}\right)_{(1)(1)}  \overline{f}_1^{(V,B)} = \left(\pmat_{x_2
   x_3}\right)_{(1)(1)} \, \kappa \, \mathtt{g}(t).
  \end{equation} 
  Since $\mathtt{g}(t)$ is a known bounded function (i.e., $L^2 \cap L^{\infty}$)
  expression \eqref{eq:mimetic-fvbc-1} 
  is also bounded.
  We highlight that for an adiabatic wall $\mathtt{g}(t)=0$ and,
  consequently, the viscous flux penalty in \eqref{eq:SAT-no-slip-bc} conserves
  the entropy (as it should) because the heat flux is zero.\footnote{$\mathtt{g}(t)=0$ in Equation
  \eqref{eq:mimetic-fvbc-1} yields zero.}
  Note that the contribution \eqref{eq:mimetic-fvbc-1} 
  mimics exactly the boundary contribution to the time derivative of the entropy
  that has been obtained from the 
  continuous analysis (see Equation \eqref{eq:contraction-fv-continous}).

  We are then left with the contribution 
  $\wN^{\top} \pmat_{x_2 x_3}\mathcal{G}_{(1)} \, [M] \left(\wN-\gb^{(NS),Vel}\right)$. 
  At the nodal level, this term can be re-written as
  \begin{equation}
    \begin{aligned}\label{eq:SAT-no-slip-rewritten}
      w_{(1)}^{\top} \left(\pmat_{x_2 x_3}\right)_{(1)(1)} M \left(w_{(1)}-g^{(NS),Vel}\right)
       = &
       + \frac{1}{2} \, \left(\pmat_{x_2 x_3}\right)_{(1)(1)} \, w_{(1)}^{\top} M
    w_{(1)} \\
    &
    - \frac{1}{2} \, \left(\pmat_{x_2 x_3}\right)_{(1)(1)} \, \left(g^{(NS),Vel}\right)^{\top}  M
      \,  g^{(NS),Vel} \\ 
      & + \frac{1}{2} \, \left(\pmat_{x_2 x_3}\right)_{(1)(1)} \, \left(w_{(1)} - g^{(NS),Vel} \right)^{\top} M \left(w_{(1)} -
  g^{(NS),Vel} \right).
  \end{aligned}
  \end{equation}
The penalty contribution given by Equation \eqref{eq:SAT-no-slip-rewritten} 
imposes the no-slip Dirichlet boundary conditions on the velocity components and is bounded if
\begin{itemize}
  \item $M$ is negative semi-definite;
  \item $M$ is independent of the numerical state.
\end{itemize}
If these two conditions are fulfilled, the first and the last term 
in \eqref{eq:SAT-no-slip-rewritten} introduce only dissipation, whereas the 
second one is a bounded term because it is just a function of data; and it is zero for no-slip boundary conditions.
\end{proof}

For a Reynolds number ($Re$) that approaches $+\infty$, 
we would like to smoothly recover only the no-penetration (or wall slip) boundary condition that 
characterizes the Euler equations (first contribution in 
\eqref{eq:SAT-no-slip-bc}). To achieve that, the matrix $M$ needs to be a
function of the Reynolds number and can be computed as in \eqref{eq:IP-matrix-ss-wall-bc-1},
i.e., 
\begin{equation*}
  M = -\frac{\alpha^{(B)}}{\left(\pmat_{x_1}\right)_{(1)(1)}} H \, 
  \widetilde{c}_{11} \, H, \quad H = \textrm{diag}(1,1,1,1,0), \quad
  \alpha^{(B)} > 0,
\end{equation*}
where $\widetilde{c}_{11}$ has the functional form of the usual 
$\widehat{c}_{11}$ matrix and it is constructed using a state that is
independent of the numerical solution at all times.


\section{Numerical results}\label{sec:validation}
The objective of this section is to demonstrate the accuracy 
and robustness of the new entropy stable wall boundary conditions coupled 
with the family of high-order entropy stable interior operators developed in 
\cite{CarpenterFisherSSDC2013NASA}. The unstructured grid solver used herein 
uses a transformation 
from computational to physical space that satisfies the semi-discrete geometric 
conservation law.

Before proceeding with the numerical tests,
we demonstrate with an example that the construction of a penalty source term with only
an inviscid and a viscous contributions leads to a non-entropy stable boundary treatment.

\subsection{Non-entropy stable viscous wall boundary conditions: Isothermal
wall}
In Section \ref{subsec:entropy-stable-wall-bc}, we have shown that 
constructing  $\gb^{(B)}_{1}$ as in \eqref{eq:SAT-no-slip-bc} 
yields entropy stable wall boundary conditions. However, one might attempt to 
construct $\gb^{(B)}_{1}$ as the sum of an inviscid penalty flux and only a viscous interior
penalty term, 
\begin{equation}\label{eq:SAT-no-slip-bc-wrong}
  \begin{aligned}
    \gb_{1}^{(B)} = & -\gmat_{(1)}\left(
  \fb_1^{(I)} - 
\fsscb_1\left(\qN,\gb^{(E)} \right)\right)  
+ \mathcal{G}_{(1)} \, [L] \left(\wN-\gb^{(NS)}\right).
\end{aligned}
\end{equation}
For an isothermal wall, for instance, $\gb^{(NS)}$\footnote{Note that $\gb^{(NS)}$ is expressed in terms of entropy variables.} is a vector of data 
that imposes 
both the no-slip boundary conditions (i.e.,
$u_1=u_2=u_3=0$) and the wall temperature:
\begin{equation}\label{eq:def-data-NS-wrong}
  g^{(NS)} = \left(w_{(1)}(1), 0, 0, 0, -\frac{1}{T_{wall}}\right)^{\top}. 
\end{equation}
The matrix $[L]$ in \eqref{eq:SAT-no-slip-bc-wrong} is a block diagonal matrix with $N^3$ blocks of size 
five-by-five.
Comparing the two definitions of $\gb_{1}^{(B)}$ given in Equations 
\eqref{eq:SAT-no-slip-bc} and \eqref{eq:SAT-no-slip-bc-wrong}, it can be seen that
in the latter approach  no 
viscous flux penalty terms are introduced. This is a key difference,
and as shown in \ref{app:counter-example}, yields
a provably non-entropy stable solid wall boundary conditions. Such a boundary treatment leads to
unstable simulations when used in combination with fine grids and/or high-order accurate polynomial 
representations of the solution.

\subsection{Computation of a square cylinder in subsonic freestream}
The flow past a square cylinder represents a benchmark test case for external 
flow past bluff bodies. This flow has been the subject of intense
experimental and numerical research in the past. In fact, a cylinder with square
cross section is a 
simple but a central shape for many engineering applications, including 
aeroacoustics and air pollutant transport and dispersion in urban environments.

The flow is described in a Cartesian coordinate system ($x_1,x_2,x_3$), in which
the $x_1$-axis is aligned with the inlet flow direction, the $x_3$-axis is
parallel with the cylinder axis and the $x_2$-axis is perpendicular to both
directions (see Figure \ref{fig:square-grid-re200-ma01}). A fixed two-dimensional square cylinder with
a side $d$ is exposed to a uniform freestream velocity vector with modulus
$|\vec{u}_{\infty}|$. The length
of the square cylinder in the $x_3$-direction is $10\,d$.

The following boundary 
conditions are used. A uniform flow
is prescribed at the inlet which is located $10\,d$ units upstream of the
cylinder. At the outlet, located $20\,d$ unit downstream of the cylinder,
far-field boundary conditions are used. A no-penetration (Euler) boundary 
condition is prescribed at the upper and lower boundaries. No-slip and adiabatic
conditions are enforced at the body surface. 
A periodic boundary condition is
used in the spanwise direction $x_3$. In the $x_2$-direction, the solid
blockage of the confined flow (i.e., the vertical distance between the upper and
the lower inviscid walls) is set to $18\,d$. 

The flow has a freestream Mach number of $M_{\infty} = 0.1$, and a Reynolds
number of $Re_{\infty} =2\times10^2$. The Reynolds number is based on the
modulus of
the freestream velocity vector, $|\vec{u}_{\infty}|$ and the height of the cylinder $d$. At
this Reynolds number, the regime is laminar and it usually persists up to
a Reynolds number of about $4\times 10^2$. Moreover, the vortex shedding is 
characterized by one very well-defined frequency
\cite{square-cylinder-norberg-1993}. A very
small time step is used to integrate the system of ordinary differential
equations (ODEs) so that the temporal 
error is negligible compared to that of the spatial discretization.

\subsubsection{Accuracy of the no-slip wall boundary conditions}
The proposed entropy stable no-slip wall boundary conditions do not force the
numerical solution to exactly fulfill the boundary conditions. Instead the
effect can be described as a rubber-band pulling the solution towards the
boundary conditions. The computed boundary value (or numerical state) typically deviates slightly 
from the prescribed value but the deviation is reduced as the grid is refined.
Therefore, the error at the boundary
can serve as a rough measure of the error of the entire solution. 

We compute the maximum norm $L^{\infty}$ of the error of the three velocity
components $u_1$, $u_2$, and $u_3$ on the complete surface of the cylinder at $t=1$, 
for three different grids.
The
meshes are fully unstructured, although a structured subdivision is used around 
the square cylinder and the near wake region to perform a grid convergence study 
(see Figure \ref{fig:square-grid-re200-ma01}). Grid 3 is the finest grid and 
has $20$ points on each side of the square, $20$ points in the ``radial" 
direction in the ``structured portion'' near the body, $40$ points in the 
near wake region in the freestream direction, and 
$8$ points in the spanwise direction. Grid 2 and Grid 1 are obtained by taking
every other and every fourth grid point of Grid 3 in the structured region. 
The simulations are performed using  different
orders of the polynomial ($p = 1,2,3,4$). The results are shown in Tables 
\ref{tbl:error-u1-square-re200-ma01}, \ref{tbl:error-u2-square-re200-ma01}, and
\ref{tbl:error-u3-square-re200-ma01}.  
 \begin{figure}[h]
 \centering
 \includegraphics[width=0.70\textwidth]{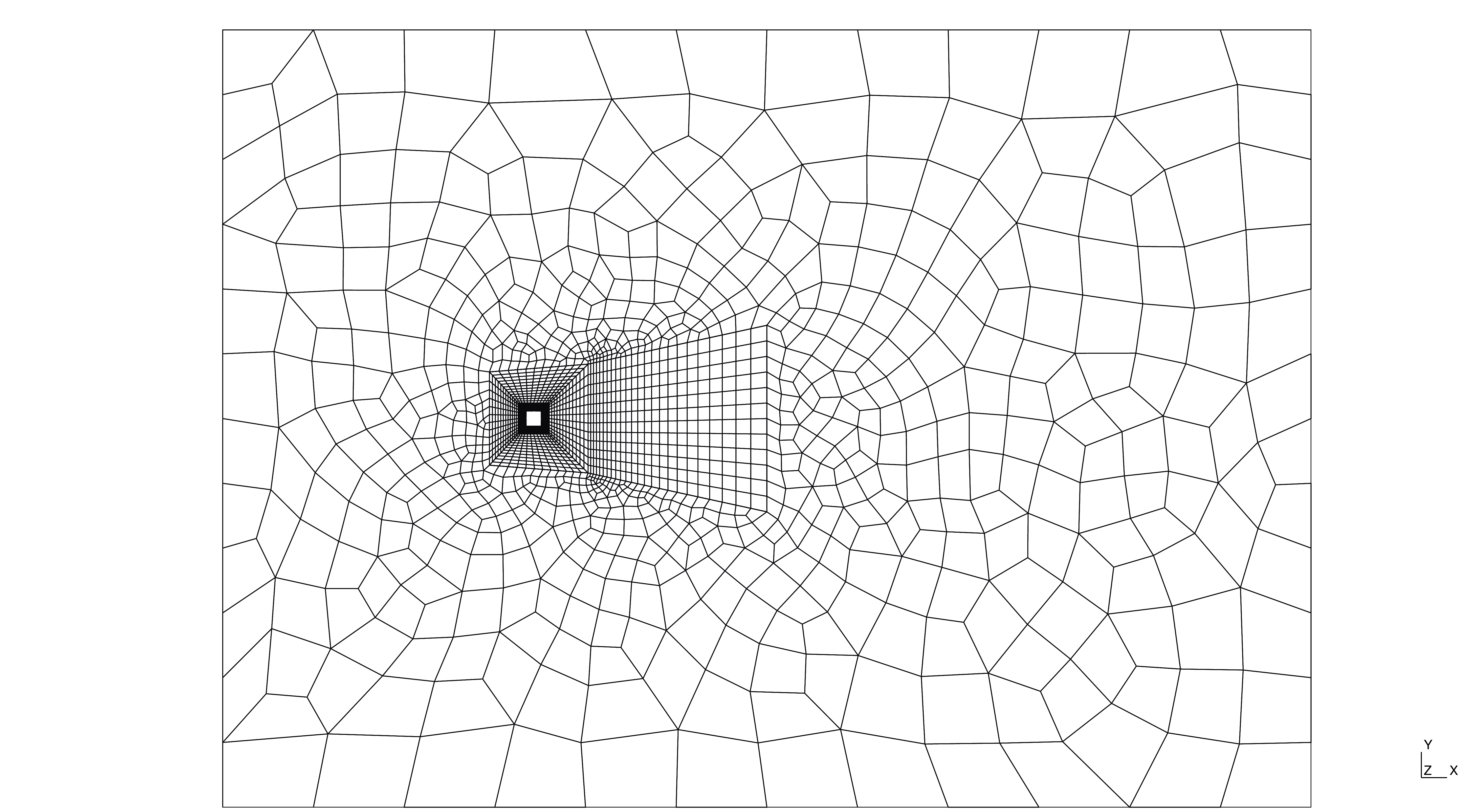}
 \caption{Example of structured/unstructured grids used for the flow past a 3D 
   square cylinder at $Re_{\infty}=2\times10^2$ and $M_{\infty}= 0.1$.}
 \label{fig:square-grid-re200-ma01}
\end{figure}
 
 We highlight a few observations. First, in
 all cases an increase in theoretical order of accuracy results in a error 
 reduction on all grids. Secondly, although the convergence rates in
 model problems are shown on much finer meshes, the 
 computed order of
 accuracy is very close to the formal value between the medium and fine 
 grids, even for these more realistic meshes.
 \begin{table}[htbp]
  \centering
\begin{tabular}{c| c c | c c | c c | c c}  
           & $p=1$        & rate & $p=2$        & rate  & $p=3$         & rate
  & $p=4$ & rate \\ \hline
  Grid 1            & 4.73e-2  &  -   & 2.15e-2  &  -    & 9.61e-3   &  -    &
  4.54e-3  &  -   \\  
  Grid 2            & 1.47e-2  & 1.69 & 2.88e-3  & 2.90   & 5.83e-4  & 4.04  &
  1.39e-4  &  5.02 \\ 
  Grid 3            & 3.55e-3  & 2.04 & 3.66e-4  & 2.98   & 3.40e-5  & 4.10  &
  4.52e-6  &  4.94  \\
\end{tabular}
   \caption{$L^{\infty}$ error norm of the velocity component $u_1$ at the wall
   and convergence rates; $t =1$; 3D unsteady laminar flow past a square cylinder at $Re_{\infty}=2\times10^2$ and $M_{\infty} =
  0.1$.} 
    \label{tbl:error-u1-square-re200-ma01}
\end{table}

\begin{table}[htbp]
  \centering
\begin{tabular}{c| c c | c c | c c | c c}  
          & $p=1$        & rate & $p=2$        & rate  & $p=3$        & rate &
  $p=4$ & rate \\ \hline
  Grid 1            & 7.20e-2  &  -   & 2.71e-2  &  -    & 1.43e-2  &  -   &
  5.14e-3  &  -   \\  
  Grid 2            & 1.79e-2  & 2.01 & 3.34e-3  & 3.02  & 1.10e-3  & 3.70 &
  1.96e-4  &  4.71 \\ 
  Grid 3            & 4.65e-3  & 1.94 & 4.20e-4  & 2.99  & 7.21e-5  & 3.93 &
  6.48e-6  &  4.92  \\
\end{tabular}
   \caption{$L^{\infty}$ error norm of the velocity component $u_2$ at the wall
   and convergence rates; $t =1$; 3D unsteady laminar flow past a square cylinder at $Re_{\infty}=2\times10^2$ and $M_{\infty} =
  0.1$.} 
    \label{tbl:error-u2-square-re200-ma01}

\end{table}

\begin{table}[htbp]
  \centering
\begin{tabular}{c| c c | c c | c c | c c}  
           & $p=1$        & rate  & $p=2$         & rate  & $p=3$        & rate
  & $p=4$ & rate \\ \hline
  Grid 1            & 2.75e-4  &  -    & 1.34e-4   &  -    & 1.01e-4  &  -   &
  8.62e-5  &  -    \\  
  Grid 2            & 5.98e-5  & 2.20  & 1.71e-5   & 2.97  & 7.92e-6  & 3.67 &
  3.14e-6  & 4.78  \\ 
  Grid 3            & 1.38e-5  & 2.12  & 2.03e-6   & 3.04  & 5.30e-7  & 3.90 &
  8.70e-8  & 5.17  \\
\end{tabular}
   \caption{$L^{\infty}$ error norm of the velocity component $u_3$ at the wall
   and convergence rates; $t =1$; 3D unsteady laminar flow past a square cylinder at $Re_{\infty}=2\times10^2$ and $M_{\infty} =
  0.1$.} 
    \label{tbl:error-u3-square-re200-ma01}
\end{table}

\subsubsection{Vortex shedding}
In this section we investigate the vortex shedding and the time variation of the
lift and drag coefficients. We compare our results against the data reported by
 Sohankar et al. \cite{sohankar-square-1999}. 
We compute the
following quantities: The Strouhal number, $f\,d/|\vec{u}_{\infty}|$, where $f$
is the frequency of the vortex shedding; the time-averaged drag coefficient, 
$c_D$, and the spanwise-averaged root-mean-square (RMS) of the
lift coefficient, $c_{L}^{RMS}$. We use the same grids
presented in the previous section, and different orders of the polynomial 
($p = 1,2,3,4$). The results are illustrated in Tables 
\ref{tbl:st-cd-cl-square-re200-ma01-coarse}, 
\ref{tbl:st-cd-cl-square-re200-ma01-medium}, and
\ref{tbl:st-cd-cl-square-re200-ma01-fine}.
From these tables, it can be seen that in all cases the
accuracy of the results improve by increasing the order of accuracy of the scheme.
We also note that, on Grid 3, which is very coarse compared to the typical grids
used with second-order FV and FD schemes, fourth- ($p=3$) and fifth-order ($p=4$) accurate
entropy stable schemes perform very well. In fact, the 
aerodynamic coefficients computed with these two discretizations are in very good
agreement with the results reported in literature \cite{sohankar-square-1999}.

\begin{table} 
  \centering
\begin{tabular}{c| c c c }  
  Solution                                     & $St$   & $\langle c_D \rangle$ &  $c_{L}^{RMS}$ \\ \hline
  SSDC $p=1$                                   & 0.098  & 1.01                  &  0.02           \\
  SSDC $p=2$                                   & 0.109  & 1.08                  &  0.06           \\
  SSDC $p=3$                                   & 0.142  & 1.19                  &  0.11           \\
  SSDC $p=4$                                   & 0.151  & 1.28                  &  0.15           \\
  Sohankar et al. \cite{sohankar-square-1999}  & 0.160  & 1.41                  &  0.22           \\  
\end{tabular} 
  \caption{Strouhal number, mean drag coefficient, and spanwise-averaged RMS of the lift coefficient for the 3D 
    unsteady laminar flow past a square cylinder at $Re_{\infty}=2\times10^2$ and $M_{\infty} =
    0.1$; Grid 1.} 
    \label{tbl:st-cd-cl-square-re200-ma01-coarse}
\end{table}

\begin{table} 
  \centering
\begin{tabular}{c| c c c }  
  Solution                                     & $St$   & $\langle c_D \rangle$ &  $c_{L}^{RMS}$ \\ \hline
  SSDC $p=1$                                   & 0.128  & 1.16                  &  0.07          \\
  SSDC $p=2$                                   & 0.139  & 1.28                  &  0.13           \\
  SSDC $p=3$                                   & 0.153  & 1.36                  &  0.20           \\
  SSDC $p=4$                                   & 0.159  & 1.40                  &  0.23           \\
  Sohankar et al. \cite{sohankar-square-1999}  & 0.160  & 1.41                  &  0.22           \\  
\end{tabular} 
   \caption{Strouhal number, mean drag coefficient, and spanwise-averaged RMS of the lift coefficient for the 
    3D unsteady laminar flow past a square cylinder at $Re_{\infty}=2\times10^2$ and $M_{\infty} = 0.1$; 
    Grid 2.} 
    \label{tbl:st-cd-cl-square-re200-ma01-medium}
\end{table}

\begin{table} 
  \centering

\begin{tabular}{c| c c c }  
  Solution                                     & $St$   & $\langle c_D \rangle$ &  $c_{L}^{RMS}$ \\ \hline  
  SSDC $p=1$                                   & 0.134  & 1.29                  &  0.12           \\
  SSDC $p=2$                                   & 0.154  & 1.37                  &  0.19           \\
  SSDC $p=3$                                   & 0.159  & 1.40                  &  0.22           \\
  SSDC $p=4$                                   & 0.159  & 1.42                  &  0.23           \\
  Sohankar et al. \cite{sohankar-square-1999}  & 0.160  & 1.41                  &  0.22           \\
\end{tabular}
   \caption{Strouhal number, mean drag coefficient, and spanwise-averaged RMS of the lift coefficient for the 
    3D unsteady laminar flow past a square cylinder at $Re_{\infty}=2\times10^2$ and $M_{\infty} = 0.1$; 
    Grid 3.} 
    \label{tbl:st-cd-cl-square-re200-ma01-fine}
\end{table}

\subsection{Heat entropy flow}

In this section we perform a convergence study of the thermal condition to
verify that the heat entropy flow at the wall converges to the prescribed
value. At $t=100$, we compute the maximum norm $L^{\infty}$ of the error of the quantity
$\left[\widetilde{\Theta}_{x_1}(5) T_{(\cdot)}\right]$ (see expression \eqref{eq:gradient-viscous-bc-gradienT}) 
for all the solution points that lie on the solid wall.
The entropy flow $g(t)$ is set to $\mathtt{g}(t)=const=0.02$. The results are illustrated in Table
\ref{tbl:error-g-square-re200-ma01}.
\begin{table}[htbp]
  \centering
\begin{tabular}{c| c c | c c | c c | c c}  
          &   $ p=1$      & rate  &  $p=2$     & rate   & $p=3$     & rate  & $p=4$   & rate \\ \hline
  Grid 1  &   6.17e-2     & -     &  3.68e-2   &  -     & 9.16e-3   &  -    & 7.23e-3 &  -   \\  
  Grid 2  &   3.35e-2     & 0.88  &  9.95e-3   &  1.89  & 1.32e-3   & 2.79  & 5.83e-4 & 3.63 \\ 
  Grid 3  &   1.63e-2     & 1.03  &  2.23e-3   &  2.16  & 1.68e-4   & 2.97  & 2.59e-5 & 4.22 \\
\end{tabular}
   \caption{$L^{\infty}$ error norm of the heat entropy flow at the wall
   and convergence rates; $\mathtt{g}(t) = const. = 0.02$ (see
   \eqref{eq:gradient-viscous-bc-gradienT}); $t =100$; 3D unsteady laminar flow past a square cylinder at $Re_{\infty}=2\times10^2$ and $M_{\infty} =
  0.1$.} 
    \label{tbl:error-g-square-re200-ma01}

\end{table}
As for the convergence study on the no-slip boundary conditions, it can be seen that in all cases the
accuracy of the results improve by increasing the order of accuracy of the
scheme. On this set of grids, the convergence rate of the error associated to the
heat entropy flow is $p$.

\subsection{Computation of a square cylinder in supersonic freestream}
The development of a high-order accurate entropy stable discretization 
aims to provide the next generation of robust high fidelity numerical solvers 
for complex fluid flow simulations, for which standard suboptimal algorithms
suffer greatly or fail completely. By computing the flow past a 3D square cylinder 
at $Re_{\infty}=10^4$ and $M_{\infty}=1.5$, we provide numerical evidence of
such robustness for the complete entropy stable high order spatial
discretization. This supersonic flow
is characterized by a very large range of length scales, strong shocks and
expansion regions that interact with each other, leading to complex flow
patterns. During the past three decades, this fluid flow problem has been 
thoroughly investigated by several researchers for
aerodynamic applications (see, for instance,  
\cite{supersonic-square-cylinder-nakagawa-1986,
supersonic-square-cylinder-nakagawa-1987,square-cylinder-birch-2003}).

The domain of interest spans one square cylinder edge in the $x_3$ direction, and 
at the two planes perpendicular to this coordinate direction, periodic
boundary conditions are used. 
The flow is computed using an unstructured grids with 
$43,936$ hexahedrons. A
fourth-order accurate ($p=3$) entropy stable discretization without 
absolutely any 
stabilization technique is used. The body surface is considered
adiabatic. The solution is initialized using a 
uniform flow at $M_{\infty}=1.5$ with zero angle of attack.

At the beginning of the simulation a strong shock is
formed in front of the bluff body. Subsequently, the discontinuity moves 
upstream until it reaches a ``stationary'' position that is about $2.15$ square 
cylinder edges far from the frontal surface of the body. 
During this phase, additional weaker shocks, which originate from the four sharp corners 
of the body,
interact with the subsonic regions formed near the walls. This 
complicated flow pattern, yields the formation of shock-lets in the wake
of the square cylinder. Figure \ref{fig:supersonic-square-1}
shows a portion of the ``high order grid''\footnote{Original grid with
element-wise interior node connections.} close to the body and its near-wake
region, and the Mach number and 
density contours at $t=1.5$. It can be seen
that relatively small oscillations are generated in front of the shock. This
numerical feature is absolutely natural and expected because the solution has been
computed with a fourth-order
accurate scheme without artificial dissipation or 
filtering technique. Nevertheless, the simulation
remains stable at all time, and the oscillations are always confined in 
small regions close to the discontinuities.
\begin{figure}[htbp]
 \centering
 \subfigure[High order grid in the near-body and near-wake regions.]{
  \includegraphics[trim=0.5cm 2.5cm 0.5cm 2.5cm, clip=true, totalheight=0.29\textheight]
  {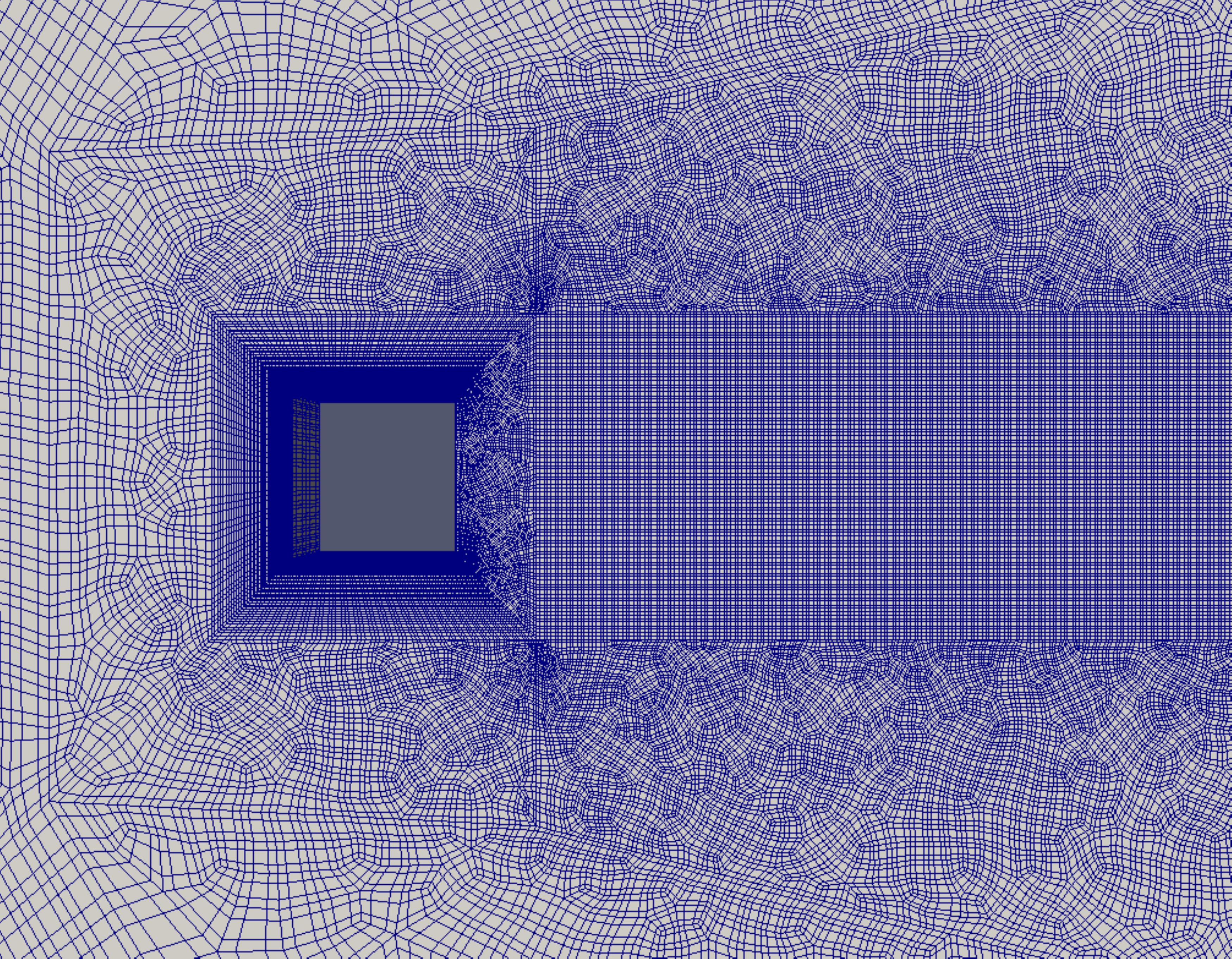}
   \label{fig:subfig1}
   }\\
 \subfigure[Mach number; $\Delta M = 0.0146$.]{
  \includegraphics[trim=0.5cm 1.8cm 0cm 3.6cm, clip=true, totalheight=0.28\textheight]
  {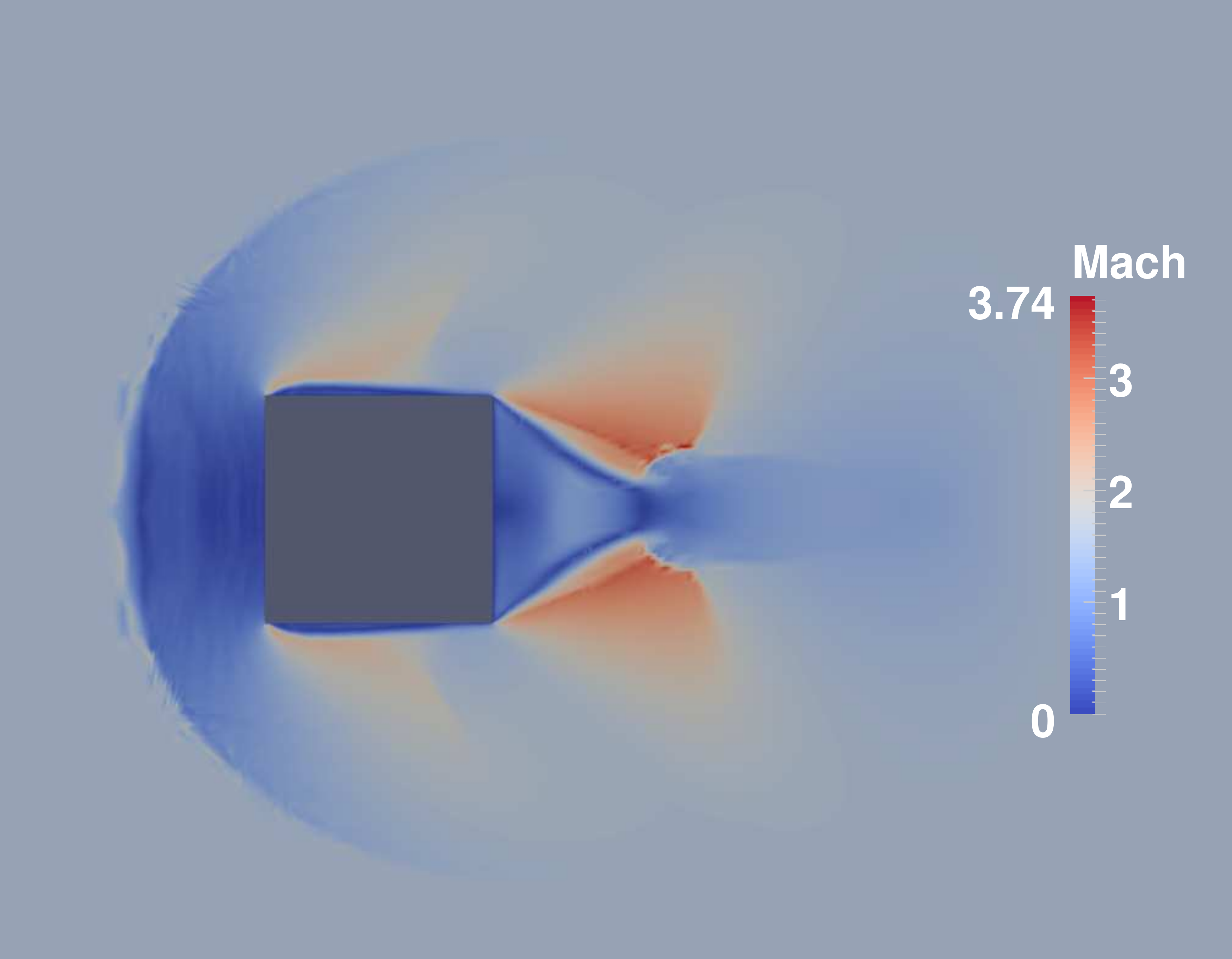}
   \label{fig:subfig2}
   }\\
 \subfigure[Density; $\Delta \rho = 0.0114$.]{
  \includegraphics[trim=1cm 1.8cm 0cm 3.6cm, clip=true, totalheight=0.28\textheight]
  {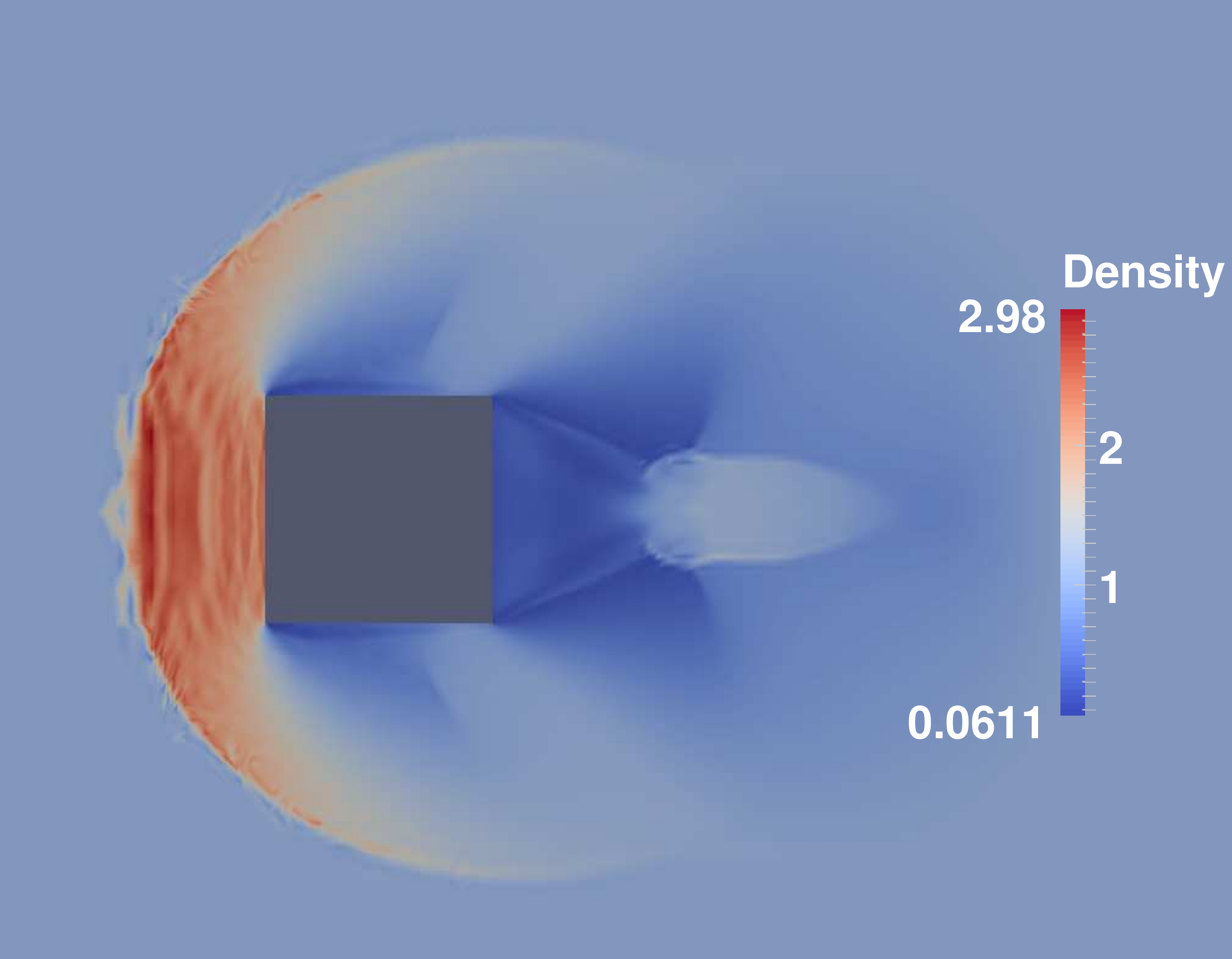}
   \label{fig:subfig3}
   }
 \caption[Optional caption for list of figures]{Unsteady flow past a 3D square cylinder at
   $Re_{\infty}=10^4$ and $M_{\infty}=1.5$; fourth-order ($p=3$) accurate entropy stable
 spatial discretization without stabilization technique; $t=1.5$. \label{fig:supersonic-square-1}}
\end{figure}

In Figures \ref{fig:supersonic-square-3}  
a global view of the ``high order grid'', the Mach 
number, density, temperature and entropy contours at $t=100$ are
shown. At $t=100$, the shock
has already reached a stationary position, and the flow past the square cylinder
is completely unsteady, characterized by subsonic and supersonic regions.
The formation of shock-lets in the near wake region are clearly visible.
\begin{figure}[htbp]
 \centering
 \subfigure[High order grid.]{
  \includegraphics[totalheight=0.26\textheight]
  {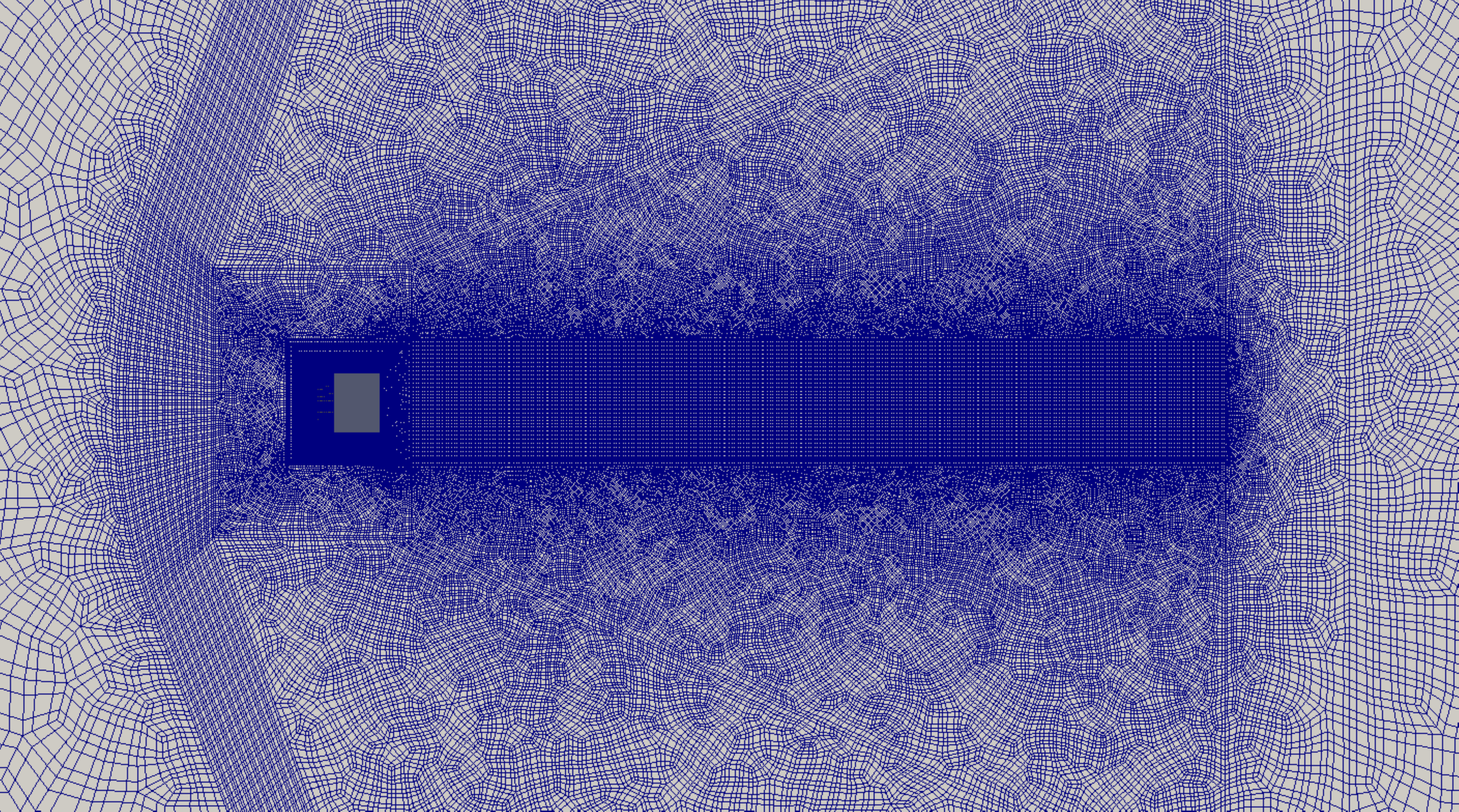}
   \label{fig:subfig4}
   }
 \subfigure[Mach number; $\Delta M = 0.0095$.]{
  \includegraphics[trim=0.5cm 3.5cm 0cm 3.5cm, clip=true,totalheight=0.285\textheight]
  {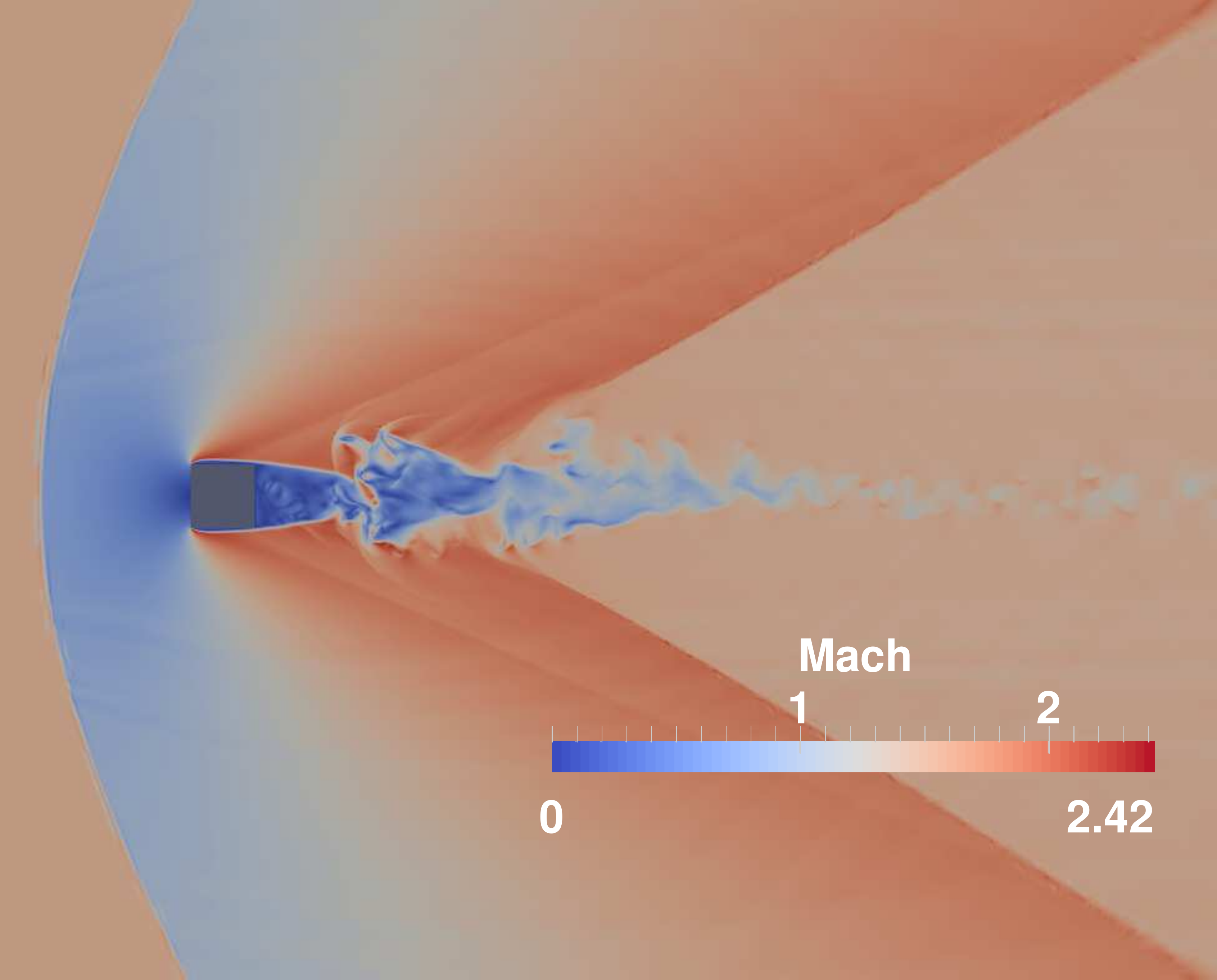}
   \label{fig:subfig5}
   }
 \subfigure[Density; $\Delta \rho = 0.0090$.]{
  \includegraphics[trim=0.5cm 3.5cm 0cm 3.5cm, clip=true,totalheight=0.285\textheight]
  {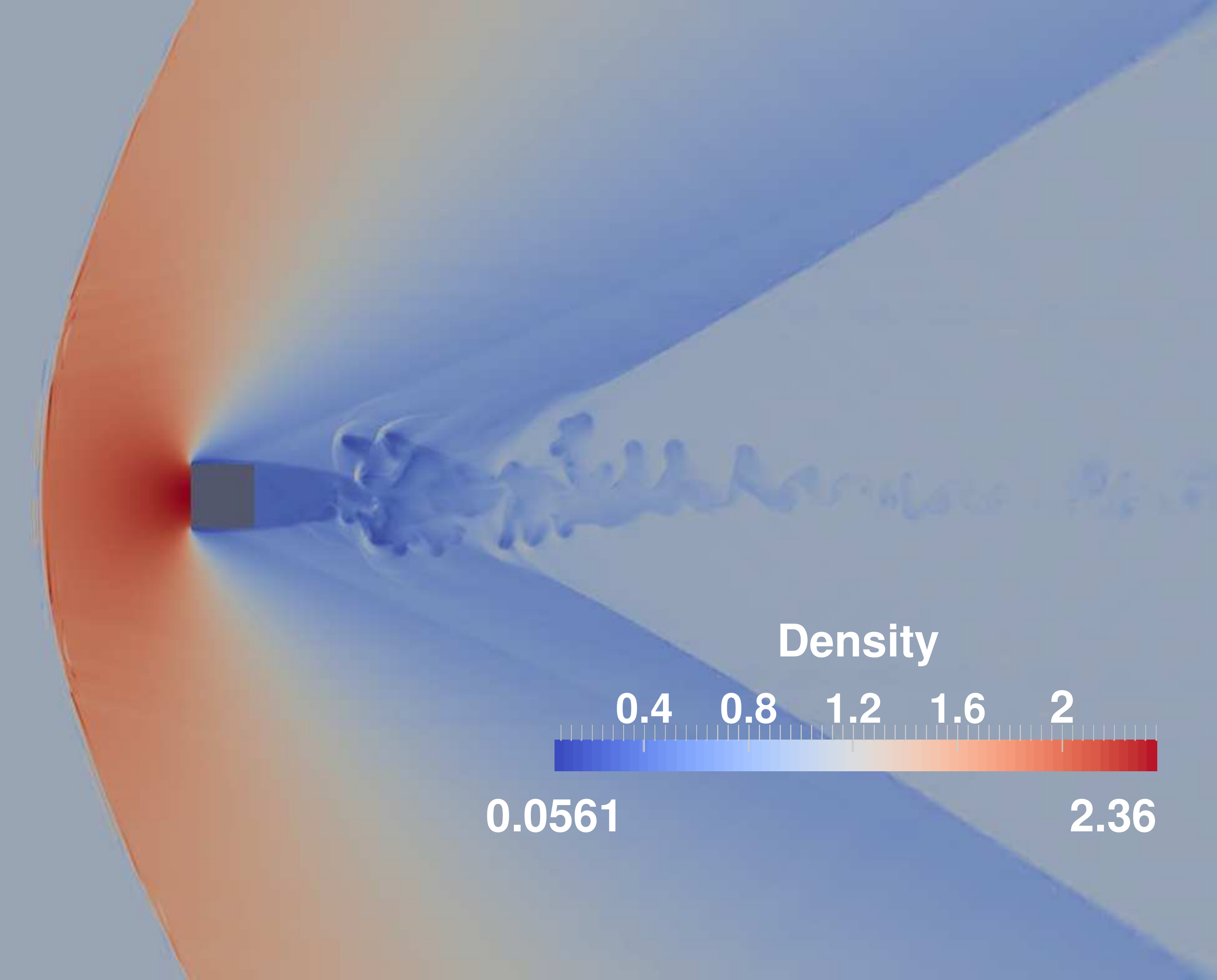}
   \label{fig:subfig6}
   }
\end{figure}
 
\begin{figure}[htbp]
 \centering
 \subfigure[Temperature; $\Delta T = 0.0077$.]{
  \includegraphics[trim=0.5cm 3.5cm 0cm 3.5cm, clip=true,totalheight=0.285\textheight]
  {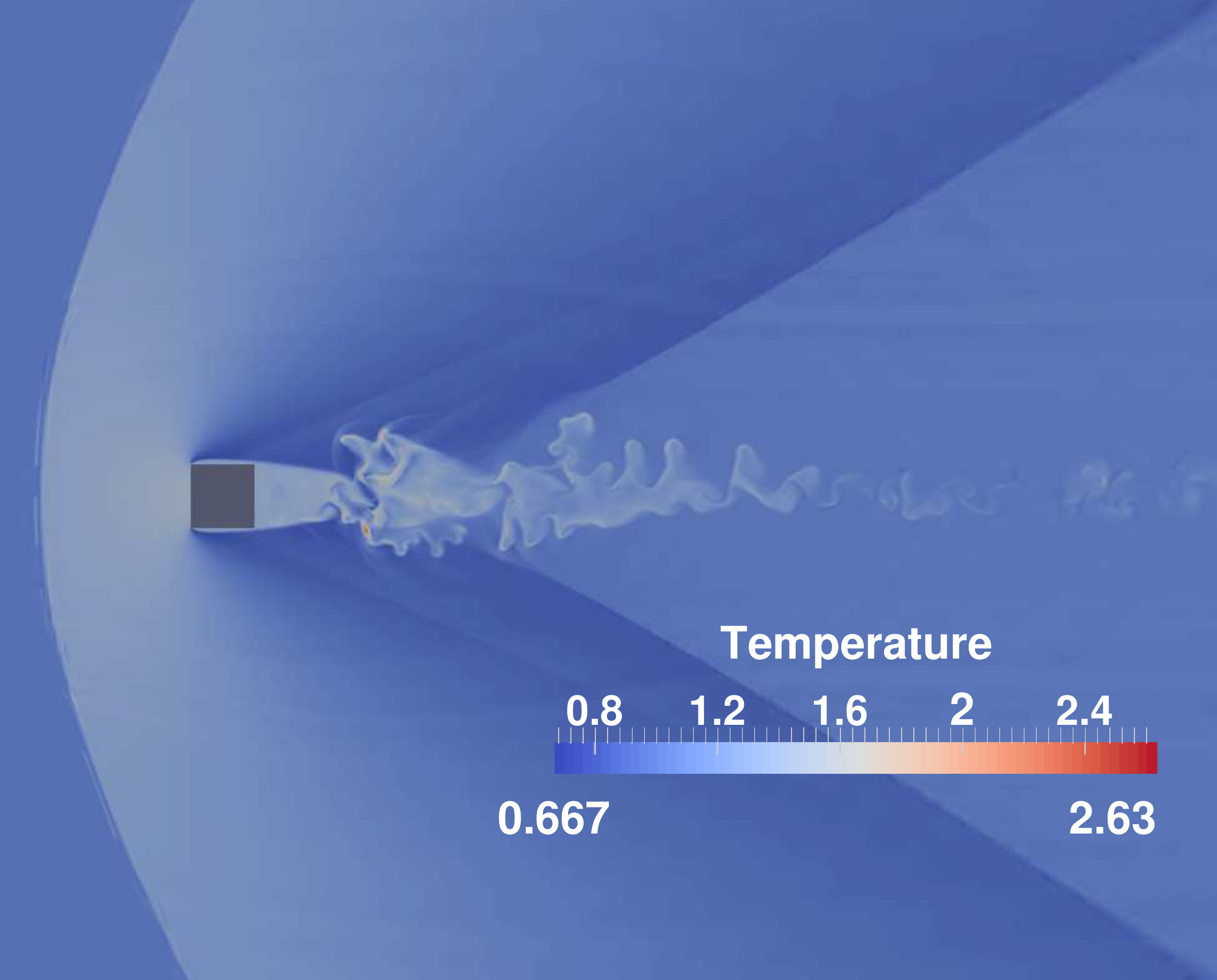}
   \label{fig:subfig7}
   }
 \subfigure[Entropy; $\Delta s = 0.0044$.]{
  \includegraphics[trim=0.5cm 3.5cm 0cm 3.5cm, clip=true,totalheight=0.285\textheight]
    {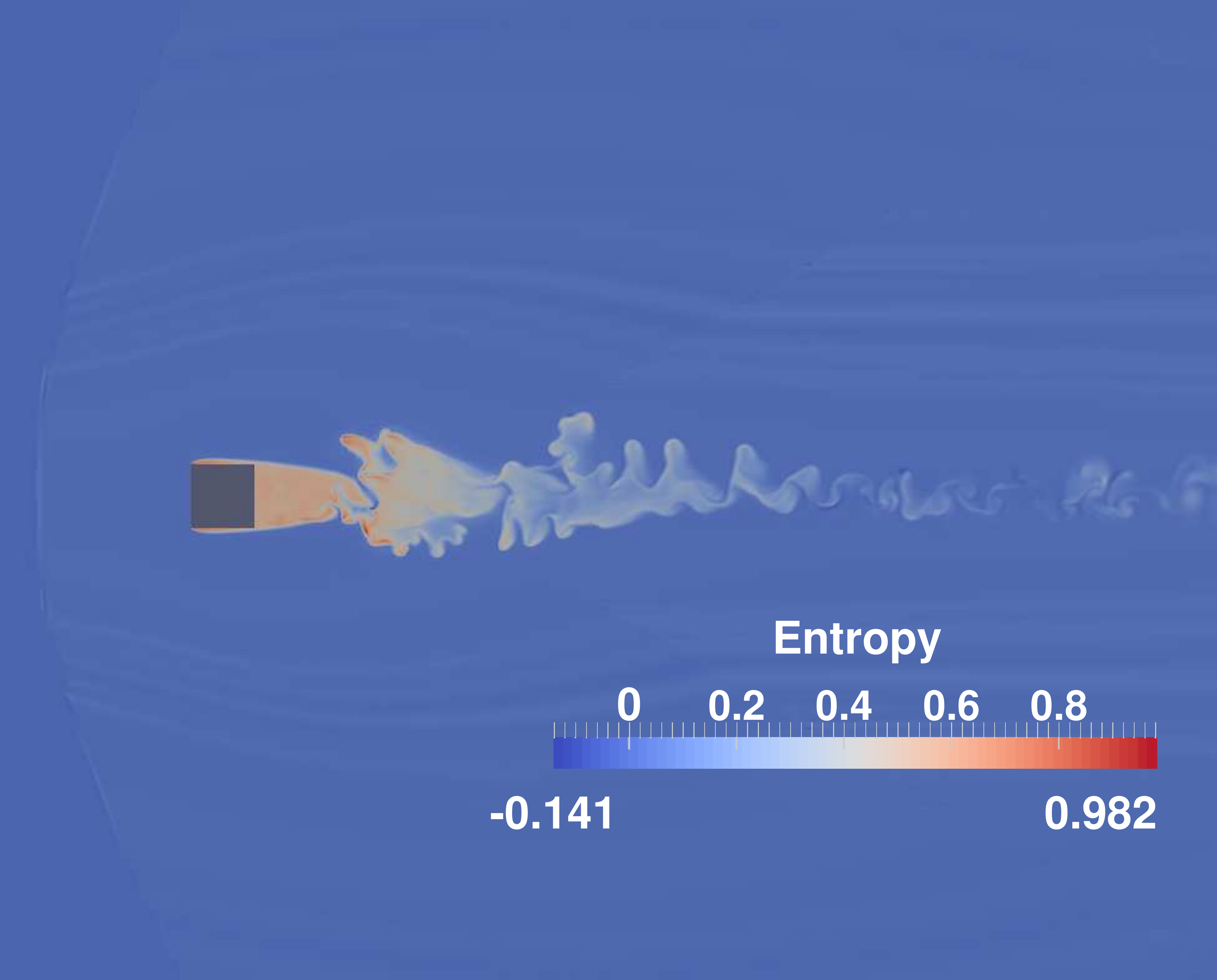}
   \label{fig:subfig8}
   }
 \caption[Optional caption for list of figures]{Unsteady flow past a 3D square cylinder at
   $Re_{\infty}=10^4$ and $M_{\infty}=1.5$; fourth-order ($p=3$) accurate entropy stable
 spatial discretization without stabilization technique; $t=100$. \label{fig:supersonic-square-3}}
\end{figure}


\section{Conclusions}\label{sec:conclusions}
Herein, we have shown that no-slip boundary conditions together with
a boundary condition on the heat entropy flow, $\left(1/T \, \partial T/\partial n \right)_{wall}$, 
imply stability for the continuous compressible Navier--Stokes equations. The
boundary condition on the heat entropy flow is in complete agreement with the 
thermodynamic (entropy) analysis of a generic system.  
An entropy stable numerical procedure is presented for weakly enforcing 
these solid wall boundary conditions via a penalty approach.
The resulting semi-discrete operator mimics exactly the behavior at the continuous level. 
The proposed non-linear boundary treatment provides a mechanism for ensuring the non-linear stability in 
the $L^2$ norm of the continuous and semi-discretized compressible Navier--Stokes equations. 
Although discontinuous spectral collocation operators are used in this work, the 
new boundary conditions are 
compatible with any diagonal norm summation-by-parts spatial operator, including 
finite element, finite volume, finite difference, discontinuous Galerkin, and flux reconstruction
schemes.

Numerical computations around a three-dimensional square cylinder in the subsonic regime are
performed to highlight the accuracy and robustness of the proposed numerical
procedure. Measurement of forces on the cylinder showed very good agreement 
with the results available from the literature. Furthermore, we have shown that
the no-slip conditions approach zero to design-order (i.e., the convergence rate
is $p+1$), and the heat entropy flow converges to the prescribed boundary value
at a rate of $p$. 
 
The robustness of the complete semi-discrete operator 
(i.e., the entropy stable interior operator coupled with the new boundary 
treatment) has been demonstrated for the supersonic flow past a
three-dimensional square 
cylinder at $Re_{\infty}=10\time10^4$ and $M_{\infty} = 1.5$. This test has 
been successfully computed with a fourth-order accurate method without the 
need to introduce artificial dissipation, limiting techniques or filtering, 
for stabilizing the computations, a feat unattainable with several alternative 
approaches just based on linear analysis.

This work clearly indicates that, although incremental improvements to existing 
algorithms will continue to improve overall capabilities, the development of 
novel robust numerical techniques such as entropy preserving or entropy stable
schemes and their extension to complex multi-scale and multi-physics problems offers 
the possibility of radical advances in computational fluid dynamics and
computational aerodynamics in terms of robustness, fidelity
and efficiency.

\section*{Acknowledgments}
Special thanks are extended to Dr. Mujeeb Malik for funding this work as part of
the ``Revolutionary Computational Aerosciences'' project. This research was also 
supported by an appointment to the NASA Postdoctoral 
Program at the Langley Research Center, administered by Oak Ridge Associated 
Universities through a contract with NASA. The authors are also
grateful to Professor Magnus Sv\"{a}rd for the fruitful discussions on entropy stability.

\appendix\label{appendix}

\section{Coefficient matrices of the viscous flux}\label{app:cij}
The viscous coefficient matrices $c_{ij}^{\prime}$ used to define the viscous 
fluxes in Cartesian coordinates in \eqref{eq:visc-flux-cij} are defined as
\begin{equation*}
  c_{11}^{\prime} =
\begin{pmatrix}
  0 & 0 & 0 & 0 & 0 \\
  0 & \frac{4}{3}\mu & 0 & 0 & 0  \\
  0 & 0 & \mu & 0 & 0 \\
  0 & 0 & 0 & \mu & 0 \\
  0 & \frac{4}{3}\mu u_1 & \mu u_2 & \mu u_3 & \kappa
\end{pmatrix},
\quad
  c_{12}^{\prime} =
\begin{pmatrix}
  0 & 0 & 0 & 0 & 0 \\
  0 & 0 & -\frac{2}{3}\mu & 0 & 0  \\
  0 & \mu & 0 & 0 & 0 \\
  0 & 0 & 0 & 0 & 0 \\
  0 & \mu u_2 & -\frac{2}{3}\mu u_1 & 0 & 0
\end{pmatrix},
\end{equation*}
\begin{equation*}
  c_{13}^{\prime} =
\begin{pmatrix}
  0 & 0 & 0 & 0 & 0 \\
  0 & 0 & 0 & -\frac{2}{3}\mu & 0  \\
  0 & 0 & 0 & 0 & 0 \\
  0 & \mu & 0 & 0 & 0 \\
  0 & \mu u_3 & 0 & -\frac{2}{3}\mu u_1 & 0
\end{pmatrix},
\quad
  c_{21}^{\prime} =
\begin{pmatrix}
  0 & 0 & 0 & 0 & 0 \\
  0 & 0 & \mu & 0 & 0  \\
  0 & -\frac{2}{3}\mu & 0 & 0 & 0 \\
  0 & 0 & 0 & 0 & 0 \\
  0 & -\frac{2}{3}\mu u_2 & \mu u_1 & 0 & 0
\end{pmatrix},
\end{equation*}
\begin{equation*}
  c_{22}^{\prime} =
\begin{pmatrix}
  0 & 0 & 0 & 0 & 0 \\
  0 & \mu & 0 & 0 & 0  \\
  0 & 0 & \frac{4}{3}\mu & 0 & 0 \\
  0 & 0 & 0 & \mu & 0 \\
  0 & \mu u_1 & \frac{4}{3}\mu u_2 & \mu u_3 & \kappa
\end{pmatrix},
\quad
  c_{23}^{\prime} =
\begin{pmatrix}
  0 & 0 & 0 & 0 & 0 \\
  0 & 0 & 0 & 0 & 0  \\
  0 & 0 & 0 & -\frac{2}{3}\mu & 0 \\
  0 & 0 & \mu & 0 & 0 \\
  0 & 0 & \mu u_3 & -\frac{2}{3}\mu u_2 & 0
\end{pmatrix},
\end{equation*}
\begin{equation*}
  c_{31}^{\prime} =
\begin{pmatrix}
  0 & 0 & 0 & 0 & 0 \\
  0 & 0 & 0 & \mu & 0  \\
  0 & 0 & 0 & 0 & 0 \\
  0 & -\frac{2}{3}\mu & 0 & 0 & 0 \\
  0 & -\frac{2}{3}\mu u_3 & 0 & \mu u_1 & 0
\end{pmatrix},
\quad
  c_{32}^{\prime} =
\begin{pmatrix}
  0 & 0 & 0 & 0 & 0 \\
  0 & 0 & 0 & 0 & 0  \\
  0 & 0 & 0 & \mu & 0 \\
  0 & 0 & -\frac{2}{3}\mu & 0 & 0 \\
  0 & 0 & -\frac{2}{3}\mu u_3 & \mu u_2 & 0
\end{pmatrix},
\end{equation*}
\begin{equation*}
  c_{33}^{\prime} =
\begin{pmatrix}
  0 & 0 & 0 & 0 & 0 \\
  0 & \mu & 0 & 0 & 0  \\
  0 & 0 & \mu & 0 & 0 \\
  0 & 0 & 0 & \frac{4}{3}\mu & 0 \\
  0 & \mu u_1 & \mu u_2 & \frac{4}{3}\mu u_3 & \kappa
\end{pmatrix}.
\end{equation*}

The symmetrized coefficient matrices used in \eqref{eq:entropyChainRule} to
define the viscous fluxes as a function of the gradient of the entropy variables
are found using
\begin{equation*}
  \widehat{c}_{ij} = c_{ij} \frac{\partial q}{\partial w} = c_{ij}^{\prime}
  \frac{\partial v}{\partial w}.
\end{equation*}
Therefore, they take the following form:
\begin{equation*}
  \widehat{c}_{11} =
\begin{pmatrix}
  0 & 0 & 0 & 0 & 0 \\
  0 & \frac{4}{3}T\mu & 0 & 0 & \frac{4}{3}T\mu u_1  \\
  0 & 0 & T\mu & 0 & T\mu u_2 \\
  0 & 0 & 0 & T\mu & T \mu u_3 \\
  0 & \frac{4}{3}T\mu u_1 & T\mu u_2 & T\mu u_3 & T^2\kappa +
  \frac{1}{3}T\left(4\mu u_1^2 + 3 \mu u_2^2 + 3 \mu u_3^2\right)
\end{pmatrix},
\end{equation*}
\begin{equation*}
\widehat{c}_{22} =
\begin{pmatrix}
  0 & 0 & 0 & 0 & 0 \\
  0 & T\mu & 0 & 0 & T\mu u_1  \\
  0 & 0 & \frac{4}{3}T\mu & 0 & \frac{4}{3}T\mu u_2 \\
  0 & 0 & 0 & T\mu & T \mu u_3 \\
  0 & T\mu u_1 & \frac{4}{3}T\mu u_2 & T\mu u_3 & T^2\kappa +
  \frac{1}{3}T\left(3\mu u_1^2 + 4 \mu u_2^2 + 3 \mu u_3^2\right)
\end{pmatrix},
\end{equation*}
\begin{equation*}
  \widehat{c}_{33} =
\begin{pmatrix}
  0 & 0 & 0 & 0 & 0 \\
  0 & T\mu & 0 & 0 & T\mu u_1  \\
  0 & 0 & T\mu & 0 & T\mu u_2 \\
  0 & 0 & 0 & \frac{4}{3}T\mu & \frac{4}{3}T \mu u_3 \\
  0 & T\mu u_1 & T\mu u_2 & \frac{4}{3}T\mu u_3 & T^2\kappa +
  \frac{1}{3}T\left(3\mu u_1^2 + 3 \mu u_2^2 + 4 \mu u_3^2\right)
\end{pmatrix},
\end{equation*}
\begin{equation*}
  \widehat{c}_{12} =
\begin{pmatrix}
  0 & 0 & 0 & 0 & 0 \\
  0 & 0 & -\frac{2}{3}T\mu & 0 & -\frac{2}{3}T\mu u_2  \\
  0 & T\mu & 0 & 0 & T\mu u_1 \\
  0 & 0 & 0 & 0 & 0 \\
  0 & T\mu u_2 & -\frac{2}{3}T\mu u_1 & 0 & \frac{1}{3}T\mu u_1 u_2
\end{pmatrix},
\quad
  \widehat{c}_{13} =
\begin{pmatrix}
  0 & 0 & 0 & 0 & 0 \\
  0 & 0 & 0 & -\frac{2}{3}T\mu & -\frac{2}{3}T\mu u_3  \\
  0 & 0 & 0 & 0 & 0 \\
  0 & T\mu & 0 & 0 & T\mu u_1 \\
  0 & T\mu u_3 & 0 & -\frac{2}{3}T\mu u_1 & \frac{1}{3}T\mu u_1 u_3
\end{pmatrix},
\end{equation*}
\begin{equation*}
  \widehat{c}_{23} =
\begin{pmatrix}
  0 & 0 & 0 & 0 & 0 \\
  0 & 0 & 0 & 0 & 0  \\
  0 & 0 & 0 & -\frac{2}{3}T\mu & -\frac{2}{3}T\mu u_3 \\
  0 & 0 & T\mu & 0 & T\mu u_2 \\
  0 & 0 & T\mu u_3 & -\frac{2}{3}T\mu u_2 & \frac{1}{3}T\mu u_2 u_3
\end{pmatrix},
\end{equation*}
where $$
\widehat{c}_{21} = \widehat{c}_{12}^{\top}, \quad 
\widehat{c}_{31} = \widehat{c}_{13}^{\top}, \quad
\widehat{c}_{32} = \widehat{c}_{23}^{\top}.
$$

\section{Counter example of non-linear wall boundary
conditions}\label{app:counter-example}

Carrying out the entropy stability analysis by using expression
\eqref{eq:SAT-no-slip-bc-wrong}, it can be shown that the inviscid
penalty in \eqref{eq:SAT-no-slip-bc-wrong}
is entropy conservative (see the proof of Theorem 
\ref{th:euler-flipping-sign}). Therefore, the remaining relation to analyze is
\begin{equation}\label{eq:estimate-no-slip-bc-viscous-wrong}
  \begin{aligned}
    \frac{d}{d t} \left\|{S}\right\|^2_{\pmatv} \:+\:  \wN^{\top}
    \pmat_{x_2 x_3} \, \gmat_{(1)} 
    \fM_1^{(V)} +  \mathbf{DT}  
    \leq &+ \wN^{\top}
    \pmat_{x_2 x_3} \, \gmat_{(1)} [L] \left(\wN-\gb^{(NS)}\right).
\end{aligned}
\end{equation}
To obtain a quadratic form in boundary terms we need to borrow from 
$\mathbf{DT}$:
\begin{equation}
\begin{aligned}
\mathbf{DT} & = \sum_{i=1}^3 \sum_{j=1}^3 \left(\dmat_{x_i} \wN\right)^{\top}
\pmatv [\chatmat_{ij}]  \left(\dmat_{x_j} \wN\right) \\
& =
\begin{pmatrix}
  \dmat_{x_1} \, \wN \\ \dmat_{x_2} \, \wN  \\ \dmat_{x_3} \, \wN
\end{pmatrix}^{\top}
\begin{pmatrix}
  \pmatv[\chatmat_{11}] & \pmatv[\chatmat_{12}] & \pmatv[\chatmat_{13}] \\
  \pmatv[\chatmat_{21}] & \pmatv[\chatmat_{22}] & \pmatv[\chatmat_{23}] \\
  \pmatv[\chatmat_{31}] & \pmatv[\chatmat_{32}] & \pmatv[\chatmat_{33}] \\
\end{pmatrix}
\begin{pmatrix}
  \dmat_{x_1} \, \wN \\ \dmat_{x_2} \, \wN  \\
  \dmat_{x_3} \, \wN
\end{pmatrix} \\
& =
\begin{pmatrix}
  \dmat_{x_1} \, \wN \\ \dmat_{x_2} \, \wN  \\ \dmat_{x_3} \, \wN
\end{pmatrix}^{\top}
\left(\pmat_{x_1}\right)_{(1)(1)} 
\begin{pmatrix}
  \pmat_{x_2 x_3}[\chatmat_{11}] & \pmat_{x_2 x_3}[\chatmat_{12}] & \pmat_{x_2 x_3}[\chatmat_{13}] \\
  \pmat_{x_2 x_3}[\chatmat_{21}] & \pmat_{x_2 x_3}[\chatmat_{22}] & \pmat_{x_2 x_3}[\chatmat_{23}] \\
  \pmat_{x_2 x_3}[\chatmat_{31}] & \pmat_{x_2 x_3}[\chatmat_{32}] & \pmat_{x_2 x_3}[\chatmat_{33}] \\
\end{pmatrix}
\begin{pmatrix}
  \dmat_{x_1} \, \wN \\ \dmat_{x_2} \, \wN  \\ \dmat_{x_3} \, \wN
\end{pmatrix}
+ \widetilde{\mathbf{DT}} \\
& =
\begin{pmatrix}
  \dmat_{x_1} \, \wN \\ \dmat_{x_2} \, \wN  \\ \dmat_{x_3} \, \wN
\end{pmatrix}^{\top}
\left(\pmat_{x_1}\right)_{(1)(1)}
\pmat^{\prime}
\begin{pmatrix}
  [\chatmat_{11}] & [\chatmat_{12}] & [\chatmat_{13}] \\
  [\chatmat_{21}] & [\chatmat_{22}] & [\chatmat_{23}] \\
  [\chatmat_{31}] & [\chatmat_{32}] & [\chatmat_{33}] \\
\end{pmatrix}
\begin{pmatrix}
  \dmat_{x_1} \, \wN \\ \dmat_{x_2} \, \wN  \\ \dmat_{x_3} \, \wN
\end{pmatrix}
+ \widetilde{\mathbf{DT}},
\end{aligned}
\end{equation}
where $$\pmat^{\prime} =
\textrm{diag}\left(\pmat_{x_2 x_3}, \pmat_{x_2 x_3},
\pmat_{x_2 x_3}\right),$$ and the scalar 
$\left(\pmat_{x_1}\right)_{(1)(1)} > 0$. 
Therefore, Equation 
\eqref{eq:estimate-no-slip-bc-viscous-wrong} may be written as
\begin{equation}\label{eq:entropy-estimate-no-slip-wrong}
\begin{aligned}
   \frac{d}{d t} & \left\|{S}\right\|^2_{\pmatv}  + \widetilde{\mathbf{DT}} \\
\leq 
& \:+\:  \frac{1}{2} \, \widetilde{\wN}^{\top} 
\widetilde{\pmat}^{\prime}
\begin{pmatrix}
  [L]          &  -[\chatmat_{11}] & -[\chatmat_{12}] & -[\chatmat_{13}] \\
  -[\chatmat_{11}] & -2\left(\pmat_{x_1}\right)_{(1)(1)} \, [\chatmat_{11}] &
  -2\left(\pmat_{x_1}\right)_{(1)(1)}
  [\chatmat_{12}] & -2\left(\pmat_{x_1}\right)_{(1)(1)} \, [\chatmat_{13}] \\
  -[\chatmat_{12}] & -2\left(\pmat_{x_1}\right)_{(1)(1)} \, [\chatmat_{12}] &
  -2\left(\pmat_{x_1}\right)_{(1)(1)}
  [\chatmat_{22}] & -2\left(\pmat_{x_1}\right)_{(1)(1)} \, [\chatmat_{23}] \\
  -[\chatmat_{13]} & -2\left(\pmat_{x_1}\right)_{(1)(1)} \, [\chatmat_{13}] &
  -2\left(\pmat_{x_1}\right)_{(1)(1)}
  [\chatmat_{23}] & -2\left(\pmat_{x_1}\right)_{(1)(1)} \, [\chatmat_{33}]
\end{pmatrix}
 \widetilde{\wN} \\
 & + \frac{1}{2} \wN^{\top} \pmat_{x_2 x_3} \, \gmat_{(1)}
[L] \, \wN - \wN^{\top}
\pmat_{x_2 x_3} \, \gmat_{(1)} [L] \, \gb^{(NS)},
\end{aligned}
\end{equation}
where $$\widetilde{\pmat}^{\prime} =
\textrm{diag}\left(\pmat_{x_2 x_3}, \pmat_{x_2 x_3},
\pmat_{x_2 x_3},\pmat_{x_2 x_3}\right),$$ and 
\begin{equation}\label{eq:wtilde}
  \widetilde{\wN}^{\top} = \left((\wN_{(0,x_2,x_3)}^{\top}, \mathbf{0}, \left(\dmat_{x_1} \,
    \wN\right)_{(0,x_2,x_3)}^{\top}, \left(\dmat_{x_2} \, \wN\right)_{(0,x_2,x_3)}^{\top},
  \left(\dmat_{x_3} \, \wN\right)_{(0,x_2,x_3)}^{\top}, \mathbf{0} \right).
\end{equation}
The bold zeros in \eqref{eq:wtilde} indicates that all the numerical states
(entropy variables and gradients of the entropy variables) of the nodes which do not lie 
on the plane $(0,x_2,x_3)$
 are set to zero.

To bound the time derivative of the entropy we must ensure that
each term in \eqref{eq:entropy-estimate-no-slip-wrong} is bounded.
The first contribution on the RHS is a quadratic term in 
$\widetilde{\wN}$ and dissipative if 
the large matrix  
\begin{equation}\label{eq:large-wrong}
 \frac{1}{2}
\begin{pmatrix}
  [L]          &  -[\chatmat_{11}] & -[\chatmat_{12}] & -[\chatmat_{13}] \\
  -[\chatmat_{11}] & -2\left(\pmat_{x_1}\right)_{(1)(1)} \, [\chatmat_{11}] &
  -2\left(\pmat_{x_1}\right)_{(1)(1)}
  [\chatmat_{12}] & -2\left(\pmat_{x_1}\right)_{(1)(1)} \, [\chatmat_{13}] \\
  -[\chatmat_{12}] & -2\left(\pmat_{x_1}\right)_{(1)(1)} \, [\chatmat_{12}] &
  -2\left(\pmat_{x_1}\right)_{(1)(1)}
  [\chatmat_{22}] & -2\left(\pmat_{x_1}\right)_{(1)(1)} \, [\chatmat_{23}] \\
  -[\chatmat_{13]} & -2\left(\pmat_{x_1}\right)_{(1)(1)} \, [\chatmat_{13}] &
  -2\left(\pmat_{x_1}\right)_{(1)(1)}
  [\chatmat_{23}] & -2\left(\pmat_{x_1}\right)_{(1)(1)} \, [\chatmat_{33}]
\end{pmatrix}
\end{equation}
is symmetric negative semi-definite. However, to ensure that we only need to construct  
the following $20 \times 20$ matrix,
\begin{equation}\label{eq:gamma-mat}
  \Gamma =\frac{1}{2}
\begin{pmatrix}
  L          & -\widehat{c}_{11} & -\widehat{c}_{12} & -\widehat{c}_{13} \\
  -\widehat{c}_{11} & -2\left(\pmat_{x_1}\right)_{(1)(1)} \, \widehat{c}_{11}
  & -2\left(\pmat_{x_1}\right)_{(1)(1)}
  \widehat{c}_{12} & -2\left(\pmat_{x_1}\right)_{(1)(1)} \, \widehat{c}_{13} \\
  -\widehat{c}_{12} & -2\left(\pmat_{x_1}\right)_{(1)(1)} \, \widehat{c}_{12} & 
  -2\left(\pmat_{x_1}\right)_{(1)(1)}
  \widehat{c}_{22} & -2\left(\pmat_{x_1}\right)_{(1)(1)} \, \widehat{c}_{23} \\
  -\widehat{c}_{13} & -2\left(\pmat_{x_1}\right)_{(1)(1)} \, \widehat{c}_{13} & 
  -2\left(\pmat_{x_1}\right)_{(1)(1)}
  \widehat{c}_{23} & -2\left(\pmat_{x_1}\right)_{(1)(1)} \, \widehat{c}_{33}
\end{pmatrix},
\end{equation}
so that it is symmetric negative semi-definite.
The matrices $\widehat{c}_{ij}, i,j=1,2,3$, in \eqref{eq:gamma-mat} are constructed using the
primitive variable at the usual boundary node.
The rows and columns of the matrix $\Gamma$ corresponding 
to the density components are all zero because the first component of the 
viscous fluxes is zero. Therefore, such 
rows and columns do not affect the negativity of 
\eqref{eq:gamma-mat}. The matrix $\Gamma$ can be expressed in block form as
\begin{equation}\label{eq:block-form-matrix-gamma}
  \Gamma = 
\begin{pmatrix}
  A & B \\
  B^{\top} & D
\end{pmatrix}.
\end{equation}
The condition on the five-by-five matrix $L$ that ensures the negative-definiteness of 
\eqref{eq:block-form-matrix-gamma} can obtained by requiring that the Schur
complement of $\Gamma$ is negative,
\begin{equation}\label{eq:schur-no-slip-bc}
 Schur = A - B \, D^{-1}\, B^{\top} = -\frac{\widehat{c}_{11}+2\, L \,
\left(\pmat_{x_1}\right)_{(1)(1)}}{4\left(\pmat_{x_1}\right)_{(1)(1)}} < 0.
\end{equation}
The inequality in \eqref{eq:schur-no-slip-bc} is a sufficient condition because 
the block matrix $D$ is already well behaved (i.e., it is already symmetric and 
positive semi-definite). Thus, the Schur complement 
\eqref{eq:schur-no-slip-bc} is smaller than or equal to zero if
\begin{equation}\label{eq:condition-M-no-slip-wrong}
  L \leq - \frac{\widehat{c}_{11}}{2\,\left(\pmat_{x_1}\right)_{(1)(1)}}.
\end{equation}

The last two terms on the RHS of expression
\eqref{eq:entropy-estimate-no-slip-wrong} are the remaining contributions to
bound. Such terms can be re-written in a quadratic form 
as
\begin{equation}\label{eq:wrong-IP}
  \begin{aligned}
    \frac{1}{2}\wN^{\top} \pmat_{x_2 x_3} \, \gmat_{(1)}
[L] \, \wN - \wN^{\top}
\pmat_{x_2 x_3} \, \gmat_{(1)} [L] \, \gb^{(NS)} = & + \frac{1}{2}
 \left(\wN - \gb^{(NS)}\right)^{\top} \pmat_{x_2 x_3}\, 
 \gmat_{(1)} \, 
 [L] \left(\wN - \gb^{(NS)}\right) \\
 & -\frac{1}{2}
\left(\gb^{(NS)}\right)^{\top} \pmat_{x_2
 x_3} \, \gmat_{(1)} \, [L] \, \gb^{(NS)}.
 \end{aligned}
 \end{equation}
 From inequality \eqref{eq:condition-M-no-slip-wrong}, we know that the matrix
 $[L]$ is 
 negative definite or negative semi-definite. Therefore, the first term, which is quadratic in 
 $\left(\wN - \gb^{(NS)}\right)$, is dissipative. However, the second contribution is a 
 positive term and cannot be bounded because it is not only a function of the imposed 
 boundary data $\gb^{(NS)}$. In fact, the element in the fifth row and fifth column of the
 matrix $L$ is non-zero and it is a function of the numerical solution through 
 relation \eqref{eq:condition-M-no-slip-wrong} (i.e., through the matrix 
 $\widehat{c}_{11}$ which is built from the numerical state at the boundary node).

\bibliographystyle{elsarticle-num} 
\bibliography{entropy-stable-solid-wall-bc}

\begin{thebibliography}{10}
\expandafter\ifx\csname url\endcsname\relax
  \def\url#1{\texttt{#1}}\fi
\expandafter\ifx\csname urlprefix\endcsname\relax\def\urlprefix{URL }\fi
\expandafter\ifx\csname href\endcsname\relax
  \def\href#1#2{#2} \def\path#1{#1}\fi

\bibitem{mathematical-exascale}
J.~Dongarra, J.~Hittinger, J.~Bell, L.~Chac\'{o}n, R.~Falgout, M.~Heroux,
  P.~Hovland, E.~Ng, C.~Webster, S.~Wild, Applied mathematics research for
  exascale computing, Tech. Rep. U.S. Department of Energy (2014).

\bibitem{FLD:FLD3767}
Z.~Wang, K.~Fidkowski, R.~Abgrall, F.~Bassi, D.~Caraeni, A.~Cary, H.~Deconinck,
  R.~Hartmann, K.~Hillewaert, H.~Huynh, N.~Kroll, G.~May, P.-O. Persson, B.~van
  Leer, M.~Visbal, High-order {CFD} methods: Current status and perspective,
  International Journal for Numerical Methods in Fluids 72~(8) (2013) 811--845.

\bibitem{les-dg-persson-aiaa}
P.-O. Persson, High-order {LES} simulations using implicit-explicit
  {R}unge--{K}utta schemes, in: 49th AIAA Aerospace Sciences Meeting including
  the New Horizons Forum and Aerospace Exposition, AIAA 2011-684, 2011.

\bibitem{2010:IHS:1805352.1805674}
M.~Parsani, G.~Ghorbaniasl, C.~Lacor, E.~Turkel, An implicit high-order
  spectral difference approach for large eddy simulation, Journal of
  Computational Physics 229~(14) (2010) 5373--5393.

\bibitem{Hartmann20114268}
R.~Hartmann, J.~Held, T.~Leicht, Adjoint-based error estimation and adaptive
  mesh refinement for the {RANS} and $\kappa-\omega$ turbulence model
  equations, Journal of Computational Physics 230~(11) (2011) 4268--4284.

\bibitem{yano-aiaa}
Y.~Masayuki, J.~M. Modisette, D.~L. Darmofal, The importance of mesh adaptation
  for higher-order discretizations of aerodynamic flows, in: 20th AIAA
  Computational Fluid Dynamics Conference, AIAA 2011-3852, 2011.

\bibitem{burgess-mavriplis-iccfd7-2012}
N.~K. Burgess, D.~J. Mavriplis, High-order discontinuous {G}alerkin methods for
  turbulent high-lift flows, in: Seventh International Conference on
  Computational Fluid Dynamics (ICCFD7), ICCFD7-4202, 2012.

\bibitem{FLD:FLD3740}
G.~Lodato, P.~Castonguay, A.~Jameson, Discrete filter operators for large-eddy
  simulation using high-order spectral difference methods, International
  Journal for Numerical Methods in Fluids 72~(2) (2013) 231--258.

\bibitem{Blaise2013416}
S.~Blaise, A.~St-Cyr, D.~Mavriplis, B.~Lockwood, Discontinuous {G}alerkin
  unsteady discrete adjoint method for real-time efficient tsunami simulations,
  Journal of Computational Physics 232~(1) (2013) 416 -- 430.

\bibitem{les-muffler}
M.~Parsani, M.~Bilka, C.~Lacor, Large eddy simulation of a muffler with the
  high-order spectral difference method, in: M.~Azaïez, H.~El~Fekih, J.~S.
  Hesthaven (Eds.), Spectral and high order methods for partial differential
  equations - ICOSAHOM 2012, Vol.~95 of Lecture Notes in Computational Science
  and Engineering, Springer International Publishing, 2014, pp. 337--347.

\bibitem{dg-rans-icosahom12-bassi}
F.~Bassi, L.~Botti, A.~Colombo, A.~Ghidoni, S.~Rebay, Implementation of an
  explicit algebraic {R}eynolds stress model in an implicit very high-order
  discontinuous {G}alerkin solver, in: M.~Azaïez, H.~El~Fekih, J.~S. Hesthaven
  (Eds.), Spectral and high order methods for partial differential equations -
  ICOSAHOM 2012, Vol.~95 of Lecture Notes in Computational Science and
  Engineering, Springer International Publishing, 2014, pp. 111--123.

\bibitem{SH14a}
S.~Schoenawa, R.~Hartmann, Discontinuous {G}alerkin discretization of the
  {R}eynolds-averaged {N}avier--{S}tokes equations with the shear-stress
  transport model, Journal of Computational Physics 262 (2014) 194--216.

\bibitem{Wang201413}
L.~Wang, W.~K. Anderson, J.~E. Taylor, S.~Kapadia, Discontinuous {G}alerkin and
  {P}etrov {G}alerkin methods for compressible viscous flows, Computers \&
  Fluids 100 (2014) 13--29.

\bibitem{FLD:FLD3943}
A.~D. Beck, T.~Bolemann, D.~Flad, H.~Frank, G.~J. Gassner, F.~Hindenlang, C.-D.
  Munz, High-order discontinuous {G}alerkin spectral element methods for
  transitional and turbulent flow simulations, International Journal for
  Numerical Methods in Fluids 76~(8) (2014) 522--548.

\bibitem{gmdd-7-4251-2014}
C.~Chen, X.~Li, X.~Shen, F.~Xiao, A high-order conservative collocation scheme
  and its application to global shallow water equations, Geoscientific Model
  Development Discussions 7~(4) (2014) 4251--4290.

\bibitem{Hesthaven:2007:NDG:1557392}
J.~S. Hesthaven, T.~Warburton, Nodal discontinuous Galerkin methods:
  Algorithms, analysis, and applications, 1st Edition, Springer Publishing
  Company, Incorporated, 2007.

\bibitem{Zhu2013200}
J.~Zhu, X.~Zhong, C.-W. Shu, J.~Qiu, {R}unge--{K}utta discontinuous {G}alerkin
  method using a new type of {WENO} limiters on unstructured meshes, Journal of
  Computational Physics 248 (2013) 200--220.

\bibitem{hughes1986}
T.~Hughes, L.~Franca, M.~Mallet, A new finite element formulation for
  computational fluid dynamics: Symmetric forms of the compressible {Euler} and
  {Navier--Stokes} equations and the second law of thermodynamics, Computer
  Methods in Applied Mechanics and Engineering 54~(2) (1986) 223--234.

\bibitem{Tadmor2003}
E.~Tadmor, Entropy stability theory for difference approximations of nonlinear
  conservation laws and related time-dependent problems, Acta Numerica 12
  (2003) 451--512.

\bibitem{Tadmor1987}
E.~Tadmor, The numerical viscosity of entropy stable schemes for systems of
  conservation laws. {I}, Mathematics of Computation 49~(179) (1987) 91--103.

\bibitem{FisherCarpenter2013NASA}
T.~C. Fisher, M.~H. Carpenter, High-order entropy stable finite difference
  schemes for nonlinear conservation laws: Finite domains, Tech. Rep. NASA TM
  217971 (2013).

\bibitem{FisherCarpenter2013JCPb}
T.~C. Fisher, M.~H. Carpenter, High-order entropy stable finite difference
  schemes for nonlinear conservation laws: Finite domains, Journal of
  Computational Physics 252 (2013) 518--557.

\bibitem{CarpenterFisher2013AIAA}
M.~H. Carpenter, T.~C. Fisher, High-order entropy stable formulations for
  computational fluid dynamics, in: 21st AIAA Computational Fluid Dynamics
  Conference, AIAA 2013-2868, 2013.

\bibitem{svard-zamp-2012}
M.~Sv\"{a}rd, Third-order accurate entropy-stable schemes for
  initial-boundary-value conservation laws, Zeitschrift f\"{u}r {A}ngewandte
  {M}athematik und {P}hysik 63~(4) (2012) 599--623.

\bibitem{cng2009}
M.~H. Carpenter, J.~Nordstr\"{o}m, D.~Gottlieb, Revisiting and extending
  interface penalties for multi-domain summation-by-parts operators, Journal of
  Scientific Computing 45~(1-3) (2010) 118--150.

\bibitem{Fisher2012dissertation}
T.~C. Fisher, High-order {$L^{2}$} stable multi-domain finite difference method
  for compressible flows, Ph.D. thesis, Purdue University (2012).

\bibitem{doi:10.1137/S0036142998345256}
P.~LeFloch, C.~Rohde, High-order schemes, entropy inequalities, and
  nonclassical shocks, SIAM Journal on Numerical Analysis 37~(6) (2000)
  2023--2060.

\bibitem{svard-no-penetration-bc-JSC}
M.~Sv{\"a}rd, H.~{\"O}zcan, Entropy-stable schemes for the {E}uler equations
  with far-field and wall boundary conditions, Journal of Scientific Computing
  58~(1) (2014) 61--89.

\bibitem{gustafsson-bc-book-1995}
B.~Gustafsson, H.-O. Kreiss, J.~Oliger, Time dependent problems and difference
  methods, John Wiley \& Sons, 1995.

\bibitem{nordstrom-well-posedness-sinum-2005}
J.~Nordstr{\"o}m, M.~Sv{\"a}rd, Well-posed boundary conditions for the
  {N}avier--{S}tokes equations, SIAM Journal on Numerical Analysis 43~(3)
  (2005) 1231--1255.

\bibitem{svard-no-slip-jcp-2008}
M.~Sv{\"a}rd, J.~Nordstr{\"o}m, A stable high-order finite difference scheme
  for the compressible {N}avier--{S}tokes equations: {N}o-slip wall boundary
  conditions, Journal of Computational Physiscs 227~(10) (2008) 4805--4824.

\bibitem{berg2011}
J.~Berg, J.~Nordstr{\"{o}}m, Stable {Robin} solid wall boundary conditions for
  the {Navier--Stokes} equations, Journal of Computational Physics 230~(19)
  (2011) 7519--7532.

\bibitem{dg-survey-shu-2013}
C.-W. Shu, A brief survey on discontinuous {G}alerkin methods in computational
  fluid dynamics, Advances in Mechanics 43~(6) (2013) 541--553.

\bibitem{FR-jameson-2011}
P.~Vincent, P.~Castonguay, A.~Jameson, A new class of high-order energy stable
  flux reconstruction schemes, Journal of Scientific Computing 47~(1) (2011)
  50--72.

\bibitem{richtmyer-ivp-book-1967}
R.~D. Richtmyer, K.~W. Morton, Difference methods for initial-value problems,
  2nd Edition, Interscience, 1967.

\bibitem{kreiss-stability-actan-1998}
H.-O. Kreiss, J.~Lorenz, Stability for time-dependent differential equations,
  Acta Numerica 7 (1998) 203--285.

\bibitem{lax-siam-conference-1972}
P.~D. Lax, Hyperbolic systems of conservation laws and the mathematical theory
  of shock waves, SIAM - Technology \& Engineering, 1973.

\bibitem{smoller-entropy-book-1994}
J.~Smoller, Shock waves and reaction-diffusion equations, 2nd Edition, Springer
  Science + Business Media New York, 1994.

\bibitem{harten1983}
A.~Harten, On the symmetric form of systems of conservation laws with entropy,
  Journal of Computational Physiscs 49~(1) (1983) 151--164.

\bibitem{tadmor1984}
E.~Tadmor, Skew self-adjoint form of systems of conservation laws, Journal of
  Mathematical Analysis and Applications 103~(2) (1984) 428--442.

\bibitem{dafermos-book-2010}
C.~M. Dafermos, Hyperbolic conservation laws in continuum physics,
  Springer-Verlag, Berlin, 2010.

\bibitem{Godunov1961}
S.~K. Godunov, An interesting class of quasilinear systems, Dokl. Akad. Nauk
  SSSR 139~(3) (1961) 521--523.

\bibitem{kreiss-bc-book-1989}
H.-O. Kreiss, J.~Lorenz, Initial boundary value problems and the
  {N}avier--{S}tokes equations, Academic Press, New York, 1989.

\bibitem{bejan-entropy-generation-book}
A.~Bejan, Entropy generation minimization, 1st Edition, CRC, Boca Raton, New
  York, 1996.

\bibitem{ss-no-slip-wall-bc-parsani-nasa-tm-2014}
M.~Parsani, M.~H. Carpenter, E.~J. Nielsen, Entropy stable wall boundary
  conditions for the compressible {N}avier--{S}tokes equations, Tech. Rep. NASA
  TM 218282 (2014).

\bibitem{CarpenterFisherSSDC2013NASA}
M.~H. Carpenter, T.~C. Fisher, E.~J. Nielsen, S.~H. Frankel, Entropy stable
  spectral collocation schemes for the {Navier}--{Stokes} equations:
  Discontinuous interfaces, Tech. Rep. NASA TM 218039 (2013).

\bibitem{Roe2009}
F.~Ismail, P.~L. Roe, Affordable, entropy-consistent {Euler} flux functions
  {II}: Entropy production at shocks, Journal of Computational Physics 228~(15)
  (2009) 5410--5436.

\bibitem{carpenter-SSDC-siam-2014}
M.~H. Carpenter, T.~C. Fisher, E.~J. Nielsen, S.~H. Frankel, Entropy stable
  spectral collocation schemes for the {N}avier--{S}tokes equations:
  Discontinuous interfaces, SIAM Journal on Scientific Computing$\,$ (2014), In
  Press.

\bibitem{Merriam1989}
M.~L. Merriam, An entropy-based approach to nonlinear stability, Tech. Rep.
  NASA TM 101086 (1989).

\bibitem{square-cylinder-norberg-1993}
C.~Norberg, Flow around rectangular cylinders: Pressure forces and wake
  frequencies, Journal of Wind Engineering \& Industrial Aerodynamics 49~(1-3)
  (1993) 187--196.

\bibitem{sohankar-square-1999}
A.~Sohankar, C.~Norberg, L.~Davison, Numerical simulation of flow past a square
  cylinder, in: 3rd ASME/JSME Joint Fluids Engineering Conference,
  FEDSM99-7172, 1999.

\bibitem{supersonic-square-cylinder-nakagawa-1986}
T.~Nakagawa, Mach number effects on vortex shedding of a square cylinder and
  thick symmetrical airfoil arranged in tandems, Proceeding of the Royal
  Society London A 407~(1833) (1986) 283--297.

\bibitem{supersonic-square-cylinder-nakagawa-1987}
T.~Nakagawa, Vortex shedding behind a square cylinder in transonic flows,
  Journal of Fluid Mechanics 178 (1987) 303--323.

\bibitem{square-cylinder-birch-2003}
T.~Birch, S.~A. Prince, G.~M. Simpson, An experimental and computational study
  of the aerodynamics of a square cross-section body at supersonic speeds,
  Tech. Rep. RTO-MP-069(I), DERA (2003).

\end{thebibliography}

\end{document}

\endinput